\documentclass[letterpaper,11pt]{article}

\usepackage[T1]{fontenc}

\usepackage{amsmath, amssymb, amsthm}

\usepackage{authblk}

\usepackage{graphicx}
\graphicspath{{figures/}}
\usepackage{pgfplots}

\usepackage{caption}
\usepackage{enumitem}
\usepackage{mdframed}

\usepackage{subcaption}
\usepackage{soul}
\usepackage{lineno}

\usepackage{todonotes}
\usepackage{nicefrac}

\usepackage{thm-restate}

\usepackage{mathrsfs}

\usepackage[procnumbered, linesnumbered,algoruled,noend]{algorithm2e}

\usepackage[hypertexnames=false,hidelinks]{hyperref}
\hypersetup{
	colorlinks   = true, %
	urlcolor     = blue, %
	linkcolor    = blue, %
	citecolor   = red %
}

\usepackage[capitalize]{cleveref}

\usepackage{dsfont}

\usepackage{xspace}

\usepackage{mathtools}
\usepackage{bbm}
\usepackage{amssymb}

\newenvironment{claimproof}{\begin{proof}}{\end{proof}}

\crefname{claim}{Claim}{Claims}
\Crefname{claim}{Claim}{Claims}

\crefname{algorithm}{Algorithm}{Algorithms}
\Crefname{algorithm}{Algorithm}{Algorithms}
\crefname{ALC@unique}{Line}{Lines}
\Crefname{ALC@unique}{Line}{Lines}

\newcommand{\LP}{\textnormal{LP}}
\newlist{properties}{enumerate}{1}
\setlist[properties,1]{label={\textbf{P\arabic*}}, align=left, left=0pt, itemindent=*}

\crefname{propertiesi}{property}{properties}

\newcommand{\cI}{\mathcal{I}}

\newcommand{\floor}[1]{ {\left\lfloor #1 \right\rfloor}}
\newcommand{\ceil}[1]{ {\left\lceil #1 \right\rceil}}
\newcommand{\abs}[1]{{\left| #1\right|}}
\newcommand{\weight}{\bar{w}}
\newcommand{\capacity}{\bar{W}}

\DeclareMathOperator*{\argmax}{\textnormal{argmax}}
\DeclareMathOperator*{\argmin}{\textnormal{argmin}}

\newcommand{\semi}{\mathpunct{:}\!}

\newcommand{\eps}{\varepsilon}

\newcommand{\simpleddim}{\textnormal{\texttt{dVK-Simple}}}
\newcommand{\mitmddim}{\textnormal{\texttt{dKnapsack-MitM}}}

\newcommand{\linear}{\textnormal{\texttt{Linear2D}}}

\newcommand{\embed}{\textnormal{\texttt{embed}}}
\newcommand{\range}{\textnormal{\texttt{range}}}
\newcommand{\best}{\textnormal{\texttt{best}}}
\newcommand{\set}{\textnormal{\texttt{set}}}
\newcommand{\vitem}{\textnormal{\texttt{item}}}
\newcommand{\vright}{\textnormal{\texttt{right}}}
\newcommand{\vleft}{\textnormal{\texttt{left}}}

\newcommand{\lowprofit}{\textnormal{\texttt{RepresentativeSolutions}}}

\newcommand{\tS}{\tilde{S}}

\newcommand{\cC}{\mathcal{C}}

\newcommand{\OPT}{\textnormal{\texttt{OPT}}}

\newcommand{\Oh}{\mathcal{O}}

\newcommand{\comment}[1]{}

\newtheorem{theorem}{Theorem}[section]
\newtheorem{lemma}[theorem]{Lemma}
\newtheorem{definition}[theorem]{Definition}
\newtheorem{claim}[theorem]{Claim}

\newtheorem{corollary}[theorem]{Corollary}

\newcommand{\cL}{\mathcal{L}}

\newcommand{\tp}{{\tilde{p}}}
\newcommand{\hp}{\widehat{p}}

\newcommand{\poly}{\textnormal{\textrm{poly}}}

\newcommand{\vecx}{\ensuremath{\boldsymbol{x}}}
\newcommand{\vecy}{\ensuremath{\boldsymbol{y}}}
\newcommand{\vecz}{\ensuremath{\boldsymbol{z}}}

\usepackage[margin=1in]{geometry}
\usepackage{amsmath,amssymb}

\usepackage{pgfplots}
\usetikzlibrary{arrows.meta}
\newcommand{\tOh}{\widetilde O}

\title{
	 Fine-Grained Complexity of Approximating Vector Knapsack: \\ A Faster Algorithm and Bicriteria Optimality in 2D}
 \author{
     Karl Bringmann\footnote{ETH Zurich, Switzerland.},\; 
     Ariel Kulik\footnote{Ben-Gurion University of the Negev, Beer-Sheva, Israel.},\; and 
     Karol W\k{e}grzycki\footnote{Max Planck Institute for Informatics, Saarbrücken, Germany.
	 Supported by 
    the Deutsche Forschungsgemeinschaft (DFG, German Research Foundation) grant number
    559177164.} 
 }

\date{}

\begin{document}

\maketitle

\thispagestyle{empty}
\begin{abstract}
	
We revisit the $d$-dimensional Vector Knapsack problem (abbreviated as $d$-Knapsack): Given a $d$-dimensional capacity vector and a set of items, each with a $d$-dimensional weight vector and a profit, the goal is to select a set of items that maximizes the total profit without exceeding the capacity in any dimension.
For any $d \ge 2$, the best known approximation scheme for $d$-Knapsack runs in time $O(n^{\lceil d/\eps \rceil - d})$ [Caprara, Kellerer, Pferschy, Pisinger '00]. 

We improve this running time to $\tOh_{d,\eps}(n^{\lceil \frac{d-1}{2\eps} \rceil})$ for any $\eps \le 1/4$, and more precisely to $\tOh_{d,\eps,\rho}(n^{\lceil \frac{d-1}{2\eps} - \frac 12 + \rho \rceil}+n^{d})$ for any $\eps \in (0,1)$ and every parameter $\rho \in (0,1)$. 
We achieve this speedup by designing the first meet-in-the-middle algorithm for $d$-Knapsack. This requires replacing the LP solver used in prior algorithms by a highly efficient dynamic programming algorithm to generate representative solutions, building on an LP-based structural argument.
This is the first improvement in over 25 years, and the first result that improves the exponent by a constant factor, as all prior algorithms had an exponent of $d/\eps \pm O(d)$.  

We complement this by a fine-grained lower bound based on $k$-SUM showing that 2-Knapsack requires time $n^{\lceil \frac{1}{2\eps} - \frac 12 \rceil - o(1)}$. This establishes the optimal exponent of 2-Knapsack as $\frac 1{2\eps} \pm O(1)$, which is precise up to an \emph{additive} $O(1)$, while previously it was only known up to a \emph{factor} $O(1)$. To the best of our knowledge, this is the first result that determines the optimal exponent more precisely than up to a factor $O(1)$, for \emph{any} problem that admits a PTAS but no EPTAS.

We also design a $(2/3 - o(1))$-approximation for 2-Knapsack in near-linear time. Combining our two algorithms yields a $(1-\eps-o(1))$-approximation for 2-Knapsack with running time $\tOh_\eps(n^{\lceil \frac{1}{2\eps} - \frac 12 \rceil})$. This nearly matches our lower bound, because for a slightly better approximation ratio a slightly better running time is impossible -- so our algorithm is \emph{bicriteria-optimal}. In particular, we establish the optimal exponent of 2-Knapsack as $\lceil \frac{1}{2\eps} - \frac 12 \pm o(1) \rceil \pm o(1)$.

\end{abstract}

\clearpage
\setcounter{page}{1}

\section{Introduction}
\label{sec:intro}

We revisit the $d$-dimensional Vector Knapsack problem, which for short we call \emph{$d$-Knapsack}. The input consists of a $d$-dimensional vector of capacities $\capacity \in \mathbb{R}_{\ge 0}^d$ and a set $I$ of $n$ items. Each item~$i \in I$ has an associated profit $p(i) \in \mathbb{R}_{\ge 0}$ and a $d$-dimensional vector of weights $\weight(i) \in \mathbb{R}_{\ge 0}^d$. The goal is to maximize the total profit $p(S) \coloneqq \sum_{i \in S} p(i)$ over all subsets $S \subseteq I$ such that the total weight respects the capacities, i.e., $\weight_t(S) \coloneqq \sum_{i \in S} \weight_t(i) \le \capacity_t$ for all $t \in \{1,\ldots,d\}$.
The $d$-Knapsack problem is the same as \emph{packing integer programs with $\{0,1\}$-variables}; the name $d$-Knapsack stresses that the number of non-trivial constraints $d$ is considered to be small.

Note that the standard Knapsack problem is the same as 1-Knapsack. The $d$-Knapsack problem is a natural generalization that incorporates multiple constraints, which frequently arises in applications that optimize several types of cost measures (see, e.g.,~\cite{ZhouHWJTZW25,DuLZGLSZXZ24,PsychasG18} for some recent examples). For this reason, $d$-Knapsack has been studied intensely, with initial work dating back to the 50s~\cite{LorieS1955,MarkowitzM1957,WeingartnerN67}. We refer to the survey~\cite{Freville04} for an overview of the $d$-Knapsack literature on exact algorithms (using techniques such as dynamic programming, branch and bound, LP bounds, Lagrangian relaxation, and preprocessing) and on heuristics (employing greedy, dual greedy, local search, LP-based search, simulated annealing, tabu search, etc.).

\paragraph{Approximation Schemes}
$d$-Knapsack admits a polynomial-time approximation scheme (PTAS), i.e., a $(1-\eps)$-approximation algorithm for any $\eps > 0$. The first PTAS, designed by Chandra, Hirschberg, and Wong~\cite{ChandraHW76}, achieved running time $O(n^{\lceil d/\eps \rceil})$. Strictly speaking, their algorithm solves an unbounded problem variant where each item can be selected multiple times, but it easily extends to the standard variant of $d$-Knapsack~\cite{OguzM80}. The running time was improved to $\tilde O(n^{\lceil d/\eps \rceil - d + 5})$ by Frieze and Clarke~\cite{FriezeC84}, and further to $O(n^{\lceil d/\eps \rceil - d})$ by Caprara, Kellerer, Pferschy, and Pisinger~\cite{CapraraKPP00}. These algorithms all follow a common approach, as we will discuss in \Cref{sec:techoverview}.

\paragraph{Lower Bounds}
Korte and Schrader~\cite{KorteS81} and Magazine and Chern~\cite{MagazineC84} showed that 2-Knapsack has no fully polynomial-time approximation scheme (FPTAS), i.e., no approximation scheme that runs in time $\textup{poly}(n,1/\eps)$, assuming $\textsf{P} \ne \textsf{NP}$. This follows from a reduction from Subset Sum.

Kulik and Shachnai~\cite{KulikS10} found a reduction from $k$-SUM, and used it to show that 2-Knapsack has no efficient polynomial-time approximation scheme (EPTAS), i.e., no approximation scheme that runs in time $f(1/\eps) n^{O(1)}$, assuming $\textsf{FPT} \ne \textsf{W}[1]$. They also used the same reduction to show that 2-Knapsack has no approximation scheme that runs in time $f(1/\eps) n^{o(1/\sqrt{\eps})}$, assuming a different complexity-theoretic conjecture.
Later, Jansen, Land, and Land~\cite{JansenLL16} used Kulik and Shachnai's reduction to show that 2-Knapsack requires time $n^{\Omega(1/\eps)}$, assuming the Exponential Time Hypothesis (ETH). 

For $d$-Knapsack for $d$ larger than 2, Doron-Arad, Kulik, and Manurangsi~\cite{DoronAradKM26} recently proved that any approximation scheme requires time $n^{\tilde \Omega(d/\eps)}$, where the $\tilde \Omega$ hides logarithmic factors in $d/\eps$, assuming ETH. Thus, up to logarithmic factors the exponent $d/\eps$ is optimal.

\subsection{Our Results}

We present improved algorithms and lower bounds for $d$-Knapsack. In particular, our results nearly determine the optimal running time exponent of 2-Knapsack.
See \Cref{fig:exponentplots} for illustrations.

\paragraph{Improved Algorithm for \boldmath$d$-Knapsack}
As our main result, we design an improved approximation scheme for $d$-Knapsack.

\begin{restatable}{theorem}{thmMainAlgo} \label{thm:dalgo}
	Given any $d \ge 2$ and $\eps,\rho \in (0,1)$, $d$-Knapsack has a $(1-\eps)$-approximation algorithm with running time $\big(n^{\lceil \frac{d-1}{2\eps} - \frac 12 + \rho \rceil}+n^{d}\big) \cdot \big( \frac{d \log n}{\rho\cdot \eps }\big)^{O(d)}$.  
\end{restatable}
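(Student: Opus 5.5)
We follow the classical scheme of Caprara et al.: guess the set $H$ of the $k$ most profitable items of an optimal solution $S^*$, discard all other items whose profit exceeds $\min_{i\in H} p(i)$, and complete $H$ greedily using the residual instance. The standard LP argument is the key structural fact. An optimal basic solution of the residual $d$-dimensional LP has at most $d$ fractional variables. Rounding these down loses at most $d$ items, each of profit at most $p(S^*)/(k+1)$. Hence guessing $|H| = k \approx d/\eps - d$ items suffices, at a cost of $n^{k}$ guesses. Our goal is to roughly halve the exponent, from $d/\eps$ down to $\frac{d-1}{2\eps}$. We do this with meet-in-the-middle over the guess.

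\textbf{Step 1: sharpen the structural step.} We want the LP completion to lose only about $(d-1)$ items rather than $d$, which is where the $d-1$ in the exponent comes from. We would argue this as follows. Sort the items by profit. Take the fractional solution restricted to items with profit below the guessed threshold. Among the at most $d$ fractional items, one can be handled for free. One way is to guess the single largest residual item. Another is to observe that the profit constraint together with $d$ capacity constraints allows rounding up one item. Together with the threshold argument, this means we need $k$ items of large profit with $(d-1)/(k+1) \lesssim \eps$, i.e. $k \approx \frac{d-1}{\eps}$.

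\textbf{Step 2: replace the LP with a DP.} The solution will be $H \cup G$, where $G$ is a residual set. Guess a split $H = H_1 \cup H_2$ with $|H_1|,|H_2| \approx k/2$. To combine halves efficiently we cannot afford an LP per pair. Instead, for each residual capacity profile we need a small family of \emph{representative solutions} for the completion. We round weights to multiples of $\eps'\capacity_t/\poly$ with $\eps' = \rho\eps/d$, or geometrically, which gives $(\log n/(\rho\eps))^{O(d)}$ capacity classes. For each threshold (the profit of the least profitable guessed item) we precompute via dynamic programming, over items sorted by the LP greedy order, the best fractional-free completion for each rounded residual capacity vector. The LP-based structure guarantees that an optimal completion is a ``prefix-like'' set determined by $O(d)$ breakpoints. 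These can be enumerated in $n^{d}$ time, which is the source of the additive $n^d$ term.

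\textbf{Step 3: meet in the middle.} Enumerate all $\approx n^{k/2}$ half-sets $H_1$ and $H_2$. Store each by its rounded weight vector (with $O(\log n/\eps')^{d}$ buckets) and by its minimum profit. For each $H_1$ and each rounded target capacity split, we look up the best compatible $H_2$ via a $d$-dimensional dominance/range-max structure, which costs polylog$^{O(d)}$ per query. We then add the precomputed representative completion. Rounding loses at most a $(1+\rho\eps/d)$ factor per coordinate. This is absorbed by slightly shrinking the capacities, which also lets the ceiling in the exponent become $\lceil \frac{d-1}{2\eps}-\frac12+\rho\rceil$. The $-\frac12$ comes from balancing $k+1$ across the two halves when $k+1$ is odd, and $\rho$ absorbs the rounding slack.

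The main obstacle is Step 2: ensuring the completion is representable independently of $H_1,H_2$ except through a rounded capacity vector and a profit threshold. The threshold couples the two halves. We would handle it by letting the item of minimum profit in $H$ be guessed explicitly and fixed in $H_1$, and by orienting the DP so that both halves and the completion only see items above or below that threshold.
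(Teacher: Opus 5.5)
There is a genuine gap: you never give a valid reason why only $d-1$ items are lost, and the rounding in Steps~2--3 would actually force you back to losing $d$. In Step~1, none of the three mechanisms works. Guessing one more item is the same as increasing $k$ by one, which leaves the ratio of lost to guessed items unchanged. Keeping one fractional item is the Caprara et al.\ averaging, which only shifts $k$ by an additive $O(d)$. Rounding a fractional item up violates a capacity.

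The missing idea is that slack is needed in only $d-1$ dimensions if the $d$-th dimension is handled \emph{exactly}. Write the LP for the low-profit part $L$ with only the $d-1$ constraints $\sum_i x_i\weight_t(i)\le(1-\delta)\weight_t(L)$, $t\in[d-1]$. A basic solution then has at most $d-1$ fractional entries, which yields $Q\subseteq L$ that loses at most $d-1$ items and has multiplicative slack relative to $\weight(L)$. Then use a DP whose \emph{value} is the minimum $\weight_d$ over rounded (profit, $\weight_1,\ldots,\weight_{d-1}$) states, so dimension $d$ is never rounded.

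Your Steps~2--3 instead round all $d$ coordinates (rounded residual capacity vectors, bucketed weight vectors of $H_1,H_2$) and absorb the error ``by slightly shrinking the capacities''. That slack does not exist. $\OPT$ may be tight in every dimension, and the high-profit items cannot be dropped. Any slack you can manufacture comes only from $L$ and is proportional to $\weight(L)$, which may be negligible next to $\weight(H)$. Obtaining such slack in dimension $d$ as well costs the $d$-th item.

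The paper avoids rounding $H$ or the capacities altogether. Each representative $v$ must satisfy $\weight(v)\le\weight(L)$ coordinatewise, so it is valid for every $H$ at once. The halves are matched by an exact orthogonal range-max query on the point $(\weight(S_1),\max(S_1))$, where the index coordinate enforces disjointness.

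Getting $\weight(v)\le\weight(L)$ from $(1+\eps')$-relative errors requires an additive scale in each rounded dimension. Dimension~$1$ gets it by ordering items by non-increasing $\weight_1$ and building suffix lists. Dimensions $2,\ldots,d-1$ get it by guessing the items $i_2,\ldots,i_{d-1}$ of maximum weight in $Q$, alongside the profit threshold $i$. This guessing is what yields lists of size $n^{d-1}\cdot\mathrm{polylog}$ and the $n^d$ term. Your ``prefix with $O(d)$ breakpoints'' structure is not what does this, and it is unjustified: the LP order depends on the dual vector, which changes with the residual capacities. The list size also matters because it enters the exponent through the $n^{\ell_2+d-1}$ cost of the query loop.

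Finally, the $-\frac12$ is not a parity effect. The averaging with $p(v)\ge\frac{1-\delta}{d}p(L)$ (which needs singleton representatives) allows $q=\lceil\frac{d-1}{\eps}-d+\rho\rceil$. Balancing $n^{\ell_1}$ against $n^{\ell_2+d-1}$ with $\ell_1=\lfloor\frac{q+d-1}{2}\rfloor$ then gives $\lceil\frac{d-1}{2\eps}-\frac12+\rho\rceil$.
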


For constant $d, \eps$, and $\rho := 1/2$, our running time is $\tOh(n^{\lceil (d-1)/(2\eps) \rceil} + n^d)$. If furthermore $\eps \le 1/4$ then this simplifies to $\tOh(n^{\lceil (d-1)/(2\eps) \rceil})$.
Our result improves the previous time $O(n^{\lceil d/\eps \rceil - d})$~\cite{CapraraKPP00} for every $\eps \in (0,1/2)$.\footnote{Indeed, for every $\eps, \rho \in (0,1/2)$ we have $\frac d \eps - d > d$ and thus $\lceil \frac d \eps \rceil - d > d$, and we have $\frac d \eps - d > \frac{d-1}{2\eps} - \frac 12 + \rho + 1$ and thus $\lceil \frac d \eps \rceil - d > \lceil \frac{d-1}{2\eps} - \frac 12 + \rho \rceil$.}
This is the first algorithmic improvement for $d$-Knapsack in over 25 years.
Since all prior approximation schemes had exponent $d/\eps \pm O(d)$, for small~$\eps$ our result is \emph{the first improvement of the exponent by a constant factor}. 

We provide an overview of our algorithm in \Cref{sec:techoverview} and the details in \Cref{sec:ddim_improved}.

\paragraph{Near-Linear-Time Approximation for 2-Knapsack}
The running time of \Cref{thm:dalgo} is never below $\Theta(n^d)$. For 2-Knapsack we bridge this gap by providing a near-$\frac 23$-approximation algorithm with near-linear running time, see \Cref{sec:linear2d} for details.

\begin{restatable}{theorem}{thm2DLinear} \label{thm:linear}
	Given any $\delta \in \big(0,\frac 23\big)$, 2-Knapsack has a $\big(\frac 23 - \delta\big)$-approximation algorithm with running time $n \cdot \big( \frac{\log n}{\delta }\big)^{O(1)}$.  
\end{restatable}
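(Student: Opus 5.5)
We aim for a near-linear-time $(\frac23-\delta)$-approximation for 2-Knapsack by combining guessing of heavy items with a structural LP argument. First normalize $\capacity = (1,1)$ and discard items that do not fit alone. Let $\OPT$ denote an optimal solution. We guess the profit $P^*=p(\OPT)$ up to a factor $(1+\delta)$, which costs $O(\log n/\delta)$ guesses after a crude constant-factor estimate. We then call an item \emph{profitable} if $p(i)\ge \delta P^*$. Any solution contains at most $1/\delta$ such items, but enumerating them would be too slow. Instead we use a coarser classification. An item is \emph{big} if $p(i) \ge P^*/3$ (roughly). $\OPT$ contains at most two big items, and a $\frac23$ bound naturally arises from splitting $\OPT$ into at most three pieces.

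The core structural claim is the following. Consider the LP relaxation of 2-Knapsack restricted to some item set. A basic optimal solution has at most two fractional variables, since there are two constraints. Hence the integral part $S_0$ satisfies $p(S_0)\ge \LP - p(f_1)-p(f_2)$, where $f_1,f_2$ are the fractional items. Moreover $\{f_1\}$ and $\{f_2\}$ are each feasible on their own. So the best of $S_0$, $\{f_1\}$, $\{f_2\}$ already gives $\LP/3$. To reach $\frac23$ we guess the up to two highest-profit items of $\OPT$, or rather a coarse version of them. We then solve the LP on the residual items of profit at most $\min$ of the guessed profits, with the residual capacity. The residual fractional items then have profit at most $P^*/3$, and the combined solution $G\cup S_0$ loses at most two items, each of profit at most a third of $\OPT$'s top-item profits. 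A short case analysis (depending on whether $\OPT$ has $0$, $1$, or $2$ items of profit above $P^*/3$) shows that the better of $G\cup S_0$, $G\cup S_0\cup\{f_j\}$-style repairs, or the single best item yields $(\frac23-O(\delta))P^*$.

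For the running time we must avoid $n^2$ guesses of the pair $G$. We round profits to powers of $(1+\delta)$ and round weights into $O(\log n/\delta)$ geometric classes per dimension, discarding negligible weights. We then argue that within a (profit class, weight class pair) bucket it suffices to try the items that are minimal in the dominance order, i.e.\ the Pareto front, of which we keep $\mathrm{poly}(\log n/\delta)$ representatives per bucket. Thus the guesses for $G$ range over $(\log n/\delta)^{O(1)}$ options. For each guess we need the 2-constraint fractional LP on the remaining items. We solve this in near-linear time via its Lagrangian dual: for a multiplier $\lambda$, the problem reduces to a 1-dimensional fractional knapsack with weights $\weight_1+\lambda\weight_2$, and we binary search over $\lambda$. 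Since the guesses only remove $O(1)$ items and change the capacities and the profit threshold, we preprocess sorted orders once. The fractional knapsack for each $(\lambda,\text{guess})$ can then be answered by prefix sums in polylog time, except for the threshold filter, which a tree over profit-sorted items handles.

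The main obstacle is making the LP solve fast and exact enough. Precision issues in the binary search over $\lambda$ must be absorbed into the $\delta$ loss, and we must ensure the resulting solution has at most two fractional items. I would first try Megiddo-style parametric search over the $O(n^2)$ breakpoints of $\lambda$, or simply accept a $(1-\delta)$-approximate LP value by discretizing $\lambda$ into $(\log n/\delta)^{O(1)}$ values. In the latter case I would round the near-optimal primal to a solution with at most two fractional items by a greedy exchange along the two critical ratios.
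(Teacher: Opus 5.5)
\textbf{The approximation argument fails.} An LP that keeps both weight constraints loses up to two fractional items, and guessing only two items cannot pay for that loss. Take the profile where $\OPT$ consists of five items of profit $P^*/5$ each. There are no big items, so your case split does not help.
\begin{itemize}
\item With $G=\OPT[\semi 2]$, each fractional item of the residual basic solution may have profit up to $p(\OPT[2])=P^*/5$. So you only get $p(G\cup S_0)\ge P^*-\frac25P^*=\frac35P^*$.
\item The repair $G\cup S_0\cup\{f_j\}$ is in general infeasible, since both constraints are tight at the vertex.
\item The valid repair $G\cup\{f_j\}$ only guarantees $p(G)+\frac13\cdot\frac35P^*=\frac35P^*$.
\item The best pair and the best single item guarantee $\frac25P^*$ and $\frac15P^*$. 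Guessing fewer items is worse.
\end{itemize}
This is exactly the Caprara et al.\ bound $1-\frac{d}{q+1+d}$ with $d=q=2$. Within that analysis, $q=3$ guessed items are needed to reach $\frac23$, so your promised ``short case analysis'' cannot exist from the facts you use.

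The missing idea is to lose only $d-1=1$ item. The paper applies the LP argument to the first weight constraint only (the Slack-Generating Lemma), so the basic solution has a single fractional item. For a set $S$ it yields $Q\subseteq S$ with $\weight_1(Q)\le(1-\delta)\weight_1(S)$, $\weight_2(Q)\le\weight_2(S)$, and $p(Q)\ge(1-\delta)p(S)-\max_{i\in S}p(i)$. It then replaces the LP solver by the DP-based Approximation Lemma, which exploits slack in profit and in $\weight_1$ while minimizing $\weight_2$ exactly. The cases then close as follows:
\begin{itemize}
\item If $p(\OPT[1])\le P^*/3$, no guess is needed.
\item If $p(\OPT[\semi 2])\ge(\frac23-\delta)P^*$, the best pair suffices.
\item Otherwise, keep $\OPT[1]$ and lose only $p(\OPT[2])<(\frac13-\delta)P^*$, which gives $\frac23$.
\end{itemize}

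\textbf{The running-time reduction also fails as stated.} Without slack in the capacities, you cannot replace a guessed item by a bucket representative whose weights are up to a $(1+\delta)$ factor larger. The residual capacity $\capacity-\weight(G)$ may be far smaller than $\delta\,\weight(G)$, so the swap can destroy feasibility. The Pareto front of a bucket in two weight dimensions can also have $\Theta(n)$ items, so keeping polylogarithmically many representatives is not justified. In addition, the order by $p/(\weight_1+\lambda\weight_2)$ changes with $\lambda$, so a single preprocessing does not serve all multipliers.

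The paper never guesses the heavy item. It tries $O(1/\delta)$ profit thresholds $\tp_j$, one of which separates $p(\OPT[2])$ from $p(\OPT[1])$. For each threshold it runs the Approximation Lemma on $I_j=\{i : p(i)\le\tp_j\}$, producing $n\cdot(\log n/\delta)^{O(1)}$ virtual items $v$. For each $v$ it queries an orthogonal range searching structure for the most profitable single item that fits into $\capacity-\weight(v)$ and has index below $\min I_j$.
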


\Cref{thm:dalgo,thm:linear} together improve the previous time $O(n^{\lceil 2/\eps \rceil - 2})$~\cite{CapraraKPP00} for 2-Knapsack for every $\eps \in (0,2/3)$. For $\eps \ge 2/3$, Caprara et al.~\cite{CapraraKPP00} already achieved the optimal running time $O(n)$.

\paragraph{Lower Bound for 2-Knapsack}
We observe that the known reduction from the $k$-SUM problem to 2-Knapsack~\cite{KulikS10} implies the following lower bound, see \Cref{sec:techoverview}.

\begin{theorem} \label{thm:2lower}
  For any constant $\eps \in (0,1)$ and $\delta > 0$, 2-Knapsack has no $(1-\eps)$-approximation algorithm that runs in time $O(n^{\lceil 1/(2\eps) - 1/2 \rceil - \delta})$, assuming the $k$-SUM Hypothesis.
\end{theorem}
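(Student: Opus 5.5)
We reduce from $k$-SUM, adapting the Kulik--Shachnai reduction~\cite{KulikS10}. Recall the $k$-SUM Hypothesis: for every constant $k \ge 2$ and $\delta>0$, there is no algorithm that decides in time $O(n^{\lceil k/2\rceil-\delta})$ whether $k$ given integers from a list of $n$ (equivalently, one from each of $k$ lists $A_1,\dots,A_k$) sum to $0$ (or to a target $T$). Given $\eps$, we choose $k$ as the largest integer with $\lceil k/2\rceil \le \lceil 1/(2\eps)-1/2\rceil$ such that a $(1-\eps)$-approximation still distinguishes solutions using exactly $k$ items from those using fewer. Concretely, we want $k$ with $\frac{k-1}{k} < 1-\eps$, i.e.\ $k < 1/\eps$. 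The largest integer $k<1/\eps$ is $k=\lceil 1/\eps\rceil-1$. Then $\lceil k/2\rceil = \lceil (\lceil 1/\eps\rceil -1)/2\rceil$, which we check equals $\lceil 1/(2\eps)-1/2\rceil$ by a routine case analysis on the parity of $\lceil 1/\eps\rceil$.

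\textbf{Construction.} Given lists $A_1,\dots,A_k$ of nonnegative integers (shifted so all are in $[0,M]$) and target $T$, we create for each $j$ and each $a\in A_j$ an item with profit $1$. Its weights are $\weight_1 = a + j\cdot$ large-offset-encoding plus a big constant $B$, and $\weight_2 = (M-a) + B$ with a complementary encoding. The capacities are $\capacity_1 = T + kB$ and $\capacity_2 = kM - T + kB$. Choosing $B > kM$ ensures that at most $k$ items fit. If exactly $k$ items $S$ are chosen, then $\weight_1(S)\le \capacity_1$ and $\weight_2(S)\le\capacity_2$ add up to $\sum_{a\in S} M = kM$, so both hold with equality, forcing $\sum_{a \in S} a = T$. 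To enforce one item per list, we append digit-blocks in a higher-order position, in the standard way: each item of list $j$ contributes $e_j$-like padding to $\weight_1$ and its complement to $\weight_2$, with capacities summing exactly. Equality then forces each list to be hit exactly once. Alternatively, we use a standard reduction to the variant where the $k$ numbers may come from one list but must be distinct, which this setting handles automatically.

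\textbf{Gap argument.} If the $k$-SUM instance is a YES instance, then $\OPT = k$. Otherwise $\OPT \le k-1$. A $(1-\eps)$-approximation returns value at least $(1-\eps)k > k-1$ in the YES case, since $k<1/\eps$. Hence it distinguishes the two cases. The instance has $N = kn$ items and is built in linear time, so an $O(N^{\lceil 1/(2\eps)-1/2\rceil-\delta})$ algorithm would solve $k$-SUM in time $O(n^{\lceil k/2\rceil-\delta})$, contradicting the hypothesis.

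\textbf{Main obstacle.} The main obstacle is making the weight encoding rigorous, and it has two parts. First, the two constraints together must force the exact sum, which the complement trick $a$ versus $M-a$ achieves. Second, the choice of $B$ must cap the cardinality at $k$ without permitting $k$-item sets that violate the sum. The remaining check is the arithmetic identity relating $\lceil(\lceil 1/\eps\rceil-1)/2\rceil$ to $\lceil 1/(2\eps)-1/2\rceil$, including the boundary case where $1/\eps$ is an integer.
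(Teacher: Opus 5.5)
Your proposal is correct and follows essentially the same route as the paper: the Kulik--Shachnai reduction with unit profits and complementary weights ($a$ versus $M-a$, which is the paper's $1\pm a/M$), capacities that sum exactly so that a $k$-item set is feasible iff it is a $k$-SUM solution, the choice $k=\lceil 1/\eps\rceil-1$ giving the $k$ versus $k-1$ gap, and the same ceiling identity. Your extra machinery is unnecessary. The paper uses the single-list formulation with distinct elements, which item selection enforces automatically, so no list encoding is needed. And since every item has the same value of $\weight_1+\weight_2$ while $\capacity_1+\capacity_2$ is exactly $k$ times that value, sets of more than $k$ items are infeasible without any offset $B$.
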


Note that the same lower bound also applies to $d$-Knapsack for any $d>2$, by a trivial embedding of 2-Knapsack.
See~\Cref{fig:exponentplots} for an illustration of our results in comparison with prior work.

\paragraph{Optimal Running Time Exponent of 2-Knapsack} 
Our results together \emph{determine the optimal running time exponent} of a $(1-\eps)$-approximation algorithm for 2-Knapsack as $1/(2\eps) \pm O(1)$, assuming the $k$-SUM Hypothesis.\footnote{Indeed,  \Cref{thm:dalgo} computes a $(1-\eps)$-approximation of 2-Knapsack in time $\tOh(n^{\lceil 1/(2\eps) - 1/2 + \rho \rceil} + n^2) \le O(n^{1/(2\eps) + 1.5})$, for $\rho := 1/2$ and any $\eps \in (0,1)$. Moreover, \Cref{thm:2lower} rules out time $O(n^{\lceil 1/(2\eps) - 1/2 \rceil - \delta})$ for any $\delta > 0$, so in particular for $\delta := 0.01$ it rules out time $O(n^{1/(2\eps) - 0.51})$.} 
More precisely, the optimal exponent is $\lceil 1/(2\eps) - 1/2 \rceil$, up to an arbitrarily small error and an arbitrarily small change in $\eps$, assuming the $k$-SUM Hypothesis, see \Cref{cor:bicriteriaoptimal}. This is the \emph{first characterization of the optimal running time exponent} for any problem that admits a PTAS but no EPTAS; see \Cref{sec:discussion} for a discussion.

\begin{figure}[t]
\centering

\begin{subfigure}[t]{0.48\textwidth}
\centering

\begin{tikzpicture}
\begin{axis}[
  width=\linewidth,
  height=6.3cm,
  xmin=0, xmax=1,
  ymin=1, ymax=7,
  % axis lines without built-in arrow heads
  axis x line*=bottom,
  axis y line*=left,
  axis line style={line width=1pt},
  tick align=outside,
  tick style={line width=1pt},
  major tick length=2.5pt,
  xlabel={},
  ylabel={},
  ytick={1,...,7},
  xtick={0,1/7,1/5,1/3,2/5,1/2,2/3,1},
  xticklabels={
    $0$,
    $\frac{1}{7}$,
    $\frac{1}{5}$,
    $\frac{1}{3}$,
    $\frac{2}{5}$,
    $\frac{1}{2}$,
    $\frac{2}{3}$,
    $1$
  },
  xticklabel style={font=\normalsize, rotate=0, anchor=north},
  yticklabel style={font=\normalsize},
  ymajorgrids=true,
  grid style={gray!20},
  clip=false,
  after end axis/.code={
    % x-axis arrow whose tip ends exactly at x=1
    \draw[-{Stealth[length=3pt,width=4pt]}, line width=1pt]
      (rel axis cs:0.96,0) -- (rel axis cs:1,0);
    % y-axis arrow extending beyond the top
    \draw[-{Stealth[length=3pt,width=4pt]}, line width=1pt]
      (rel axis cs:0,1) -- (rel axis cs:0,1.08);
    % manual axis labels
    \node[font=\normalsize, anchor=west]
      at (rel axis cs:1.02,0.) {$\varepsilon$};
    \node[font=\normalsize, anchor=south]
      at (rel axis cs:0.11,1.085) {time exponent};
    \node[font=\normalsize, anchor=south]
      at (rel axis cs:0.93,1.085) {$d=2$};
  },
]

% ------------------------------------------------------------
% Red region:
% below ceil(1/(2 eps) - 1/2) - 0.001, clipped to y<=7
% ------------------------------------------------------------
\addplot[
  draw=none,
  fill=red!30,
  fill opacity=0.8
] coordinates {
  (0,7)
  ({1/15},7)     ({1/15},6.999)
  ({1/13},6.999) ({1/13},5.999)
  ({1/11},5.999) ({1/11},4.999)
  ({1/9},4.999)  ({1/9},3.999)
  ({1/7},3.999)  ({1/7},2.999)
  ({1/5},2.999)  ({1/5},1.999)
  ({1/3},1.999)  ({1/3},1)
  (1,1)
  (0,1)
} \closedcycle;

% ------------------------------------------------------------
% Green staircase:
% ceil(1/(2 eps) - 1/2 + 0.001)
% ------------------------------------------------------------
\def\delta{0.001}

% solid part above y=2
\foreach \k in {3,...,7}{
  \pgfmathsetmacro{\xl}{1/(2*\k+1-2*\delta)}
  \pgfmathsetmacro{\xr}{1/(2*\k-1-2*\delta)}
  \addplot[green!60!black, very thick] coordinates {
    (\xl,\k) (\xr,\k)
  };
}

\foreach \k in {3,...,7}{
  \pgfmathsetmacro{\xx}{1/(2*\k-1-2*\delta)}
  \pgfmathtruncatemacro{\kmone}{\k-1}
  \addplot[green!60!black, very thick] coordinates {
    (\xx,\kmone) (\xx,\k)
  };
}

% solid piece at y=2 up to x=1/3
\pgfmathsetmacro{\xltwo}{1/(5-2*\delta)}
\addplot[green!60!black, very thick] coordinates {
  (\xltwo,2) ({1/3},2)
};

% dotted part from (1/3,2) via (1/3,1) to (1,1)
\addplot[green!60!black, very thick, dashed] coordinates {
  ({1/3},2) ({1/3},1) (1,1)
};

% ------------------------------------------------------------
% Blue staircase:
% ceil(2/eps) - 2
% ------------------------------------------------------------
\foreach \b in {1,...,7}{
  \pgfmathsetmacro{\xl}{2/(\b+2)}
  \pgfmathsetmacro{\xr}{min(1,2/(\b+1))}
  \addplot[blue, very thick, dotted] coordinates {
    (\xl,\b) (\xr,\b)
  };
}

\foreach \b in {2,...,7}{
  \pgfmathsetmacro{\xx}{2/(\b+1)}
  \pgfmathtruncatemacro{\bmone}{\b-1}
  \addplot[blue, very thick, dotted] coordinates {
    (\xx,\bmone) (\xx,\b)
  };
}

% ------------------------------------------------------------
% Annotations
% ------------------------------------------------------------
\node[font=\normalsize] at (axis cs:{1/6},{3/2})
  {Thm.\ \ref{thm:2lower}};

\node[font=\normalsize, align=center] at (axis cs:{1/4},3.05)
  {Thm.\\ \ref{thm:dalgo}};

\node[font=\normalsize] at (axis cs:{0.48},{1.5})
  {Thm.\ \ref{thm:linear}};

\node[font=\normalsize] at (axis cs:0.58,4.43)
  {Caprara et al.\ \cite{CapraraKPP00}};

\end{axis}
\end{tikzpicture}

%\caption{First figure.}
%\label{fig:first}
\end{subfigure}
\hfill
\begin{subfigure}[t]{0.48\textwidth}
\centering

\begin{tikzpicture}
\begin{axis}[
  width=\linewidth,
  height=6.3cm,
  xmin=0, xmax=1,
  ymin=1, ymax=7,
  axis x line*=bottom,
  axis y line*=left,
  axis line style={line width=1pt},
  tick style={line width=1pt},
  major tick length=2.5pt,
  xlabel={},
  ylabel={},
ytick={1,...,7},
tick align=outside,
tick style={line width=1pt},
major tick length=2.5pt,
xtick={
  0,
  2/9,
  2/7,
  1/3,
  1/2,
  3/5,
  3/4,
  1
},
xticklabels={
  $0$,
  $\frac{2}{9}$,
  $\frac{2}{7}$,
  $\frac{1}{3}$,
  $\frac{1}{2}$,
  $\frac{3}{5}$,
  $\frac{3}{4}$,
  $1$
},
xticklabel style={
  font=\normalsize,
  rotate=0,
  anchor=north
},
yticklabel style={
  font=\normalsize
},
  ymajorgrids=true,
  grid style={gray!20},
  clip=false,
  after end axis/.code={
    % x-axis arrow, ending at x=1
    \draw[-{Stealth[length=3pt,width=4pt]}, line width=1pt]
      (rel axis cs:0.96,0) -- (rel axis cs:1,0);
    % y-axis arrow
    \draw[-{Stealth[length=3pt,width=4pt]}, line width=1pt]
      (rel axis cs:0,1) -- (rel axis cs:0,1.08);
    % manual axis labels
    \node[font=\normalsize, anchor=west]
      at (rel axis cs:1.02,0.) {$\varepsilon$};
    \node[font=\normalsize, anchor=south]
      at (rel axis cs:0.11,1.085) {time exponent};
    \node[font=\normalsize, anchor=south]
      at (rel axis cs:0.93,1.085) {$d=3$};
  },
]

% ------------------------------------------------------------
% Red region:
% below ceil(1/(2 eps) - 1/2) - 0.001, clipped to y<=7
% ------------------------------------------------------------
\addplot[
  draw=none,
  fill=red!30,
  fill opacity=0.8
] coordinates {
  (0,7)
  ({1/15},7)     ({1/15},6.999)
  ({1/13},6.999) ({1/13},5.999)
  ({1/11},5.999) ({1/11},4.999)
  ({1/9},4.999)  ({1/9},3.999)
  ({1/7},3.999)  ({1/7},2.999)
  ({1/5},2.999)  ({1/5},1.999)
  ({1/3},1.999)  ({1/3},1)
  (1,1)
  (0,1)
} \closedcycle;

% ------------------------------------------------------------
% Green staircase:
% max( ceil(1/eps - 1/2 + 0.001), 3 )
% ------------------------------------------------------------
\def\delta{0.001}

% Horizontal pieces for y = 4,...,7
\foreach \k in {4,...,7}{
  \pgfmathsetmacro{\xl}{1/(\k + 0.5 - \delta)}
  \pgfmathsetmacro{\xr}{1/(\k - 0.5 - \delta)}
  \addplot[green!60!black, very thick] coordinates {
    (\xl,\k) (\xr,\k)
  };
}

% Vertical pieces between y = 4,...,7
\foreach \k in {5,...,7}{
  \pgfmathsetmacro{\xx}{1/(\k - 0.5 - \delta)}
  \pgfmathtruncatemacro{\kmone}{\k-1}
  \addplot[green!60!black, very thick] coordinates {
    (\xx,\kmone) (\xx,\k)
  };
}

% Bottom horizontal piece at y = 3, extending to eps = 1
\pgfmathsetmacro{\xthree}{1/(3.5 - \delta)}
\addplot[green!60!black, very thick] coordinates {
  (\xthree,3) (1,3)
};

% Jump from 4 to 3
\addplot[green!60!black, very thick] coordinates {
  (\xthree,3) (\xthree,4)
};

% ------------------------------------------------------------
% Blue staircase:
% ceil(3/eps) - 3
% ------------------------------------------------------------
\foreach \b in {1,...,7}{
  \pgfmathsetmacro{\xl}{3/(\b+3)}
  \pgfmathsetmacro{\xr}{min(1,3/(\b+2))}
  \addplot[blue, very thick, dotted] coordinates {
    (\xl,\b) (\xr,\b)
  };
}

\foreach \b in {2,...,7}{
  \pgfmathsetmacro{\xx}{3/(\b+2)}
  \pgfmathtruncatemacro{\bmone}{\b-1}
  \addplot[blue, very thick, dotted] coordinates {
    (\xx,\bmone) (\xx,\b)
  };
}

% ------------------------------------------------------------
% Annotations
% ------------------------------------------------------------
\node[font=\normalsize] at (axis cs:{1/6},{3/2})
  {Thm.\ \ref{thm:2lower}};

\node[font=\normalsize, align=center] at (axis cs:{1/3},4.05)
  {Thm.\\ \ref{thm:dalgo}};

\node[font=\normalsize] at (axis cs:0.68,4.43)
  {Caprara et al.\ \cite{CapraraKPP00}};

\end{axis}
\end{tikzpicture}

%\caption{Second figure.}
%\label{fig:second}
\end{subfigure}

\caption{Illustration of our results, showing the running time exponent over $\eps$. The region ruled out by our fine-grained lower bound \Cref{thm:2lower} is filled red. The previously fastest algorithm by Caprara et al.~\cite{CapraraKPP00} is drawn dotted blue; other prior algorithms~\cite{ChandraHW76,OguzM80,FriezeC84} are not shown as they are dominated by~\cite{CapraraKPP00}. Our results are drawn green. The left side shows $d=2$, where we draw the better of  \Cref{thm:dalgo} (thick green) and \Cref{thm:linear} (dashed green). We nearly determine the optimal running time exponent for each approximation ratio, as is visible from the green curve being the boundary of the red region. The right side shows $d=3$, where we improve the exponent for every $\eps \in (0,1/2)$.}
\label{fig:exponentplots}
\end{figure}

\begin{corollary} \label{cor:bicriteriaoptimal}
	Let $\eps,\delta \in (0,1)$ be any constants. 
	$2$-Knapsack admits a $(1-\eps-\delta)$-approximation algorithm with running time $\tOh(n^{\lceil 1/(2\eps) - 1/2 \rceil})$, but no $(1-\eps)$-approximation algorithm with running time $O(n^{\lceil 1/(2\eps) - 1/2 \rceil - \delta})$, assuming the $k$-SUM Hypothesis.
\end{corollary}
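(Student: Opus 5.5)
The lower bound half is exactly \Cref{thm:2lower} applied with the same $\eps$ and $\delta$, so nothing further is needed there. All the work is in the upper bound: we must exhibit a $(1-\eps-\delta)$-approximation running in time $\tOh(n^{k})$ with $k := \lceil 1/(2\eps) - 1/2 \rceil$. The plan is to run \Cref{thm:dalgo} with $d=2$ and a slightly larger error parameter $\eps' := \eps+\delta$, choosing $\rho$ small enough that the ceiling does not increase. The regime where $k$ is small, so that the additive $n^{d}=n^2$ term of \Cref{thm:dalgo} dominates, will be covered by \Cref{thm:linear}.

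The first step is to bound the exponent. For $d=2$, \Cref{thm:dalgo} with error $\eps'$ has exponent $\lceil \frac{1}{2\eps'} - \frac12 + \rho\rceil$. Since $\eps' > \eps$, we have $\frac{1}{2\eps'} < \frac{1}{2\eps}$ strictly, and the gap
\[
\gamma := \frac{1}{2\eps} - \frac{1}{2\eps'} = \frac{\delta}{2\eps(\eps+\delta)}
\]
is a positive constant. Taking $\rho := \min(\gamma, 1/2)$ gives $\frac{1}{2\eps'} - \frac12 + \rho \le \frac{1}{2\eps} - \frac12$, so the exponent is at most $k$. If $\eps' \ge 1$, the approximation guarantee is trivial (the empty set suffices), so we may assume $\eps'<1$.

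Because $d,\eps,\delta,\rho$ are constants, the factor $(\frac{d\log n}{\rho\eps'})^{O(d)}$ is polylogarithmic. The running time is therefore $\tOh(n^{k} + n^{2})$, which is $\tOh(n^k)$ whenever $k\ge 2$.

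The remaining case is $k\le 1$, that is, $\frac{1}{2\eps} - \frac12 \le 1$, or equivalently $\eps \ge 1/3$. Then $1-\eps-\delta \le \frac23 - \delta$. If $\delta < 2/3$, \Cref{thm:linear} yields a $(\frac23-\delta)$-approximation, hence a $(1-\eps-\delta)$-approximation, in time $n\cdot(\log n/\delta)^{O(1)} = \tOh(n) \subseteq \tOh(n^{k})$, using $k\ge 1$ since $\eps<1$ gives $\frac{1}{2\eps}-\frac12>0$. If $\delta\ge 2/3$, then $1-\eps-\delta<1/3$ and again \Cref{thm:linear} with, say, $\delta'=1/3$ suffices.

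The main point needing care is the choice of $\rho$ so that the ceiling does not jump. It works precisely because the statement gives away an additive $\delta$ in the approximation ratio, which buys a strictly positive slack $\gamma$ in the exponent; the rest is bookkeeping of constants.
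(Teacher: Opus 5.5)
Your proposal is correct and follows essentially the same route as the paper. You run \Cref{thm:dalgo} with a slightly enlarged error parameter, choose $\rho$ so the ceiling does not increase, fall back on \Cref{thm:linear} when $\eps \ge 1/3$, and take the lower bound directly from \Cref{thm:2lower}. The differences are minor: you compute the exponent gap $\gamma$ exactly instead of using the paper's $\delta' = \min\{\delta,\eps^2\}$ with the bound $1/(1+x)\le 1-x/2$. You also explicitly handle $\delta \ge 2/3$, where \Cref{thm:linear} does not directly apply, which the paper passes over.
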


We interpret the last statement as follows: Our new approximation algorithm has a \emph{bicriteria-optimal} running time, as for slightly smaller $\eps$ a slightly smaller exponent is impossible.

Note that our results in particular nearly determine the best approximation for 2-Knapsack that can be computed in near-linear time: a $(\frac 23 - \delta)$-approximation can be computed in time $\tOh(n)$ by \Cref{thm:linear}, and no $(\frac 23 + \delta)$-approximation can be computed in time $O(n^{2-\delta})$ by \Cref{thm:2lower}\footnote{Indeed, for $1-\eps = \frac 23 + \delta$ we have $\eps < \frac 13$ and thus $1/(2\eps) - 1/2 > 1$, so $\lceil 1/(2\eps) - 1/2 \rceil \ge 2$.}.

See Appendix \ref{sec:proof_cor} for the proof of \Cref{cor:bicriteriaoptimal}.

\subsection{Discussion of Fine-Grained Complexity of Approximation Schemes}
\label{sec:discussion}

Prior work showed that 2-Knapsack admits an approximation scheme with running time $n^{O(1/\eps)}$~\cite{ChandraHW76,OguzM80,FriezeC84,CapraraKPP00}, and any approximation scheme requires time $n^{\Omega(1/\eps)}$ assuming ETH~\cite{JansenLL16}. This is an example of the well-established theory of \emph{ETH-tight approximation schemes}, which yields matching upper and lower bounds of the form $n^{O(f(\eps))}$ and $n^{\Omega(f(\eps))}$ for numerous problems with various different exponents $f(\eps)$. Such results determine the optimal running time exponent up to constant factors --- however, constant factors in the exponent matter! For example, ETH-tight bounds are too coarse to answer the natural question for the best approximation that can be computed in (near\nobreakdash-)linear time. This raises the challenge to \emph{precisely determine the optimal running time exponent of approximation schemes}. 

For problems admitting an FPTAS, techniques to prove very tight lower bounds have recently been developed in fine-grained complexity theory. For example, it is now known that the standard Knapsack problem has an approximation scheme running in time $\tOh(n + 1/\eps^2)$~\cite{ChenLMZ24,Mao24}, but none running in time $\tOh(n + 1/\eps^{2-\delta})$ for any $\delta > 0$, assuming the MinConv Hypothesis~\cite{CyganMWW19,KunnemannPS17}, which provides a satisfying answer. However, this progress has so far been restricted to problems admitting an FPTAS. For problems with a PTAS but no EPTAS this appears to be more challenging, and we are not aware of any prior results that determine the optimal running time exponent more precisely than up to a constant factor (e.g., ETH-tight).

Towards the goal of determining the optimal time complexity of PTASs, we next identify three levels of precision that form a natural hierarchy. To this end, think of a problem with a known ETH-tight approximation scheme, that is, for some function $f$ an approximation scheme with running time $n^{O(f(\eps))}$ is known and any approximation scheme requires time $n^{\Omega(f(\eps))}$ assuming ETH.

\paragraph{Goal 1: Constant-Factor Optimality}
Determine the smallest constant factor $c$ such that the running time exponent can be improved to $\approx c \cdot f(\eps)$. 
More precisely, determine a constant $c$ such that for every $\delta > 0$ there is a $\gamma = \gamma(\delta)$ and an approximation scheme with running time $O(n^{(c+\delta) \cdot f(\eps) + \gamma})$, but for no $\delta,\gamma > 0$ there is an approximation scheme with running time $O(n^{(c-\delta) \cdot f(\eps) + \gamma})$.

In this paper we achieve Goal 1 for 2-Knapsack, with $f(\eps) = 1/\eps$ and $c = 1/2$, see \Cref{cor:bicriteriaoptimal}.

\paragraph{Goal 2: Bicriteria Optimality}
Determine a function $g(\eps)$ such that a $(1-\eps-\delta)$-approximation can be computed in time $O(n^{g(\eps) + \delta})$, but no $(1-\eps+\delta)$-approximation can be computed in time $O(n^{g(\eps) - \delta})$, for any $\delta > 0$.
That is, a slightly better running time is impossible for slightly better approximation ratio. We call this \emph{bicriteria-optimal} because of the change in two parameters: the running time exponent and the approximation ratio.

In this paper we achieve Goal 2 for 2-Knapsack, with $g(\eps) = \lceil 1/(2\eps) - 1/2 \rceil$, see \Cref{cor:bicriteriaoptimal}.

\paragraph{Goal 3: Fine-grained Optimality}
Determine a function $h(\eps)$ such that a $(1-\eps)$-approximation can be computed in time $\tOh(n^{h(\eps)})$, but not in time $O(n^{h(\eps) - \delta})$ for any $\delta > 0$. From the perspective of fine-grained complexity theory this is the ultimate goal, completely determining the optimal running time up to lower order factors.

We leave achieving Goal 3 for 2-Knapsack as an open problem.

\bigskip

Before this work, to the best of our knowledge even Goal 1 was not achieved for any problem that admits a PTAS but no EPTAS. We provide the first example of Goal 1 precision and Goal 2 precision, by almost determining the optimal running time exponent of 2-Knapsack.

We leave the open problem to achieve some of these goals for $d$-Knapsack for $d > 2$, and we pose the challenge to achieve these goals for other problems.

\subsection{Technical Overview}
\label{sec:techoverview}

\paragraph{Lower Bound for 2-Knapsack}
Here we describe the complete, simple proof of \Cref{thm:2lower}, demonstrating the close connection of 2-Knapsack and $k$-SUM.
In the $k$-SUM problem, given a set of integers $A \subseteq \mathbb{Z}$ of size $n$, the task is to decide whether there exist distinct $a_1,\ldots,a_k \in A$ with $a_1+\ldots+a_k = 0$. While the naive running time is $O(n^k)$, with meet in the middle the problem can be solved in time $\tOh(n^{\lceil k/2 \rceil})$. The $k$-SUM Hypothesis~\cite{AbboudL13} postulates that this running time is essentially optimal, i.e., $k$-SUM cannot be solved in time $O(n^{\lceil k/2 \rceil - \delta})$ for any $\delta > 0$. This hypothesis has been used to prove various conditional lower bounds in fine-grained complexity, see, e.g.,~\cite{Erickson99,AbboudL13,AbboudBBK17,AbboudBBK20,
Kunnemann22,GokajK25}.

Kulik and Shachnai~\cite{KulikS10} found a reduction from $k$-SUM to 2-Knapsack, which we describe here slightly reformulated. Given a $k$-SUM instance $A$, let $M$ be a sufficiently large number, say $M := 1 + \max_{a \in A} |a|$. For each $a \in A$ we construct an item with profit $p(a) := 1$ and weights $\weight_1(a) := 1 + a/M$ and $\weight_2(a) := 1 - a/M$. We set both weight capacities to $\capacity_1 := \capacity_2 := k$. This finishes the construction. Observe that if there exists a $k$-SUM solution $a_1,\ldots,a_k \in A$, then the corresponding set of items $J = \{a_1,\ldots,a_k\}$ has weights $w_1(J) = k + (a_1+\ldots+a_k)/M = k$ and $w_2(J) = k - (a_1+\ldots+a_k)/M = k$, so it is a feasible solution. Since it has total profit $p(J) = k$, the optimal profit is at least $k$. Conversely, if no $k$-SUM solution exists, then for every set of items $J = \{a_1,\ldots,a_k\}$ we have $a_1+\ldots+a_k \ne 0$ and thus either $w_1(J) > k$ or $w_2(J) > k$. Hence, no set consisting of $k$ items is feasible, and thus also no set consisting of at least $k$ items is feasible, so the optimal profit is at most $k-1$. 
Note that for any $\eps > 0$ satisfying $(1-\eps) k > k-1$, a $(1-\eps)$-approximation algorithm for 2-Knapsack distinguishes these two cases, and thus requires at least as much time as $k$-SUM. Rearranging $(1-\eps) k > (k-1)$ to $k < 1/\eps$ yields that for $k := \lceil 1/\eps \rceil - 1$ a $(1-\eps)$-approximation for 2-Knapsack requires at least as much time as $k$-SUM. 

From here Kulik and Shachnai~\cite{KulikS10} concluded that 2-Knapsack does not have an EPTAS, since $k$-SUM cannot be solved in FPT time $f(k) n^{O(1)}$~\cite{DowneyF95}, assuming FPT$\ne$W[1]. 
From the same reduction, Jansen, Land, and Land~\cite{JansenLL16} concluded that 2-Knapsack has no PTAS in time $n^{o(1/\eps)}$, since $k$-SUM is not in time $n^{o(k)}$~\cite{PatrascuW10}, assuming ETH. 

We instead invoke the $k$-SUM Hypothesis, which states that $k$-SUM is not in time $O(n^{\lceil k/2 \rceil - \delta})$ for any $\delta > 0$. Substituting $k = \lceil 1/\eps \rceil - 1$, we obtain that 2-Knapsack has no PTAS in time $O(n^{\lceil (\lceil 1/\eps \rceil - 1)/2 \rceil - \delta})$ for any $\delta > 0$. Using the identities $\lceil x \rceil - m = \lceil x - m \rceil$ and $\lceil \frac{\lceil x \rceil}m \rceil = \lceil \frac xm \rceil$ for any $x \in \mathbb{R}$ and $m \in \mathbb{N}$, we can simplify the exponent to $\lceil (\lceil 1/\eps \rceil - 1)/2 \rceil = \lceil 1/(2\eps) - 1/2 \rceil$. Consequently, 2-Knapsack has no PTAS running in time $O(n^{\lceil 1/(2\eps) - 1/2 \rceil - \delta})$ for any $\delta > 0$, proving \Cref{thm:2lower}.

\paragraph{Prior Algorithms}
Prior approximation schemes for $d$-Knapsack~\cite{ChandraHW76,OguzM80,FriezeC84,CapraraKPP00} all use the following common approach. Order the items by non-increasing profit $p(1) \ge p(2) \ge \ldots \ge p(n)$. Let $\OPT = \{i_1,\ldots,i_k\}$ be an optimal solution, with $i_1 < \ldots < i_k$. We guess the high-profit items $H = \{i_1,\ldots,i_q\}$ for an appropriately chosen~$q$, meaning we iterate over all $O(n^q)$ choices. (We can assume $k > q$; otherwise we can guess $\OPT$ directly.) For each guess, we solve the LP relaxation of $d$-Knapsack on the remaining items $i_q+1,\ldots,n$ with the residual capacities $\capacity - \weight(H)$. 
Since $L := \OPT \setminus H$ is a feasible solution for this LP, the computed optimal basic solution~$x$ satisfies $p(x) \ge p(L)$.
Since the LP has $d$ general linear constraints corresponding to the $d$ weight dimensions, it is well known that the basic solution $x$ has at most $d$ fractional values. Discarding the fractional items of $x$ and adding the guessed items $H$ yields a solution~$S$ of profit $p(S) \ge p(x) - d \cdot p(i_q) + p(H)$; as we lose up to $d$ fractional items, each of which has profit up to~$p(i_q)$. By  $p(x) \ge p(L) = p(\OPT) - p(H)$ we obtain $p(S) \ge p(\OPT) - d \cdot p(i_q)$. By sortedness we have $p(i_1) \ge \ldots \ge p(i_q)$ and thus $p(i_q) \le p(H)/q \le p(S)/q$. This yields $p(S) \ge p(\OPT) - d \cdot p(i_q) \ge p(\OPT) - d \cdot p(S) / q$, which rearranges to $p(S) \ge p(\OPT) / (1 + d/q)$. Setting $q := \lceil d \cdot (1-\eps)/\eps \rceil$ guarantees the desired approximation ratio $p(S) \ge (1-\eps) \cdot p(\OPT)$. Outputting the best solution found over all guesses thus yields an approximation scheme. 

The algorithms in~\cite{ChandraHW76,OguzM80,FriezeC84,CapraraKPP00} differ in how they solve the LP in the above algorithm approach. For example, when~\cite{ChandraHW76} was published, it was not yet known that an LP with a constant number of general linear constraints can be solved in polynomial time, and thus that paper was only able to solve an unbounded problem variant of $d$-Knapsack. Once polynomial-time solvers for such LPs were found, the algorithm of~\cite{ChandraHW76} could easily be adapted to solve the standard variant of $d$-Knapsack~\cite{OguzM80}. Some of the speedups of later work simply came from employing faster LP solvers; this line of improvements ended with an $O(n)$-time algorithm for solving the LP relaxation of $d$-Knapsack~\cite{MegiddoT93}.

Using this linear-time LP solver, the outlined algorithm approach requires time $O(n^q \cdot n) = O(n^{\lceil d /\eps \rceil - d + 1})$, as it is dominated by solving an LP for each of $O(n^q)$ guesses, where $q = \lceil d \cdot (1-\eps)/\eps \rceil$.

Caprara et al.~\cite{CapraraKPP00} further improved this running time by a factor $n$ to $O(n^{\lceil d /\eps \rceil - d})$, by introducing one more idea: For the LP relaxation they consider the solutions that pick a single fractional item, in addition to the solution that discards all fractional items. The best of these up to $d+1$ solutions is a $1/(d+1)$-approximation to the LP. This turns out\footnote{Indeed, this $1/(d+1)$-approximation yields a solution $S$ with $p(S) \ge p(H) + \frac 1{d+1} p(L)$, and we previously argued that $p(S) \ge p(\OPT) - d \cdot p(i_q) \ge p(\OPT) - \frac dq p(H) = (1 - \frac dq) p(H) + p(L)$. Adding these inequalities with weights $1 - \alpha$ and $\alpha := \frac q{q+1+d}$ yields $p(S) \ge (1 - \frac d{q+1+d}) (p(H) + p(L)) = (1 - \frac d{q+1+d}) p(\OPT)$. This is a $(1-\eps)$-approximation for $q := \lceil d/\eps \rceil - d - 1$, which improves the running time to $O(n^q \cdot n) = O(n^{\lceil d /\eps \rceil - d})$.}
to improve the total time by a factor~$n$.

\paragraph{Gap Between Upper and Lower Bound}
While prior algorithms solve 2-Knapsack in time $n^{2/\eps + O(1)}$, we showed a conditional lower bound of $n^{1/(2\eps) - O(1)}$, leaving roughly a factor-4 gap in the exponent.
We develop a new algorithm for $d$-Knapsack that runs in time $\tOh(n^{(d-1)/(2\eps) + O(1)} + n^{d})$ for constant $d$ and $\eps$; for $d=2$ and small $\eps$ this closes the factor-4 gap. 
Our improvement comes in two stages. The first stage gives a factor-$d/(d-1)$ improvement by replacing the LP solver by an algorithm for \emph{instances with slack}. The second gives a factor-2 improvement by \emph{meet in the middle}. 

\paragraph{Stage 1: Instances with Slack}
Suppose we want to find a solution satisfying capacities~$\capacity$, but we compare only with the best solution satisfying the stricter capacities $(1-\delta) \capacity$. This is known as resource augmentation. As it is a problem with slack in the weights, it is easy to solve. For example, by rounding all weights to multiples of $\frac \delta n \capacity$ and using dynamic programming this problem can be solved in time $(n/\delta)^{O(d)}$, see \Cref{lem:ddim_dp} in \Cref{sec:dalgo_algoslack}. In fact, by adapting techniques from modern Knapsack approximation algorithms~\cite{Chan18a}, we can improve this running time to $n \cdot (\log(n)/\delta)^{O(d)}$, see \Cref{lem:ddim_dpv} in \Cref{sec:dalgo_representativesols}.
Crucially, we observe that this problem remains solvable in the same time if we only have slack in $d-1$ weight dimensions, i.e., we compare with the best solution with capacity $(1-\delta) \capacity_t$ in dimension $t=1,\ldots,d-1$ and capacity $\capacity_d$ in dimension~$d$. Intuitively, this works because we inherently have slack in the profit constraint (as we are approximating profit) and in $d-1$ weight constraints, so only one constraint needs to be handled exactly, which is possible in the rounding and dynamic programming approach. 

In order to apply an algorithm for instances with slack, we first need to somehow \emph{generate slack}.
We achieve this by an LP-based argument showing that: 
Any set of items $L$ has a subset~$Q$ such that $\weight_t(Q) \le (1 - \delta) \weight_t(L)$ for each $t \in \{1,\ldots,d-1\}$ and $p(Q) \ge (1-\delta) p(L) - (d-1) \cdot \max_{i \in L} p(i)$, see \Cref{lem:structural} in \Cref{sec:dalgo_slackgen}. Applying this to the low-profit items of the optimal solution $L := \{i_{q+1},\ldots,i_k\}$ shows that an algorithm for the problem with slack, when run on the items $i_q+1,\ldots,n$, returns a solution of profit at least $(1-\delta) p(L) - (d-1) \cdot p(i_q)$. (For comparison, the prior algorithms achieved profit at least $p(L) - d \cdot p(i_q)$ after removing the fractional items of an LP solution). 
The $(1-\delta)$-factor is negligible by setting, say $\delta := \eps^{10}$. More importantly, the factor $d$ is reduced to $d-1$, which allows to pick $q \approx (d-1)/\eps$ (compared to $q \approx d/\eps$ in prior algorithms). This decrease by essentially a factor $d/(d-1)$ improves the final running time exponent by the same factor.
Ultimately, this improvement stems from our approach only needing slack in $d-1$ weight dimensions, so it suffices to apply our LP-based structural argument to an LP with only $d-1$ general linear constraints. 

For the reader's convenience, in \Cref{sec:dalgo_simplealgo} we describe a simple algorithm for $d$-Knapsack based on the ideas described so far. This algorithm achieves time $n^{(d-1)/\eps} \cdot (n/\eps)^{O(d)}$, which as promised improves the prior algorithms by a factor $d/(d-1)$ in the exponent, up to the additional~$O(d)$.

\paragraph{Stage 2: Meet in the Middle}
So far our ideas solve $d$-Knapsack in time $n^{(d-1)/\eps} \cdot (n/\eps)^{O(d)}$, so in particular 2-Knapsack in time $n^{1/\eps} \cdot (n/\eps)^{O(1)}$. 
Recalling the lower bound reduction, any $(1-\eps)$-approximation algorithm for 2-Knapsack also solves $k$-SUM for $k = \lceil 1/\eps \rceil - 1$. The latter can naively be solved in time $O(n^k) = O(n^{\lceil 1/\eps \rceil - 1}) = n^{1/\eps \pm O(1)}$, which matches what we achieved for 2-Knapsack.
The only known way to solve $k$-SUM faster than the naive time $O(n^k)$ is via meet in the middle,\footnote{Roughly speaking, to solve $k$-SUM via meet in the middle one precomputes and sorts all sums of $\lceil k/2 \rceil$ input numbers. Subsequently, for each sum $s$ of $\lfloor k/2 \rfloor$ input numbers, one performs binary search to check whether $-s$ appears in the precomputed list. Some additional care is needed to ensure that no number is selected twice.} which solves $k$-SUM in time $\tOh(n^{\lceil k/2 \rceil})$ (and this time is essentially optimal assuming the $k$-SUM Hypothesis). 

Hence, to achieve running time better than $n^{1/\eps \pm O(1)}$ for 2-Knapsack we need to apply meet in the middle (or some generalization thereof). As meet in the middle improves the naive running time of $k$-SUM by essentially a factor~2, this will yield the missing factor-2 improvement for 2-Knapsack, improving the running time to $n^{1/(2\eps) + O(1)}$ and matching our conditional lower bound. 

If the optimal solution consisted only of the high-profit items $H = \{i_1,\ldots,i_q\}$ then applying meet in the middle would be relatively straightforward: Split $H = H_1 \cup H_2$, where $H_1 = \{i_1,\ldots,i_{\lceil q/2 \rceil}\}$ and $H_2 = H \setminus H_1$. In time $O(n^{\lceil q/2 \rceil})$ we can enumerate all choices of $H_1$, and for each one write down its weights and profit. Then in the same time we can enumerate all choices of $H_2$. We want to pair each enumerated $H_2$ with a valid $H_1$ such that $H_1 \cup H_2$ maximizes total profit without exceeding the capacities. This can be written as an \emph{orthogonal range query}, so by storing the choices for $H_1$ in an orthogonal range data structure we can find the optimal match for any given $H_2$ in time $(d \log n)^{O(d)}$, for details see \Cref{sec:dalgo_ORS}
(some care is needed to ensure that $H_1$ and $H_2$ are disjoint, which we omit here). 

The main challenge is to incorporate the low-profit items into this meet in the middle approach. Note that we can only afford to solve a very easy problem for each pair $(H_1,H_2)$ (such as verifying total weight sums, and maximizing total profit). We cannot solve an LP for each pair $(H_1,H_2)$, and thus combining meet in the middle with the approach of prior algorithms is hopeless.

Fortunately, our novel approach of solving instances with slack from Stage 1 can be combined with meet in the middle!
To this end, we show that the infinite set of all possible capacity vectors~$\capacity$ for a set of items $I$ can be \emph{discretized}, obtaining a set ${\cal W}$ of only $(n/\eps)^{O(d)}$ capacity vectors with the following property: Solving the problem with slack for each $\capacity \in {\cal W}$ produces a pool of \emph{representative solutions}, such that for any arbitrary $\capacity$ a solution to the problem with slack can be found among these representative solutions, see \Cref{lem:low_profit} in \Cref{sec:dalgo_representativesols}. 
That is, in all subsequent computations we can replace low-profit items by only $(n/\eps)^{O(d)}$ representative solutions. 
This allows to incorporate the low-profit items into meet in the middle as follows. We build an orthogonal range data structure on all choices of $H_1$. Then we iterate over all pairs of $H_2$ and a representative solution on the low-profit items, issuing an orthogonal range query to obtain the best match $H_1$. The total running time becomes $n^{\lceil q/2 \rceil} \cdot (n/\eps)^{O(d)}$, where the first factor is the number of choices for~$H_1$ and the second factor is the number of representative solutions (recall that $q \approx (d-1)/\eps$). With some more care, we achieve the final running time $\big(n^{\lceil \frac{d-1}{2\eps} - \frac 12 + \rho \rceil}+n^{d}\big) \cdot \big( \frac{d \log n}{\rho\cdot \eps }\big)^{O(d)}$ for any $\rho \in (0,1)$, proving \Cref{thm:dalgo}.

\subsection{Further Related Work}
\label{sec:further}

When $d$ is unbounded, $d$-Knapsack admits a polynomial-time $\Omega(1/d)$-approximation~\cite{CapraraKPP00,srinivasan1995improved}, but no $\Omega(d^{-1/2+\eps})$-approximation for any $\eps > 0$ asuming $\textnormal{P}\neq \textnormal{NP}$~\cite{Zuckerman07,Hastad96,chekuri2004multidimensional}.\footnote{This follows from Zuckerman's derandomization~\cite{Zuckerman07} of Håstad's hardness result for Maximum Independent Set~\cite{Hastad96} by a simple reduction to $d$-Knapsack; see also~\cite{chekuri2004multidimensional} for a generalization to a larger class of problems.} 

In this work, as subroutines we develop algorithms for $d$-Knapsack with resource augmentation, where the capacity constraints are relaxed, see~\Cref{lem:ddim_dpv,lem:ddim_dp,lem:internal-APX-Lemma}. 
Different, incomparable settings of resource augmentation for generalizations of multidimensional Knapsack were recently studied by Brinkop et al.~\cite{brinkop_et_al:LIPIcs.STACS.2026.20} and Armbruster et al.~\cite{ArmbrusterGTW26}.

%Brinkop et al.~\cite{brinkop_et_al:LIPIcs.STACS.2026.20} recently studied a different, incomparable setting of resource augmentation for multidimensional Knapsack and, more generally, integer linear programs.

Knapsack is also extensively studied in geometric settings, where geometric
objects need to be packed without overlap into a square. We mention two examples that are faintly related to our work. Recently Kar, Khan, and Wiese~\cite{KarKW26} studied the two-dimensional
geometric Knapsack problem with rotations and designed a PTAS in the cardinality
case and a $(1.497+\eps)$-approximation in the weighted case. They also proved
an $n^{\Omega(1/\eps)}$ conditional lower bound for approximation schemes under
the $k$-SUM Hypothesis. 
As another example, Khan, Sharma, and Sreenivas~\cite{KhanSS21,khan_et_al:LIPIcs.FSTTCS.2022.23} studied a generalization of geometric Knapsack that
additionally imposes $d$-dimensional vector constraints, and presented a
$(2+\eps)$-approximation both with and without rotations.

Knapsack and its variants have also been extensively studied from the
perspective of pseudo-polynomial-time algorithms, where the input consists of
integers bounded by $W$, see, e.g.,~\cite{Bringmann17SubsetSum,Bringmann24SmallItems,
BringmannDP24,Chan18a,Chan26SubsetSum,CyganMWW19,
EisenbrandW20,JansenR23,Jin24Knapsack,KunnemannPS17,
RohwedderW25}. In this setting,
$d$-Knapsack admits a folklore $\poly(n) W^d$-time dynamic programming algorithm, see~\cite{DoronAradKM26}. Recently, Bringmann, Dürr, and W\k{e}grzycki~\cite{BringmannDW26}
provided a fine-grained lower bound showing that $d$-Knapsack cannot be solved
in time $2^{o(n)}W^{d-\delta}$ for any fixed $d\geq 1$ and $\delta>0$; see also~\cite{AbboudBHS22} for the case of $d=1$.

\section{Preliminaries}
\label{sec:prelim}

Let $d \ge 2$ be an integer. An instance of $d$-Knapsack consists of a $d$-dimensional vector of capacities $\capacity \in \mathbb{R}_{\ge 0}^d$ and a set $I$ of $n$ items. 
Each item~$i \in I$ has an associated profit $p(i) \in \mathbb{R}_{\ge 0}$ and a $d$-dimensional vector of weights $\weight(i) \in \mathbb{R}_{\ge 0}^d$. 
For a set of items $S\subseteq I$, let $\weight(S) := \sum_{i\in S} \weight(i)$ and $p(S) := \sum_{i\in S} p(i)$ denote the total weight and total profit of $S$, respectively. A feasible solution is a subset $S\subseteq I$ satisfying $\weight(S)\leq \capacity$. Here, for two vectors $\bar{x},\bar{y}\in \mathbb{R}^d$ we say that $\bar{x}\leq \bar{y}$ if and only if $\bar{x}_t \leq \bar{y}_t$ for all $t\in \{1,\ldots, d\}$.  The objective is to find a feasible solution $S$ of maximum profit~$p(S)$. Without loss of generality we assume that $\weight(i)\leq \capacity$ for every $i\in I$ throughout the paper. 

We adopt an object-oriented view, where the profit $p(i)$ and weight vector $\weight(i)$ are attributes of the object $i$. Accordingly, we denote an instance by $(I,\capacity)$; in particular we do not need to specify a profit function $p$ or weight function $\weight$, as they are given by attributes of items.

We assume that items in $I$ are indexed by positive integers, e.g., by $\{1,\ldots,n\}$. We often abbreviate this assumption by writing $I = \{1,\ldots,n\}$. We denote by $\max(I)$ the maximum index of any item in $I$, and by $\min(I)$ the minimum index. For $i,i' \in I$ we write $i < i'$ to mean that the index of $i$ is less than the index of $i'$. We assume that \emph{items are sorted by non-increasing profits}, that is, for every $i,i'\in I$ such that $i<i'$ it holds that $p(i)\geq p(i')$. 
For every instance $\cI=(I,\capacity)$ we fix an arbitrary optimal solution, which we denote by $\OPT_{\cI}$. 
When the instance $\cI$ is clear from the context, we omit the subscript and simply write $\OPT$.

For a subset of items $S\subseteq I$, we use $S[j]$ to denote the $j$-th item in $S$, ordered by item indices. 
That is, if $S=\{i_1,\ldots,i_k\}$ with $i_1<i_2<\ldots<i_k$, then $S[j]=i_j$. 
Similarly, we write $S[a \semi b]$ for the set of items $\{i_j \mid \max\{a,1\} \le j \le \min\{b,|S|\} \}$ and use the abbreviations $S[\semi b] := S[1 \semi b]$ and $S[a \semi] := S[a \semi |S|]$.
In particular, under the above profit ordering of the items, $\OPT[j]$ denotes the item with the $j$-th highest profit in the optimal solution $\OPT$. 
Moreover, $\OPT[\semi  j]$ denotes the subset consisting of the $j$ highest-profit items in $\OPT$, and $\OPT[j\semi]$ denotes the subset of $\OPT$ obtained by removing the $j-1$ highest-profit items. 

$\tOh$-notation hides logarithmic factors in the argument, i.e., $\tOh(T) = \bigcup_{c \ge 0} O(T \log^c T)$.
We write $[d] = \{1,\ldots,d\}$. All logarithms in this paper are base 2. We use the conventions $\max(\emptyset) = -\infty$ and $\min(\emptyset) = \infty$.
We remark that the algorithms in this paper work both on the RealRAM model (with arbitrary real input numbers) and on the WordRAM model (with integer input numbers, each fitting into a machine cell). We focus our presentation on the RealRAM.

\section{A Simple Improved Algorithm for \boldmath$d$-Knapsack}
\label{sec:ddim}

As a warmup, in this section we present a simple algorithm that improves upon the state of the art. Specifically, we will achieve time $n^{(d-1)/\eps} \cdot (n/\eps)^{O(d)}$, which for $n \gg 1/\eps \gg d$ improves upon the previously best running time of $O(n^{\lceil d/\eps \rceil - d})$~\cite{CapraraKPP00}. 
As ingredients, in \Cref{sec:dalgo_slackgen} we present the Slack-Generating Lemma, and in \Cref{sec:dalgo_algoslack} we present an algorithm for instances with slack. We combine these ingredients to obtain our algorithm in \Cref{sec:dalgo_simplealgo}. 

For the further improved algorithm that we will see in \Cref{sec:ddim_improved}, we will only reuse \Cref{sec:dalgo_slackgen}.

\subsection{Slack-Generating Lemma}
\label{sec:dalgo_slackgen}
Our first building block is a lemma to generate slack. Given a subset $S\subseteq I$, it guarantees the existence of a subset $Q\subseteq S$ whose weight in each of the first $d-1$ dimensions is reduced by a factor of $1-\delta$, while its profit decreases by only a $(1-\delta)$ factor and an additive loss of at most $(d-1)$ times the maximum profit of an item in $S$. Although the lemma is purely existential, the generated slack will allow us to efficiently compute suitable surrogates for $S$ via dynamic programming.
\begin{lemma}[Slack-Generating Lemma]
	\label{lem:structural}
	Let $(I,\capacity)$ be a $d$-Knapsack instance, let $S\subseteq I$ be a nonempty subset of items,  and let $0<\delta<1$. 
	There exists  $Q\subseteq S$ satisfying:
	\begin{enumerate}
		\item $\weight_{t}(Q) \leq (1-\delta) \cdot \weight_t(S)$ for every $t\in \{1,\ldots,d-1\}$,
		\item $p(Q) \geq (1-\delta )\cdot p(S)-(d-1)\cdot \max_{i\in S}p(i)$, and
		\item $\max\big\{ p(Q), \max_{i \in S} p(i) \big\} \ge \frac {1-\delta}d \cdot p(S)$.
	\end{enumerate}
\end{lemma}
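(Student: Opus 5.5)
We will prove the lemma with an LP over $S$ that has only $d-1$ general constraints. Consider the polytope
\[
P=\Big\{x\in[0,1]^S \;:\; \textstyle\sum_{i\in S}\weight_t(i)\,x_i \le (1-\delta)\,\weight_t(S)\ \text{ for all } t\in\{1,\ldots,d-1\}\Big\},
\]
and let $x^*$ be an optimal basic (vertex) solution of $\max \sum_{i\in S} p(i)x_i$ over $P$. The uniform vector $x\equiv 1-\delta$ lies in $P$, so $p(x^*)\ge (1-\delta)\,p(S)$. Since $P$ is defined by box constraints plus $d-1$ further linear inequalities, the standard counting argument (at a vertex, $|S|$ linearly independent constraints are tight, and at most $d-1$ of them are non-box) shows that $x^*$ has at most $d-1$ fractional coordinates.

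Let $Q:=\{i\in S : x^*_i=1\}$. Then $\weight_t(Q)\le \sum_i \weight_t(i)x^*_i\le(1-\delta)\weight_t(S)$ by nonnegativity of the weights, which gives property~1. Moreover, $p(Q)\ge p(x^*)-\sum_{i\,:\,0<x^*_i<1}p(i)x^*_i \ge (1-\delta)p(S)-(d-1)\max_{i\in S}p(i)$, which gives property~2.

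Property~3 follows from the same decomposition. The vector $x^*$ splits into the integral part $Q$ and at most $d-1$ fractional items $F$, so
\[
(1-\delta)\,p(S)\le p(x^*)\le p(Q)+\sum_{i\in F}p(i)\le p(Q)+(d-1)\max_{i\in S}p(i)\le d\cdot\max\Big\{p(Q),\ \max_{i\in S}p(i)\Big\},
\]
and dividing by $d$ gives the claim.

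The only step needing care is the vertex-counting argument, and two points must be checked. First, $P$ is nonempty and bounded, so an optimal vertex exists. Second, degenerate rows, for instance $\weight_t(S)=0$, which force $x_i=0$ for items with positive weight in dimension $t$, do not break the count. Both points are standard: a vertex of $\{x\in[0,1]^n : Ax\le b\}$ with $A$ having $d-1$ rows has at most $d-1$ fractional coordinates.
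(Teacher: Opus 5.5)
Your proposal is correct and follows the paper's proof essentially step for step: the same LP with only the first $d-1$ weight constraints, the uniform point $x\equiv 1-\delta$ certifying $\OPT(\LP_S)\ge(1-\delta)\,p(S)$, a basic optimum with at most $d-1$ fractional entries, and $Q=\{i\in S : x^*_i=1\}$; your bound $p(Q)+(d-1)\max_{i\in S}p(i)\le d\cdot\max\{p(Q),\max_{i\in S}p(i)\}$ for property 3 is only a repackaging of the paper's ``maximum summand is at least the average'' step. One harmless nit: because weights are nonnegative, $\weight_t(S)=0$ means every item of $S$ has zero weight in dimension $t$, so that row is the trivial constraint $0\le 0$ rather than one that forces some $x_i$ to $0$.
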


\begin{proof}	
The proof is based on a linear program defined over the items of $S$. The program scales the capacities in the first $d-1$ dimensions by a factor of $1-\delta$, and a simple scaling argument yields a feasible solution of profit $(1-\delta)\cdot p(S)$. We then consider a basic optimal solution and exploit the fact that it has at most $d-1$ fractional variables to construct the desired subset $Q$.
	
In more detail, let $\LP_S$ denote the following linear program:
	\begin{equation}
		\label{eq:structural_LP}
		\tag{$\LP_S$}
		\begin{aligned}
			\max \quad & \sum_{i \in S} p(i)\cdot x_i \\[6pt]
			\text{s.t.} \quad 
			& \sum_{i \in S} x_i\cdot  \weight_t(i) \;\le\; (1-\delta)\cdot \weight_t(S)
			&& \forall\, t \in \{1,\ldots,d-1\} \\[6pt]
			& 0 \;\le\; x_i \;\le\; 1
			&& \forall\, i \in S
		\end{aligned}
	\end{equation}
	Observe that the linear program \ref{eq:structural_LP}  only has variables for items in $S$. Also, note that in the first constraint the value of $t$ only goes up to $d-1$.

	Consider the vector $x\in[0,1]^S$ defined by
	$x_i=1-\delta$ for every $i\in S$.
	 Clearly, $0\leq x_i\leq 1$ for every $i\in S$. 
	Additionally, for every $t\in \{1,\ldots, d-1\}$ it holds that  $$\sum_{i\in S} x_i\cdot \weight_t(i) = \sum_{i\in S} (1-\delta)\cdot \weight_t(i) = (1-\delta)\cdot \weight_t(S).$$
	Hence $(x_i)_{i\in S}$ is a solution for \ref{eq:structural_LP}. Therefore, \begin{equation}
		\label{eq:opt_LPS}\OPT(\LP_S) \geq \sum_{i\in S} p(i)\cdot x_i=\sum_{i\in S} p(i)\cdot (1-\delta) = (1-\delta)\cdot p(S).
	\end{equation}
	
	Let $x^*$ be a {\em basic} optimal solution for $\LP_S$. Since $\LP_S$ has $d-1$ constraints beside the constraints $0\leq x_i\leq 1$, it holds that $x^*$ has at most $d-1$ fractional entries, that is, $|F|\leq d-1$ where $F=\{ i\in S~|~x^*_i\in (0,1)\}$. Define $Q= \{i\in S~|~x^*_i =1\}$. Clearly, $Q\subseteq S$.
	We will show that the set~$Q$ satisfies the additional properties stated in the lemma. 
	
	For property 1, note that for every $t\in \{1,\ldots, d-1\}$ it holds that 
	$$
	\weight_t(Q) = \sum_{i\in Q} \weight_t(i) \leq \sum_{i\in S} x^*_{i}\cdot \weight_t(i) \leq (1-\delta)\cdot \weight_t(S),
	$$
	where the last inequality holds as $x^*$ is a solution for $\LP_S$. 
	
	For property 2, we observe that 
	$$
	\begin{aligned}
		p(Q)&= \sum_{i\in Q} p(i)\\
		& = \sum_{i\in Q} x^*_i\cdot p(i) \\
		&= \sum_{i\in S} x^*_i\cdot p(i)  - \sum_{i\in F} x^*_i \cdot p(i)  \\
		&\geq \OPT(\LP_S) -|F|\cdot \max_{i\in S} p(i) \\
		&\geq (1-\delta)\cdot p(S) - (d-1)\cdot \max_{i\in S} p(i),
	\end{aligned}
	$$
	where the last inequality follows from \eqref{eq:opt_LPS}. 
	
	For property 3, observe that 
	\[
	p(Q) + \sum_{i \in F} p(i) \ge p(x^*) = \OPT(\LP_S) \ge (1-\delta)\cdot p(S),
	\]
	where the last inequality uses \eqref{eq:opt_LPS}. Since $|F| \le d-1$, the left hand side consists of at most $d$ summands. Since the maximum summand is at least the average summand, it holds that $\max\big\{ p(Q), \max_{i \in F} p(i) \big\} \ge \frac 1d \big(p(Q) + \sum_{i \in F} p(i)\big)$. Combining these inequalities yields
	\[
	 \max\big\{ p(Q), \max_{i \in S} p(i) \big\} \ge \max\big\{ p(Q), \max_{i \in F} p(i) \big\} \ge \frac 1d \bigg(p(Q) + \sum_{i \in F} p(i)\bigg) \ge \frac{1-\delta}d \cdot p(S),
	\]
	which proves the lemma.
\end{proof}

\subsection{A Simple Algorithm for Instances with Slack}
\label{sec:dalgo_algoslack}

We now show that instances with slack can be solved by dynamic programming. Specifically, we want to compute a set $S$ that is feasible with respect to the given capacities $\capacity = (\capacity_1,\ldots,\capacity_d)$ and achieves at least a $1-\eps$ fraction of the best profit of any solution that is feasible for the stricter capacities $((1-\eps) \capacity_1, \ldots, (1-\eps) \capacity_{d-1}, \capacity_d)$. This means we have slack in the first $d-1$ dimensions, because the solution that we compare to must adhere to stricter capacity bounds.

\begin{lemma}[Solving Instances with Slack by Dynamic  Programming]
	\label{lem:ddim_dp}
	Given a $d$-Knapsack instance $(I,\capacity)$ and an error parameter $\eps \in (0,1]$, in time $(\frac{n}{\eps})^{O(d)}$ we can compute a subset of items $S\subseteq I$ such that:
	\begin{itemize}
		\item $\weight(S)\leq \capacity$ and 
		\item $p(S) \geq(1-\eps)\cdot \max\left\{ p(Q)\,\middle|\,\begin{aligned} 
			&Q\subseteq I\\
			&\weight_t(Q)\leq (1-\eps)\cdot  \capacity_t~~~&\forall t\in [d-1]\\
			&\weight_d(Q)\leq \capacity_d 
		\end{aligned} \right\}.$ 
	\end{itemize}
\end{lemma}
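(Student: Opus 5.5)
We prove \Cref{lem:ddim_dp} by rounding and dynamic programming. The first $d-1$ weight dimensions and the profit are rounded, since we have slack in them. The $d$-th weight dimension is kept exact: the DP minimizes it.

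Fix a benchmark solution $Q^*$ attaining the maximum on the right-hand side. We may assume it is nonempty; otherwise the empty set suffices.

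\textbf{Profit rounding.} Guess $p_{\max} := \max_{i \in Q^*} p(i)$ among the at most $n$ item profits. Restrict to items with $p(i) \le p_{\max}$. Then $p(Q^*) \in [p_{\max}, n p_{\max}]$. Set $\mu := \eps p_{\max}/n$ and round profits down to $\tp(i) := \mu \lfloor p(i)/\mu \rfloor$. Each item loses at most $\mu$, so any set loses at most $n\mu = \eps p_{\max} \le \eps\, p(Q^*)$. Rounded profits are integer multiples of $\mu$ in $\{0,\ldots,\lfloor n/\eps\rfloor\}$, hence total rounded profits take $O(n^2/\eps)$ values.

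\textbf{Weight rounding.} For $t \in [d-1]$, set $\nu_t := \eps \capacity_t / n$ and round up: $\weight'_t(i) := \nu_t \lceil \weight_t(i)/\nu_t \rceil$. Any set then gains at most $n\nu_t = \eps \capacity_t$ in dimension $t$. Hence $\weight'_t(Q^*) \le (1-\eps)\capacity_t + \eps \capacity_t = \capacity_t$. Conversely, any set $S$ with $\weight'_t(S) \le \capacity_t$ satisfies $\weight_t(S) \le \capacity_t$, since rounding only increases weights. The rounded weight in dimension $t$, capped at $\capacity_t$, is a multiple of $\nu_t$ in $\{0,\ldots,\lfloor n/\eps\rfloor\}\cdot \nu_t$. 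This gives $O(n/\eps)$ states per dimension.

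\textbf{DP.} Process items $1,\ldots,n$. Maintain a table $D[j, w_1,\ldots,w_{d-1}, P]$ holding the minimum exact $\weight_d$-weight of a subset of the first $j$ items with rounded weights exactly $(w_1,\ldots,w_{d-1})$ (all within capacity) and rounded profit exactly $P$. Store a witness set, or back-pointers, alongside each entry. Transitions are the standard take/skip, with $O(1)$ work per state (times $d$). The number of states is $n \cdot (n/\eps)^{d-1}\cdot O(n^2/\eps) = (n/\eps)^{O(d)}$. The outer guess of $p_{\max}$ adds a factor $n$.

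Output the set with maximum true profit among all entries satisfying $D[\cdot] \le \capacity_d$, over all guesses. The output is feasible: rounded feasibility gives the first $d-1$ dimensions and the exact bound gives dimension $d$. For the correct guess, $Q^*$ is itself a candidate for its own rounded profit value $P^* = \tp(Q^*)$, because $\weight_d(Q^*) \le \capacity_d$. So some returned set $S$ has $\tp(S) \ge P^*$. Then
\[ p(S) \ge \tp(S) \ge \tp(Q^*) \ge (1-\eps)\,p(Q^*). \]

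The main obstacle is handling dimension $d$ exactly with no slack. The fix is to put $\weight_d$ as the optimized value of the DP rather than as an index, which is the standard "one exact constraint" trick. The profit grid likewise needs the guess of $p_{\max}$ so that it stays polynomial in $n/\eps$.
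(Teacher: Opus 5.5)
Your proof is correct and takes essentially the same approach as the paper. Profits are rounded down and the first $d-1$ weights rounded up to multiples of an $\eps/n$-fraction of their scale, and $\weight_d$ is kept exact as the value the DP minimizes. The one difference is how the profit scale is fixed: you guess $\max_{i\in Q^*} p(i)$, at the cost of an extra factor $n$. The paper avoids the guess by first deleting every item that on its own violates the stricter capacities; every remaining singleton then lies in the benchmark family, so the global maximum item profit is already at most $p(Q^*)$.
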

\begin{proof}
	The lemma follows from a standard application of rounding and dynamic programming. We explain the details for completeness.
	We preprocess the instance by removing all items $i \in I$ such that $\weight_d(i) > \capacity_d$ or there exists $t \in [d-1]$ with $\weight_t(i) > (1-\eps)\capacity_t$. This does not affect the set 
	\[ \mathcal{Q} := \{ Q \subseteq I \mid \weight_d(Q) \le \capacity_d, \weight_t(Q) \le (1-\eps) \capacity_t \; \forall t \in [d-1] \}. \]
	We may assume that $I = \{1,\ldots,n\}$, that is, the items are indexed by the numbers 1 to~$n$.	
	Let $\tp \coloneqq \max\{p(i) \mid i \in I\}$. In case after the preprocessing we have $I = \emptyset$ or $\tp = 0$ we can simply return $S = \emptyset$.
	Otherwise, for each item $i \in I$ we round down the profit $p(i)$
	to the nearest multiple of $\eps\tp/n$, and we round up the weight $\weight_t(i)$ to the nearest
	multiple of $\eps\capacity_t/n$ for all $t \in [d-1]$.
	Denote the resulting rounded profits and weights by
	$\hat p(i)$ and $\hat w_t(i)$.
	
	We construct a dynamic programming table that has an entry for each choice of:
	\begin{itemize}[nosep]
		\item $j\in\{0,\ldots,n\}$, denoting the number of items considered so far,
		\item $\pi \in \eps\tp/n \cdot \{0,\ldots, \lfloor n^2/\eps \rfloor\}$, denoting a possible rounded profit value,\footnote{Here we used the notation $a \cdot X = \{a \cdot x \mid x \in X\}$.}
		\item $\omega_t \in \eps\capacity_t/n \cdot \{0,\ldots, \lfloor n/\eps \rfloor \}$, denoting a possible rounded weight in dimension $t \in [d-1]$.
	\end{itemize}
	Formally, we define
	\begin{equation}\label{eq:dp-definition}
	T[j,\pi,\omega_1,\ldots,\omega_{d-1}] \coloneqq \min\left\{ 
		\weight_d(S) \,\middle|\, \begin{aligned}&S\subseteq\{1,\ldots,j\}\\ &\hat p(S)=\pi\\ &\hat w_t(S)=\omega_t\,~~~\forall t\in[d-1]
		\end{aligned} \right\},
	\end{equation}
	where we recall the convention $\min(\emptyset) = \infty$.

	We fill this table as follows.
	In the base case $j=0$ we set $T[0,0,0,\ldots,0]=0$, and every other
	entry with $j=0$ is set to $\infty$.
	The dynamic programming transitions are given by the following recurrence relation:
	\begin{equation}\label{eq:dp-recursion}
	T[j,\pi,\omega_1,\ldots,\omega_{d-1}]
	= \min \left\{\begin{aligned}
	&T[j-1,\pi,\omega_1,\ldots,\omega_{d-1}],\\
	&T[
		j-1,\,
		\pi-\hat p(j),\,
		\omega_1-\hat w_{1}(j),\,
		\ldots,\,
		\omega_{d-1}-\hat w_{d-1}(j)]
	+\weight_d(j)
	\end{aligned}
	\right\}
	\end{equation}
	Whenever a table entry on the right hand side is out of bounds, we treat it as $\infty$.
	After filling the table, we determine the maximum $\pi$ such that there exist $\omega_1,\ldots,\omega_{d-1}$ satisfying $\omega_t \leq \capacity_t$ for every
	$t \in [d-1]$ and $T[n,\pi,\omega_1,\ldots,\omega_{d-1}] \leq \capacity_d$. The
	subset of items $S$ corresponding to this entry can be recovered by a standard backtracking
	procedure. We return this set $S$.
	
	\medskip
	To establish correctness, we first need to prove that the recurrence \eqref{eq:dp-recursion} correctly fills the table according to definition \eqref{eq:dp-definition}. We proceed by induction on $j \in \{0,\ldots,n\}$. The base case $j=0$ is straightforward to verify, so we focus
	on the inductive step. Assume that for all $j' < j$ and all $\pi', \omega_1',\ldots, \omega_{d-1}'$, the value $T[j',
	\pi', \omega'_1, \ldots, \omega'_{d-1}]$ is correctly filled according to definition \eqref{eq:dp-definition}. 		We now analyze $T[j, \pi, \omega_1, \ldots, \omega_{d-1}]$. Let $S
	\subseteq \{1, \ldots, j\}$ be such that $\hat{p}(S) = \pi$, $\hat{w}_t(S) =
	\omega_t$ for all $t \in [d-1]$ and $\weight_d(S)$ is minimized.
	Clearly, the table entry computed according to \eqref{eq:dp-recursion} satisfies $T[j, \pi, \omega_1, \ldots, \omega_{d-1}] \ge \weight_d(S)$ as it stores
	a $d$-th weight of a feasible solution. We need to show the opposite inequality.

	\textbf{Case 1}: If $j \notin S$, then $S \subseteq \{1,
	\ldots, j-1\}$, so $S$ is feasible for the entry $T[j-1, \pi, \omega_1, \ldots, \omega_{d-1}]$, and we obtain $\weight_d(S) \ge T[j-1, \pi, \omega_1, \ldots, \omega_{d-1}]$. This contributes the
	first term in~\eqref{eq:dp-recursion}.

	\textbf{Case 2}: If $j \in S$, then let $S' := S \setminus \{j\} \subseteq \{1, \ldots, j-1\}$. Observe that
		$\hat{p}(S') = \pi - \hat{p}(j)$, and
		$\hat{w}_t(S') = \omega_t - \hat{w}_t(j)$ for all $t \in [d-1]$. Thus, $S'$ is feasible for the entry $T[j-1, \pi - \hat{p}(j), \omega_1 - \hat{w}_1(j), \ldots, \omega_{d-1} - \hat{w}_{d-1}(j)]$, and we obtain
	\begin{align*}
		\weight_d(S) = \weight_d(S') + \weight_d(j) \ge T[j-1, \pi - \hat{p}(j), \omega_1 - \hat{w}_1(j), \ldots, \omega_{d-1} - \hat{w}_{d-1}(j)] + \weight_d(j).
	\end{align*}
	This contributes the second term in the recurrence. Hence, the recurrence~\eqref{eq:dp-recursion} correctly fills the table according to definition \eqref{eq:dp-definition}.
	
	\medskip
	Inspecting the final step of the algorithm and using (\ref{eq:dp-definition}), we can observe that the algorithm computes a set $S \subseteq \{1,\ldots,n\} = I$ that maximizes $\hat p(S)$ subject to $\hat w_t(S) \le \capacity_t$ for all $t \in [d-1]$ and $\weight_d(S) \le \capacity_d$. Since we only round up weights, we have $\weight_t(S) \le \hat w_t(S) \le \capacity_t$ for all $t \in [d-1]$, and thus $\weight(S) \le \capacity$, showing the first bullet of the lemma.
	We also remark that $\hat p(S) \le p(S) \le n \cdot \tp$, and thus the considered range of $\pi$ is sufficient.
	For the second bullet, consider $Q \in \mathcal{Q}$ that maximizes $p(Q)$. Then by the rounding of weights, for any $t \in [d-1]$ it holds that $\hat w_t(Q) \le \weight_t(Q) + |Q| \cdot \eps\capacity_t/n \le (1-\eps) \capacity_t + \eps \capacity_t \le \capacity_t$. Hence, $Q$ 
	is a candidate for $S$, and thus $\hat p(S) \ge \hat p(Q)$. By the rounding of profits it holds that $\hat p(Q) \ge p(Q) - |Q| \cdot \eps\tp/n \ge p(Q) - \eps \tp$. Since after our preprocessing any single item $i \in I$ satisfies $\{i\} \in \mathcal{Q}$, it holds that $p(Q) \ge p(i)$ for all $i \in I$, and thus $p(Q) \ge \tp$. This yields $p(Q) - \eps \tp \ge (1-\eps) \cdot p(Q)$. Chaining these inequalities together, we obtain $p(S) \ge \hat p(S) \ge (1-\eps) \cdot p(Q)$, which proves the second bullet of the lemma.

	\medskip
	To analyze the running time, we first bound the number of states. By definition of $T$, there are $n+1$ choices for~$j$, $O(n^2/\eps)$ choices for $\pi$, and $O(n/\eps)$ choices for $\omega_t$ for each $t \in [d-1]$. 
	Hence, the total number of states is $(n/\eps)^{\Oh(d)}$. The same bound holds for the running
	time, as a single transition takes constant time.
\end{proof}

\subsection{A Simple Algorithm for \boldmath$d$-Knapsack}
\label{sec:dalgo_simplealgo}

With the ideas presented above, we can design a simple approximation scheme for the $d$-Knapsack problem, achieving running time
$
n^{\frac{d-1}{\eps}} \cdot \left(\frac{n}{\eps}\right)^{O(d)}$. For $n \gg \frac 1\eps \gg d$, this already improves upon the state-of-the-art running time of $O(n^{\lceil d/\eps\rceil - d})$~\cite{CapraraKPP00}.

Pseudocode of our algorithm is shown in~\Cref{alg:ddim_simple}. As in prior approximation schemes, we rely on enumeration to guess the most profitable items in an optimal solution. Specifically, we enumerate all $S \subseteq I$ of size at most $q := \lceil (d-1)/\eps \rceil$ with $\weight(S) \le \capacity$, with the goal of guessing $S = \OPT[\semi q]$. 
For each $S$, we consider the \emph{residual instance} given by the items that follow after $S$ in the order by non-increasing profits, that is, $I' := \{ i \in I \mid i > \max(S) \}$, and by the remaining weight capacities $\capacity' := \capacity - \weight(S)$. 
We run \Cref{lem:ddim_dp} on the instance $(I',\capacity')$ with error parameter $\frac \eps 2$, which returns some set $R \subseteq I'$. In the end, we return the best of all computed solutions $S \cup R$.

\begin{algorithm}[t]
	\caption{$\simpleddim(I,\capacity, \eps)$}
	
	\label{alg:ddim_simple}
	
	$q \leftarrow\ceil{ \frac{d-1}{\eps}}$ 
	
	$\best \leftarrow \emptyset$
	
	\For{each $S\subseteq  I$ such that  $\abs{S}\leq q$\label{simple:loop} and $\weight(S)\leq \capacity$}{
		
		Let $I' \leftarrow \{i \in I \mid i > \max(S)\}$ and $\capacity' \leftarrow \capacity - \weight(S)$.
		
		Apply \Cref{lem:ddim_dp} on $(I',\capacity')$ with error parameter $\frac{\eps}{2}$; let $R$ be the returned solution. \label{simple:DP}
		
		If $p(S\cup R) > p(\best)$ then $\best\leftarrow S\cup R$. 
		
	}
	\Return{\best}
	
\end{algorithm}

Note that in contrast to prior approximation schemes we do not use any LP solver, instead we rely on the dynamic programming procedure of \Cref{lem:ddim_dp}. 
In both our algorithm and prior approximation schemes,  enumeration is used to compensate for an additive loss incurred by the procedure that extends the enumerated solution -- linear programming in prior approximation schemes, and dynamic programming in our case. Crucially, the combination of 
 \Cref{lem:structural}  and the dynamic programming of \Cref{lem:ddim_dp} leads 
to an additive loss in profit which only depends on $d-1$ items, whereas prior algorithms incurred an additive loss corresponding to $d$ items. This allows us to reduce the size of the enumerated sets from approximately $\frac{d}{\eps}$ to $\frac{d-1}{\eps}$.

\begin{lemma}
	\Cref{alg:ddim_simple} is a $\left(1-\eps \right)$-approximation algorithm for $d$-Knapsack and runs in time $n^{\frac{d-1}{\eps}} \cdot \left(\frac{n}{\eps}\right)^{O(d)}$.
\end{lemma}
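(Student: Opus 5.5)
The running time is the easy part. The approximation guarantee splits into two cases: either $\OPT$ is small enough to be enumerated directly, or we combine \Cref{lem:structural} with \Cref{lem:ddim_dp} on the residual instance.

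\emph{Running time.} The loop in \Cref{simple:loop} ranges over at most $\sum_{j\le q}\binom{n}{j} \le (n+1)^q$ sets $S$, where $q=\lceil (d-1)/\eps\rceil \le (d-1)/\eps+1$. For each $S$, computing $I'$ and $\capacity'$ takes $O(nd)$ time, and the call to \Cref{lem:ddim_dp} with parameter $\eps/2$ takes $(2n/\eps)^{O(d)}=(n/\eps)^{O(d)}$ time. The total is therefore $n^{(d-1)/\eps}\cdot (n/\eps)^{O(d)}$, where the extra factor $n+1$ and the constant-factor blowups are absorbed into $(n/\eps)^{O(d)}$. \emph{Feasibility.} Every candidate $S\cup R$ is feasible: $\weight(S)\le\capacity$, and $\weight(R)\le \capacity'=\capacity-\weight(S)$ by the first bullet of \Cref{lem:ddim_dp}. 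Also $S\cap R=\emptyset$, since every item of $R$ has index larger than $\max(S)$.

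\emph{Approximation, case 1: $|\OPT|\le q$.} Then $S=\OPT$ is enumerated, and $p(S\cup R)\ge p(\OPT)$.

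\emph{Approximation, case 2: $|\OPT|>q$.} Take $S=\OPT[\semi q]$ and $L=\OPT[q+1\semi]\ne\emptyset$. We have $L\subseteq I'$ by sortedness, since items of $L$ have larger indices than $\max(S)$. Moreover $\weight(L)\le \capacity'$. Apply \Cref{lem:structural} to $L$ with slack parameter $\eps/2$. This yields $Q\subseteq L$ with $\weight_t(Q)\le(1-\eps/2)\weight_t(L)\le (1-\eps/2)\capacity'_t$ for $t\in[d-1]$ and $\weight_d(Q)\le\weight_d(L)\le \capacity'_d$. Thus $Q$ is a competitor in the second bullet of \Cref{lem:ddim_dp} for the instance $(I',\capacity')$. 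Using property 2 and $\max_{i\in L}p(i)\le p(\OPT[q])\le p(S)/q$, we get
\[
p(R)\ge (1-\tfrac\eps2)p(Q)\ge (1-\tfrac\eps2)^2 p(L) - (d-1)\tfrac{p(S)}{q}.
\]
Since $(d-1)/q\le \eps$, this gives
\[
p(S\cup R)\ge (1-\eps)p(L) + (1-\eps)p(S) = (1-\eps)p(\OPT),
\]
where we used $(1-\eps/2)^2\ge 1-\eps$ and, for the second inequality, the lower bound $(1-\eps/2)^2 p(L)\ge(1-\eps)p(L)$.

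The only delicate point is the bookkeeping: matching the slack parameter $\eps/2$ of \Cref{lem:structural} to the $(1-\eps/2)$ capacity scaling required in \Cref{lem:ddim_dp}, and keeping the additive $(d-1)$-item loss charged against $p(S)$ rather than $p(\OPT)$. With the choices above both losses fit within $\eps$. Property 3 of \Cref{lem:structural} is not needed here.
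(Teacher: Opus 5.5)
Your proof is correct and follows the paper's argument essentially step for step: enumerate $S=\OPT[\semi q]$, apply \Cref{lem:structural} with $\delta=\eps/2$ to $\OPT[q+1\semi]$, use the resulting $Q$ as a competitor in \Cref{lem:ddim_dp}, and charge the $(d-1)$-item loss to $p(\OPT[\semi q])$ via $(d-1)/q\le\eps$. One small fix is needed in the running time: the bound $(n+1)^q$ is not absorbed when $n\ll 1/\eps$, because the factor $(1+1/n)^{q}$ it hides is then exponential in $1/\eps$; instead bound the number of iterations by $\sum_{j\le q}\binom{n}{j}\le (q+1)\,n^q\le O(d/\eps)\cdot n^{(d-1)/\eps+1}$.
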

\begin{proof}
	The running time bound follows from the number of iterations being at most $n^q = n^{\frac{d-1}{\eps} + O(1)}$, and each iteration running in time $\big(\frac{n}{\eps/2}\big)^{O(d)} = \left(\frac{n}{\eps}\right)^{O(d)}$ by \Cref{lem:ddim_dp}.
	
	Next we show that the returned set $\best$ is a feasible solution. Note that either $\best=\emptyset$, which is feasible, or $\best = S \cup R$ where $S$ is one of the choices iterated over in \Cref{simple:loop} and $R$ is the solution found in \Cref{simple:DP} in the same iteration. 
	Due to the condition in \Cref{simple:loop} it holds that $\weight(S)\leq \capacity$.
	Also, the set $R$ returned in \Cref{simple:DP} satisfies $\weight(R)\leq \capacity'= \capacity - \weight(S)$ by \Cref{lem:ddim_dp}. Therefore, $\weight(\best)=\weight(S\cup R) \leq \capacity$, and the algorithm returns a feasible solution. 

	As  defined in \Cref{sec:prelim},  we write $\OPT$ for an optimal solution for the instance $(I,\capacity)$, $\OPT[\semi q]$ for the $q$ most profitable items in $\OPT$ and $\OPT[q+1 \semi]$ for the remaining items in $\OPT$.

	If $\abs{\OPT}\leq q$ then in one of the iterations of \Cref{simple:loop} we have $S=\OPT$. Following this iteration it holds that $p(\best)\geq p(\OPT\cup R) \geq p(\OPT)$, and thus the algorithm returns an optimal solution.

	We are left with the case $\abs{\OPT}>q$. 
	We focus on the iteration of \Cref{simple:loop} in which $S=\OPT[\semi q]$.
	Observe that $\OPT\setminus \OPT[\semi q] = \OPT[q+1 \semi]$ is a feasible solution for the residual instance~$(I',\capacity')$.
	Applying \Cref{lem:structural} on $S = \OPT[q+1 \semi]$ and $\delta := \frac \eps 2$ shows that there exists a set $Q\subseteq \OPT[q+1 \semi]$ such that 
	$$p(Q)\geq \left(1-\frac{\eps}{2}\right)\cdot p(\OPT[q+1 \semi]) - (d-1)\cdot \max_{i\in \OPT[q+1 \semi]}p(i) \geq \left(1-\frac{\eps}{2}\right)\cdot p(\OPT[q+1 \semi]) - (d-1)\cdot p(\OPT[q+1]),$$
	and $$\weight_t(Q ) \leq  \left(1-\frac{\eps}{2}\right) \cdot \weight_t(\OPT[q+1 \semi] )\leq \left(1-\frac{\eps}{2}\right)\cdot \capacity'_t,$$ 
	for all $t\in \{1,\ldots, d-1\}$.
	Since $Q\subseteq \OPT[q+1 \semi]$  it also holds that $\weight_d(Q)\leq \weight_d(\OPT[q+1 \semi]) \leq \capacity'_d$. 
	The set $R$ returned by \Cref{lem:ddim_dp} used with error parameter $\eps/2$ in \Cref{simple:DP} thus satisfies
	$$
	\begin{aligned}
		p(R) &\geq \left(1-\frac{\eps}{2}\right)\cdot p(Q) \\
		&\geq \left(1-\frac{\eps}{2}\right) \left(\left(1-\frac{\eps}{2}\right)\cdot p(\OPT[q+1 \semi]) - (d-1)\cdot p(\OPT[q+1])\right)  \\
		&\geq (1-\eps)\cdot p(\OPT[q+1 \semi]) -(d-1)\cdot p(\OPT[q+1]).
	\end{aligned}
	$$
	Therefore,
	\begin{equation}
		\label{eq:scupr}
		p(S\cup R) \,\geq \,p(\OPT[\semi q]) +(1-\eps)\cdot p(\OPT[q+1 \semi]) -(d-1)\cdot p(\OPT[q+1]).
	\end{equation}
	As we assume the items are sorted by profit, and $\abs{\OPT}>q$, it holds that 
	$$
	p(\OPT[\semi q]) \geq q \cdot p(\OPT[q] )\geq q \cdot p(\OPT[q+1]). 
	$$
	 Therefore, 
	 $$(d-1) \cdot p(\OPT[q+1]) \leq \frac{d-1}{q} \cdot p(\OPT[\semi q]) \leq \eps \cdot p(\OPT[\semi q]).$$
	 Using this observation in \eqref{eq:scupr}, we obtain
	$p(S\cup R) \geq (1-\eps)\cdot p(\OPT)$.
	It follows that \Cref{alg:ddim_simple} returns a feasible solution with profit at least $(1-\eps)\cdot p(\OPT)$. 
\end{proof}

\section{Further Improved Algorithm for \boldmath$d$-Knapsack}
\label{sec:ddim_improved}

We further improve the running time for $d$-Knapsack to $\big(n^{\lceil \frac{d-1}{2\eps}-\frac12+\rho \rceil}+n^d\big) \cdot \big(\frac{d\log n}{\rho\eps}\big)^{O(d)}$ for any $\rho \in (0,1)$, thereby proving \Cref{thm:dalgo}.
As ingredients, we introduce an implicit representation of sets that we call virtual items in \Cref{sec:dalgo_virtual_items}, construct representative solutions via a faster algorithm for instances with slack in \Cref{sec:dalgo_representativesols} (here we will reuse the Slack-Generating Lemma from \Cref{sec:dalgo_slackgen}), and implement meet in the middle via orthogonal range searching in \Cref{sec:dalgo_ORS}. We combine these ingredients in \Cref{sec:dalgo_combine}. The details of the faster algorithm for instances with slack from \Cref{sec:dalgo_representativesols} are deferred to \Cref{sec:dalgo_proofapxlem}.

\subsection{Implicit Representation of Sets by Virtual Items}
\label{sec:dalgo_virtual_items}

To achieve the claimed running time, we sometimes need to avoid manipulating sets of items explicitly, since doing so can introduce an additional $O(n)$ factor. For this reason, we introduce an implicit representation of sets of items, which we call \emph{virtual items}.

\paragraph{Interface}
A virtual item $v$ representing a set of items $S$ is an object with the following operations:
\begin{enumerate}[label=(\arabic*)]
\item $p(v)$ returns $p(S)$ in time $O(1)$,
\item $\weight_t(v)$ returns $\weight_t(S)$ in time $O(1)$, for any given $t \in [d]$,
\item $\set(v)$ returns $S$ in time $O(|S|+1)$.
\end{enumerate}
We can create virtual items via the following primitives:
\begin{enumerate}[label=(\roman*)]
\item A virtual item $v_\emptyset$ representing $\emptyset$ can be constructed in time $O(d)$,
\item For any item $i$, a virtual item $v_{\{i\}}$ representing the set $\{i\}$ can be constructed in time $O(d)$,
\item Given virtual items $v_1,v_2$ representing disjoint sets $S_1,S_2$, a virtual item $v$ representing their union $S_1 \cup S_2$ can be constructed in time $O(d)$. We denote this operation by $v := v_1 \cup v_2$.
\end{enumerate}
This finishes the description of the \emph{interface} of virtual items. We use the term virtual item because, although it represents a set, its profit and any weight coordinate can be accessed in time $O(1)$, just as for an ordinary item.

\medskip 
We introduce some shorthand notation:
For a virtual item $v$ and a set of items $Q$ disjoint from $\set(v)$, we denote by $v \cup Q$ a virtual item representing the set $\set(v) \cup Q$. The notation $v \cup Q$ is an abbreviation of $v \cup \bigcup_{i \in Q} v_{\{i\}}$. In particular, the virtual item $v \cup Q$ can be computed in time $O(d(|Q|+1))$. 

We also extend the notation $p, \weight, \set$ to sets containing both items and virtual items. That is, for $S=T\cup\{v_1,\ldots,v_k\}$, where $T\subseteq I$ is a set of items, $v_1,\ldots,v_k$ are virtual items, and the sets $T,\set(v_1),\ldots,\set(v_k)$ are disjoint, we define
\begin{align*}
	p(S):=p(T) + \sum_{i=1}^k p(v_i), \qquad
	\weight(S):=\weight(T) + \sum_{i=1}^k \weight(v_i), \qquad
	\set(S):=T\cup\bigcup_{i=1}^k \set(v_i).
\end{align*}

\paragraph{Implementation}
The implementation details of virtual items are straightforward. We explain them in the remainder of this subsection.
In our implementation, a virtual item $v$ representing a set $S$ is an object with the following attributes:
\begin{itemize}
\item $v.p = p(S)$, the profit of the set represented by $v$,
\item $v.\weight = \weight(S)$, the weight vector of the set represented by $v$,
\item $v.\vitem$, a set of items of size at most one, that is, either $v.\vitem = \emptyset$ or $v.\vitem = \{i\}$ \smash{for an item $i$},
\item $v.\vleft$ and $v.\vright$, pointers to virtual items. Each pointer may be the null pointer $\textup{NULL}$. 
\end{itemize}
These objects adhere to the following invariants.
We ensure that there are no pointer cycles. Denote by $R(v)$ the set of virtual items that are reachable from $v$ by following $\vleft/\vright$ pointers, where $v$ is reachable from itself. We ensure that $S$ is the disjoint union of all sets $u.\vitem$ over all $u \in R(v)$. We also ensure that $|R(v)| = \max\{2|S|-1,1\}$.

\smallskip
In what follows we implement the interface using this implementation. For (1) and (2), note that we can easily read off $p(v)$ and $\weight_t(v)$ from the attributes $v.p$ and $v.\weight$ stored in the object~$v$ in time $O(1)$. 
For (3), we can compute the set $S = \set(v)$ represented by $v$ by following the $\vleft/\vright$ pointers starting at $v$ to determine the set of reachable virtual items $R(v)$ and then printing all encountered items $u.\vitem$ for all $u \in R(v)$. This terminates because there are no pointer cycles, it computes $S$ because $S$ is the disjoint union of all sets $u.\vitem$ over all $u \in R(v)$, and it runs in time $O(|R(v)|) = O(|S|+1)$.

For (i), to construct the virtual item $v = v_\emptyset$ we simply set $v.p = 0$, $v.\weight = (0,\ldots,0)$, $v.\vitem = \emptyset$, and $v.\vleft = v.\vright = \textup{NULL}$.
For (ii), to construct the virtual item $v = v_{\{i\}}$ for a given item $i$, we set $v.p = p(i)$, $v.\weight = \weight(i)$, $v.\vitem = \{i\}$, and $v.\vleft = v.\vright = \textup{NULL}$.

Note that we can check whether a virtual item represents the empty set in time $O(1)$, since any such item can reach $|R(v)| \le 1$ virtual items, that is, $v$ represents the empty set if and only if $v.\vitem = \emptyset$ and $v.\vleft = v.\vright = \textup{NULL}$.

It remains to implement (iii).
For two virtual items $v_1,v_2$ representing disjoint sets, we construct their union $v = v_1 \cup v_2$ as follows. If $v_1$ represents the empty set, then we copy $v_2$ to $v$. Else if $v_2$ represents the empty set, then we copy $v_1$ to $v$. Otherwise, we construct a new virtual item $v$ with $v.p := v_1.p + v_2.p$, $v.\weight := v_1.\weight + v_2.\weight$, $v.\vitem := \emptyset$, $v.\vleft := v_1$, and $v.\vright := v_2$.

This construction clearly ensures that there are no pointer cycles, and that the set $S = \set(v_1) \cup \set(v_2)$ represented by $v$ is the disjoint union of all sets $u.\vitem$ over all $u \in R(v)$. Finally, by our treatment of the cases where $v_1$ or $v_2$ is the empty set, it follows that if $v$ represents the empty set then $|R(v)| = 1$, and if $v$ represents a non-empty set $S$ then $R(v)$ forms a binary tree whose leaves are in one-to-one correspondence with $S$ and thus $|R(v)| = 2|S|-1$. Thus, we ensured the size bound on $R(v)$. This finishes the implementation details.

\subsection{Representative Solutions}
\label{sec:dalgo_representativesols}
\label{sec:low_profit}

We next present a highly optimized algorithm for solving instances with slack. Compared to the simple dynamic programming algorithm from \Cref{lem:ddim_dp}, the running time is improved from $(n/\eps)^{O(d)}$ to $n \cdot (\log(n)/\eps)^{O(d)}$, by adopting techniques of a modern Knapsack approximation algorithm~\cite{Chan18a}. The setup is also somewhat changed, in order to fit perfectly to the rest of our $d$-Knapsack algorithm. The proof of \cref{thm:simple-apx-data-structure} is deferred to \cref{sec:dalgo_proofapxlem}.

\begin{restatable}[Approximation Lemma; Solving Instances with Slack by Convolution]{lemma}{APXLemma} \label{thm:simple-apx-data-structure}\label{lem:ddim_dpv}
	Given a set~$I$ of $n$ items, weight estimates $\capacity_2,\ldots,\capacity_{d-1} \in \mathbb{R}_{\ge 0}$, a profit estimate $\tp \in \mathbb{R}_{\ge 0}$, and error parameters $\eps, \eta \in (0,1]$, in time $n\cdot \left(\log(n/\eta)/\eps\right)^{O(d)}$ we can compute a set $\mathcal{L}$ of virtual items with $\set(v) \subseteq I$ for every $v \in \mathcal{L}$ and the following property.
	Write
	\[
	\mathcal{Q} := \left\{ Q \subseteq I \, \middle| \,p(Q) \ge \eta \, \tp,\, \weight_t(Q) \le \eta^{-1} \capacity_t \textnormal{ for each } t \in \{2,\ldots,d-1\} \right\}.
	\]
	Then for every $Q \in \mathcal{Q}$, there exists a virtual item $v \in \mathcal{L}$ that satisfies:
	\begin{align*}
		p(v)& \geq \min\left\{(1-\eps)\, p(Q), \eta^{-1} \tp\right\},\nonumber\\
		\weight_1(v)& \leq (1+\eps)\, \weight_1(Q),\nonumber\\
		\weight_{d}(v) & \leq   \weight_d(Q),\text{ and }\nonumber\\
	\weight_t(v)& \leq \max\left\{(1+\eps)\, \weight_t(Q), \eta \, \capacity_t\right\}\text{ for each } 1 < t < d.\\
	\end{align*}
\end{restatable}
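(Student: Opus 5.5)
The plan is to adapt the divide-and-conquer / Pareto-front approach of Chan's Knapsack approximation to $d$ dimensions. We maintain, for a set of items, a small family of virtual items that approximately dominates every relevant subset. Dimension $d$ is treated exactly: it is the coordinate we minimize. Profit and dimensions $1,\ldots,d-1$ are rounded geometrically.

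\textbf{Step 1: the rounding grid.} Restrict attention to profits in $[\eta\tp/n^{O(1)},\eta^{-1}\tp]$ and, for $2\le t\le d-1$, to weights in $[\eta\capacity_t/n^{O(1)},\eta^{-1}\capacity_t]$. Anything above $\eta^{-1}\tp$ in profit is truncated, which is where the $\min\{\cdot,\eta^{-1}\tp\}$ comes from. Anything below $\eta\capacity_t$ in weight is rounded up to $\eta\capacity_t$, which is where the $\max\{\cdot,\eta\capacity_t\}$ comes from. Coordinate $1$ has no estimate, so we instead round it to powers of $(1+\eps')$ relative to its own value, with no truncation. We restrict profits and weights to powers of $(1+\eps')$, with $\eps'=\eps/\Theta(\log n)$. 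Then coordinates $p,\weight_2,\ldots,\weight_{d-1}$ range over $(\log(n/\eta)/\eps)^{O(1)}$ grid values each.

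\textbf{Step 2: recursive construction.} Split $I$ into two halves and recursively compute lists $\mathcal L_1,\mathcal L_2$. Combine them by taking all unions $v_1\cup v_2$, represented as virtual items in $O(d)$ time each. Then prune: for each cell of rounded $(p,\weight_2,\ldots,\weight_{d-1})$, keep only a Pareto front in $(\weight_1,\weight_d)$. We keep $\weight_d$ exact and round $\weight_1$ to powers of $(1+\eps')$ within a range of $\mathrm{poly}(n)/\eta$ relative to the minimal nonzero value. This keeps the list size at $(\log(n/\eta)/\eps)^{O(d)}$ at every node. At the leaves we take $v_\emptyset$ and $v_{\{i\}}$.

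\textbf{Step 3: correctness.} For any $Q\in\mathcal Q$, induct over the recursion tree along the split $Q=(Q\cap I_1)\cup(Q\cap I_2)$. At each level the combined representative loses at most a factor $(1+\eps')$ in profit and in weights $1,\ldots,d-1$, with additive truncation terms. Dimension $d$ is never worse, because pruning keeps the item with minimum $\weight_d$ among those whose other coordinates dominate after rounding. Over $\log n$ levels, $(1+\eps')^{\log n}\le 1+\eps$. For the truncated coordinates, once a partial sum exceeds $\eta^{-1}\tp$, capping is harmless. Once a partial weight sits below the floor, the floor is at most $\eta\capacity_t$ overall, but only if floor rounding is applied once rather than accumulating over levels. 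I plan to handle this by charging the floor additively as $\eta\capacity_t/n$ per node, which is permitted by the polynomial range.

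\textbf{Running time.} Naively, combining two lists of size $s$ costs $s^2$ per node. The total is then $n\cdot(\log(n/\eta)/\eps)^{O(d)}$, since there are $O(n)$ nodes and $s^2$ is still of that form. This already matches the bound, so Chan's faster convolution tricks are unnecessary at this level of precision.

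\textbf{Main obstacle.} The hard part is the Pareto pruning in the two unrestricted dimensions $1$ and $d$. Dimension $d$ must be kept exact, yet the front size must stay bounded. My first attempt: bucket by rounded $\weight_1$ within each cell and keep the minimum-$\weight_d$ element per bucket. This needs $\weight_1$ values to lie in a polynomial ratio range. I would enforce this by first guessing the scale, or by treating tiny $\weight_1$ values as relative to the largest item weight in $Q$, charged multiplicatively across levels.
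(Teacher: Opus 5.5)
There is a genuine gap, in coordinate~$1$. Your handling of profit, of coordinates $2,\ldots,d-1$, and of the exact coordinate $d$ matches the paper's proof: a geometric grid with step $1+\eps/\Theta(\log n)$, a minimum-$\weight_d$ union per cell, merged up a balanced binary tree. Capping profit at $\eta^{-1}\tp$ can also stand in for the paper's device of emitting items with $p(i)\ge\eta^{-1}\tp$ as singletons. You flag coordinate~$1$ but do not resolve it, and your Step~2 claim that every list, in particular the output list, has size $(\log(n/\eta)/\eps)^{O(d)}$ is impossible.

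Take $d=2$, $\tp=1$, any $\eps<1$, and items $k=1,\ldots,n$ with $p(k)=1$, $\weight_1(k)=4^k$, $\weight_2(k)=4^{-k}$. For $Q=\{k\}\in\mathcal{Q}$, any valid $v$ satisfies:
\begin{itemize}
\item $v$ is nonempty, since $p(v)\ge 1-\eps>0$;
\item $v$ contains only indices $\le k$, since $\weight_1(v)\le(1+\eps)4^k<4^{k+1}$;
\item $v$ contains only indices $\ge k$, since $\weight_2(v)\le 4^{-k}$.
\end{itemize}
Hence $\set(v)=\{k\}$, and $\mathcal{L}$ needs $n$ distinct virtual items. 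This rules out each route you mention:
\begin{itemize}
\item A $\poly(n)/\eta$ window for $\weight_1$ ``relative to the minimal nonzero value'' discards required representatives.
\item An unrestricted $\weight_1$-grid gives lists of size $\Omega(m)$ at a node with $m$ items, so your pairwise merge costs $\Theta(n^2)$ at the root.
\item Taking the scale relative to the node's largest $\weight_1$ leaves an additive error of order $\eta\max_j\weight_1(j)$ over all items at the root. This cannot be charged to $\weight_1(Q)$ when $Q$ avoids the heavy items.
\item Guessing the heaviest item of $Q$ and rebuilding from scratch for each of the $n$ guesses costs $n^2$.
\end{itemize}

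The missing idea is how to make that guess cheaply. The paper places the high-profit items first and orders the rest by non-increasing $\weight_1$. It then produces a separate list $\mathcal{L}_i$ of size $(\log(n/\eta)/\eps)^{O(d)}$ for every suffix $\{i,\ldots,n\}$. Each list uses a coordinate-$1$ rounding scale proportional to $\max_{j\ge i}\weight_1(j)=\weight_1(i)$.

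It computes the grid functions (cell $\mapsto$ minimum-$\weight_d$ virtual item) for all dyadic blocks bottom-up once. Each suffix function is then obtained by convolving the $O(\log n)$ maximal dyadic blocks covering the suffix. The total time therefore stays $n\cdot(\log(n/\eta)/\eps)^{O(d)}$.

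For a given $Q$, use the suffix starting at $i=\min(Q)$. There $\max_{j\ge i}\weight_1(j)=\weight_1(i)\le\weight_1(Q)$, so the additive error in coordinate~$1$ is absorbed into the factor $1+\eps$. The output is the union of all suffix lists plus the high-profit singletons. It has size $n\cdot(\log(n/\eta)/\eps)^{O(d)}$, consistent with the lower bound from the example above.
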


Next, we use the approximate solutions provided by \Cref{thm:simple-apx-data-structure} to generate a list of representative solutions, which we will use as surrogates for the low-profit items in our algorithm. 
For simplicity, we assume that the items in $I$ are indexed by $1,\ldots,n$, that is, $I = \{1,\ldots,n\}$.
As usual, we also assume that items are sorted by non-increasing profits, that is, $p(1) \ge \ldots \ge p(n)$.

\begin{lemma}[Representative Solutions]
	\label{lem:low_profit}
	Given a set of items $I = \{1,\ldots,n\}$ sorted by non-increasing profits and an error parameter
	$0<\eps<1$, in time
	$n^{d} \cdot \big(\frac{ \log n }{\eps}\big)^{O(d)}$ we can compute sets of virtual items $\cL_1,\ldots,\cL_{n+1}$ such that for every $i \in [n+1]$:
	\begin{itemize}
		\item $\abs{\cL_i}\le n^{d-1} \cdot  \big(\frac{  \log n}{\eps}\big)^{O(d)}$,
		\item For every $v\in \cL_i$ it holds that $\set(v) \subseteq \{i,\ldots,n\}$,
		\item $v_\emptyset \in \cL_i$,
		\item For every $S\subseteq \{i,\ldots, n\}$ there exists
		$v\in\cL_i$ such that $\weight(v)\le \weight(S)$, 
		\[
		\begin{aligned}
		p(v)\,&\ge\, (1-\eps)\cdot p(S) - \left(d-1+\frac{\eps}{n}\right)\cdot p(i), \quad \textnormal{ and } \\
		p(v)\,&\ge\, \frac{1-\eps}{d} \cdot p(S) - \frac{\eps}{n} \cdot p(i),
		\end{aligned} 
		\]
		where we interpret $p(n+1) = 0$.
	\end{itemize}
\end{lemma}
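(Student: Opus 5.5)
For each suffix start $i\in[n+1]$, I plan to call the Approximation Lemma (\Cref{thm:simple-apx-data-structure}) on the item set $I_i:=\{i,\ldots,n\}$ for a discretized family of parameter choices. I then let $\cL_i$ consist of $v_\emptyset$ together with all virtual items returned by these calls. The slack the Approximation Lemma requires (in dimensions $1,\ldots,d-1$, but not in dimension $d$) comes from the Slack-Generating Lemma (\Cref{lem:structural}).

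\textbf{Parameters.} Fix $\delta:=\eps/4$, and call the Approximation Lemma with error $\eps':=\delta$ and $\eta:=\eps/n^{3}$ (or a similar polynomially small value). The estimates $\capacity_2,\ldots,\capacity_{d-1}$ range over values $\weight_t(j)\cdot 2^{k}$ with $j\in I_i$ and $k\in\{0,\ldots,O(\log n)\}$; these are the guesses for the largest weight in dimension $t$ of the target set, up to a factor $2$. Similarly, $\tp$ ranges over $p(j)\cdot 2^k$ for $j\in I_i$. This gives $n^{d-1}(\log n)^{O(d)}$ calls per $i$, each taking time $n(\log(n)/\eps)^{O(d)}$. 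The set $\cL_i$ therefore has the claimed size. One issue is that summing over all $i$ gives an extra factor $n$, producing $n^{d+1}$. To fix this, I would use only the weight estimates for dimensions $2,\ldots,d-1$ ($d-2$ of them) together with the profit estimate, which gives $n^{d-1}$ combinations. If needed, $\tp$ can be tied to $p(i)$ scaled by $2^{k}$, since the additive loss is in terms of $p(i)$ anyway. This yields $n^{d-2}$ calls per $i$, about $n^{d-1}$ time per $i$, and $n^{d}$ in total.

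\textbf{Correctness.} Fix $S\subseteq I_i$. If $p(S)$ is tiny (at most $\frac{\eps}{n}p(i)$), take $v_\emptyset$. Otherwise, apply \Cref{lem:structural} to $S$ with $\delta$ to get $Q\subseteq S$ with $\weight_t(Q)\le(1-\delta)\weight_t(S)$ for $t\le d-1$, $p(Q)\ge(1-\delta)p(S)-(d-1)p(i)$, and the third property. Choose the guesses so that $\tp\approx p(Q)$ up to a factor $2$, and so that $\capacity_t$ is at least $\eta\cdot\weight_t(Q)$ while $\eta\capacity_t\le\delta\,\weight_t(S)$. Then the Approximation Lemma returns $v$ with:
\begin{itemize}
\item $\weight_1(v)\le(1+\delta)(1-\delta)\weight_1(S)\le\weight_1(S)$;
\item $\weight_t(v)\le\max\{(1+\delta)\weight_t(Q),\eta\capacity_t\}\le\weight_t(S)$ for $1<t<d$;
\item $\weight_d(v)\le\weight_d(Q)\le\weight_d(S)$;
\item $p(v)\ge(1-\delta)p(Q)$, because $\eta^{-1}\tp\ge p(Q)$.
\end{itemize}
Combining the profit bound with property 2 gives the first inequality. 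Property 3 gives the second, using the alternative candidate $v_{\{j\}}$ for the max-profit item $j\in S$ when the single item dominates; so I would also add all singletons to $\cL_i$. If $Q$ has zero weight in some dimension $t$, the weight guess may not exist; the $\max\{\cdot,\eta\capacity_t\}$ slack must then be handled with $\capacity_t$ chosen so that $\eta\capacity_t\le\delta\weight_t(S)$. If $\weight_t(S)=0$, this requires a special zero guess. I would include $\capacity_t=0$ among the guesses, so that $\weight_t(Q)=0$ forces the output weight to $0$.

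\textbf{Main obstacle.} I expect the hardest step to be the discretization. The difficulty is making the parameter grid small enough to meet the $n^{d}$ total time and $n^{d-1}$ list size, while still guaranteeing for every $S$ a guess with $\eta\capacity_t\le\delta\weight_t(S)$ and $\weight_t(Q)\le\eta^{-1}\capacity_t$. The $\frac{\eps}{n}p(i)$ additive term suggests the intended approach: anchor the profit estimate to $p(i)$, with $\tp$ ranging over $p(i)\cdot 2^{k}$ for $O(\log(n/\eps))$ values of $k$. Sets with $p(S)<\frac{\eps}{n}p(i)$ are then absorbed by $v_\emptyset$. Under this approach, only the $d-2$ weight estimates cost factors of $n$ each.
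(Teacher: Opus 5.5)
Your plan is correct and is essentially the paper's proof. For each suffix start $i$, the paper seeds $\cL_i$ with $v_\emptyset$ and the singletons $v_{\{i\}},\ldots,v_{\{n\}}$. It then calls the Approximation Lemma on $\{i,\ldots,n\}$ with $\tp=p(i)$, $\eta=\eps/n$ and $\capacity_t=\weight_t(i_t)$ for every choice of $i_2,\ldots,i_{d-1}$. Finally it applies the Slack-Generating Lemma with the same three cases you use:
\begin{itemize}
\item the Approximation Lemma output, when $p(Q)$ dominates and $p(Q)>\frac{\eps}{n}p(i)$;
\item $v_\emptyset$, when $p(Q)$ dominates but $p(Q)\le\frac{\eps}{n}p(i)$;
\item the singleton of the most profitable item of $S$, otherwise.
\end{itemize}

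Your extra $2^k$ scalings and profit guesses are harmless but unnecessary. Taking $i_t$ to be the maximum-weight item of $Q$ in dimension $t$ already gives $\capacity_t\le\weight_t(Q)\le n\capacity_t$. This also covers $\weight_t(Q)=0$ without a special zero guess. Likewise, $\tp=p(i)$ already gives $\eta\tp<p(Q)\le\eta^{-1}\tp$ in the relevant case. Finally, it is cleaner to trigger the $v_\emptyset$ case by $p(Q)\le\frac{\eps}{n}p(i)$ rather than by $p(S)$.
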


The lemma shows that for any set of items $S$, among the representative solutions one can find a surrogate virtual item $v$ that can replace $S$ while preserving feasibility and incurring only a small loss in profit.

\begin{proof}[Proof of \Cref{lem:low_profit}]
The pseudocode of our algorithm is given in \Cref{alg:low_profit}. The loop in \Cref{lowprofit:loop} has $n^{d-1}$ iterations. The running time of each iteration is dominated by the call to \Cref{lem:ddim_dpv}, which takes time $n\cdot (\log(n/\eta)/\eps')^{O(d)}=n\cdot (\log(n)/\eps)^{O(d)}$.  Thus, the total running time of \Cref{alg:low_profit} is $n^d\cdot (\log(n)/\eps)^{O(d)}$.

Note that the algorithm computes $\cL_{n+1} = \{ v_\emptyset \}$, and this set satisfies all claims. Thus, in what follows we can restrict our attention to $i \in [n]$.

We next bound the size of the sets returned by the algorithm. Since the output of each call to the procedure of \Cref{lem:ddim_dpv} is explicitly constructed within its stated running time, each such call returns at most $n\cdot (\log(n/\eta)/\eps')^{O(d)}=n\cdot (\log(n)/\eps)^{O(d)}$ virtual items. For each $i\in I$, the set $\cL_i$ initially contains the $n-i+2$ items added in \Cref{lowprofit:init}. 
It is then extended by the outputs of $n^{d-2}$ calls to  \Cref{lem:ddim_dpv}, one for each choice of $i_2,\ldots,i_{d-1}$.
 Consequently, $\lvert\cL_i\rvert\leq n^{d-2}\cdot n\cdot (\log(n)/\eps)^{O(d)}=n^{d-1}\cdot (\log(n)/\eps)^{O(d)}$. Therefore, \Cref{alg:low_profit} runs in time $n^d\cdot (\log(n)/\eps)^{O(d)}$ and returns sets $\cL_1,\ldots,\cL_{n+1}$, each containing at most $n^{d-1}\cdot (\log(n)/\eps)^{O(d)}$ virtual items.
Also, by construction, for every $i\in I$ and $v\in \cL_i$ we have $\set(v) \subseteq\{i,i+1,\ldots, n\}$, and $v_\emptyset \in \cL_i$ for all $i$.

\begin{algorithm}[t]
	\caption{$\lowprofit(I = \{1,\ldots,n\},\eps)$}
	\label{alg:low_profit}
	
	$\eps' \leftarrow \frac{\eps}{10}$
	
	$\cL_i \gets \{ v_{\emptyset }, v_{\{i\}},v_{\{i+1\}},\ldots, v_{\{n\}}\}$ for all $i\in [n+1]$ \label{lowprofit:init}

	\For{each $i\in I$ and each  $i_2,\ldots,i_{d-1} \in I$\label{lowprofit:loop}}{
		
		$I' \gets \{i'\in I \,|\, i' \ge i \}$ \label{lowprofit:IR}
		
		$\tp \gets  p(i)$
		
		$\capacity_t \gets  \weight_t(i_t)$ for all $t \in \{2,\ldots, d-1\}$
		
		Apply the procedure from \Cref{lem:ddim_dpv} to $I', \capacity_2,\ldots,\capacity_{d-1}, \tp$, and error parameters $\eps'$ and $\eta := \frac \eps n$; let $\tilde{\cL}$ be the result \label{lowprofit:DP}
		
		$\cL_i \leftarrow \cL_i \cup \tilde{\cL}$

	}
	
	Return $\cL_1,\ldots, \cL_{n+1}$ 
		
\end{algorithm}

We proceed to prove  the last property of the lemma. 
Let $i\in I$ and $S\subseteq \{i,i+1,\ldots, n\}$. 
	If $S = \emptyset$ then it is covered by $v_\emptyset \in \cL_i$. 
	Otherwise, by \Cref{lem:structural} (with error parameter $\eps'$) there exists a set $Q\subseteq S$ such that
\begin{equation}
	\label{eq:Q_weight}
	\forall t\in \{1,\ldots,d-1\}:\qquad \weight_t(Q) \leq  (1-\eps') \cdot \weight_t(S), \end{equation}
as well as   \begin{equation}
	\label{eq:Q_profit}
	p(Q)\geq (1-\eps') \cdot p(S) - (d-1)\cdot \max_{i' \in S} p(i') \geq (1-\eps') \cdot p(S) - (d-1)\cdot p(i)\end{equation}
and 
\begin{equation}
	\label{eq:Q_profit_caprara} 
	\max\big\{ p(Q), \max_{i' \in S} p(i') \big\} \ge \frac {1-\eps'}d \cdot p(S).
\end{equation}
Note that since $Q\subseteq S$ we also have $\weight_d(Q)\leq \weight_d(S)$.  Consider the following cases.  

\paragraph*{Case 1: $p(Q) =\max\left\{p(Q), \max_{i'\in S} p(i')\right\}$ and $p(Q) > \frac{\eps}{n}\cdot p(i)$.} 	
	For every $t\in \{2,\ldots, d-1\}$ let $i_t = \argmax_{i'\in Q} \weight_t(i')$ be an item of maximum weight in dimension~$t$. 
		We focus on the iteration of  \Cref{lowprofit:loop} that considers $i$ and  $i_2,\ldots, i_{d-1}$. 
		By assumption, $p(Q) > \frac{\eps}{n}\cdot p(i) = \eta \, \tp$. 
		Also, $\weight_t(Q)\leq \abs{Q}\cdot \weight_t(i_t) \leq n \cdot \capacity_t \leq \eta^{-1} \capacity_t$ for every $t\in \{2,\ldots, d-1\}$.
		Therefore, by  \Cref{lem:ddim_dpv} there is $v\in \tilde{\cL}$ such that 
		\begin{equation}
			\label{eq:v_basics}
			\begin{aligned}
			p(v)& \geq \min\left\{(1-\eps')\, p(Q), \eta^{-1} \tp\right\},\\
			\weight_1(v)& \leq (1+\eps')\, \weight_1(Q),\\
			\weight_{d}(v) & \leq   \weight_d(Q),\text{ and }\\
			\weight_t(v)& \leq \max\left\{(1+\eps')\,  \weight_t(Q), \eta \, \capacity_t\right\}\text{ for each } 1 < t < d.
			\end{aligned}
		\end{equation}
		For the first dimension this implies that 
		$$\weight_1(v)\overset{\eqref{eq:v_basics}}{\leq} (1+\eps') \cdot \weight_1(Q)\overset{\eqref{eq:Q_weight}}{\leq} (1+\eps')(1-\eps') \cdot \weight_1(S) \leq \weight_1(S). $$ 
		
		For all $t\in \{2,\ldots, d-1\}$, as $i_t\in Q$ we have $\weight_t(Q) \geq \weight_t(i_t)=\capacity_t$ and therefore
		$$
		\weight_t(v)\leq \max\left\{  (1+\eps')\cdot \weight_t(Q), \eta \capacity_t\right\}  =(1+\eps') \cdot \weight_t(Q)\overset{\eqref{eq:Q_weight}}{\leq} (1+\eps')(1-\eps') \cdot \weight_t(S) \leq \weight_t(S).
		$$
		Since it also holds that $\weight_d(v)\leq \weight_d(Q) \leq \weight_d(S)$, we obtain $\weight(v)\leq \weight(S)$. 
		By \eqref{eq:v_basics} we have
		$$
			p(v) \geq \min\left\{(1-\eps')\, p(Q), \eta^{-1} \tp\right\} = (1-\eps')\,p(Q),
		$$
		where the equality holds as $p(Q)\leq |Q|\cdot p(i) \leq n\cdot p(i) = n \cdot \tp \le \eta^{-1} \tp$. 
		Combining the above with  \eqref{eq:Q_profit} and~\eqref{eq:Q_profit_caprara}, we get that 
		$$
		p(v) \geq (1-\eps') \cdot \left( (1-\eps') \cdot p(S)-(d-1)\cdot p(i) \right)  \geq (1-\eps)\cdot p(S)- \left(d-1+\frac{\eps}{n}\right)\cdot p(i)$$
		and 
		$$
			p(v) \geq (1-\eps') \cdot \frac{ 1-\eps'}{d} p(S)   \geq \frac{1-\eps}{d} p(S) -\frac{\eps}{n}\,p(i).
		$$
		That is, we showed that the last property of the lemma holds in this case.

		\paragraph*{Case 2: $p(Q)= \max\{ p(Q), \, \max_{i'\in S} p(i')\}$ and $p(Q)\leq \frac{\eps}{n}\cdot p(i)$.} By construction  $v_{\emptyset}\in \cL_i$ and it holds that 
		$$
		p(v_{\emptyset}) =0 \geq p(Q)-\frac{\eps}{n}\cdot p(i) \overset{\eqref{eq:Q_profit}}{\geq} (1-\eps )\cdot p(S) -\left( d-1 +\frac{\eps}{n}\right) \cdot p(i),
		$$
		and similarly,
		$$
		p(v_{\emptyset}) =0 \geq p(Q)-\frac{\eps}{n}\cdot p(i) =\max\{ p(Q), \, \max_{i'\in S} p(i')\} - \frac{\eps}{n}\cdot p(i) \overset{\eqref{eq:Q_profit_caprara}}{\geq} \frac{1-\eps}{d}\cdot p(S) -\frac{\eps}{n}\cdot p(i). 
		$$
		Finally,  it trivially holds that $\weight(v_{\emptyset} ) \leq \weight(S)$.
		That is, the lemma holds in this case. 
\paragraph*{Case 3: $p(Q) \neq \max\{ p(Q), \, \max_{i'\in S} p(i')\}$.} Then there exists an item $i^*\in S$ such that $p(i^*)= 	\max\big\{ p(Q), \max_{i' \in S} p(i') \big\}$. By construction we have $v_{\{i^*\}}\in \cL_i$ (see \Cref{lowprofit:init}). Furthermore,
$$
p(v_{\{i^*\}}) = p(i^*) = \max\big\{ p(Q), \max_{i' \in S} p(i') \big\} \overset{\eqref{eq:Q_profit_caprara}}{\geq } \frac{1-\eps'}{d} \cdot p(S) \geq \frac{1-\eps}{d} \cdot p(S) -\frac{\eps}{n}\cdot p(i)
$$
and 
$$
\begin{aligned}
	p(v_{\{i^*\}}) = p(i^*) &= \max\big\{ p(Q), \max_{i' \in S} p(i') \big\} \ge p(Q) \\
	&\overset{\eqref{eq:Q_profit}}{\geq }  (1-\eps') \cdot p(S) - (d-1)\cdot p(i)
	\\
	&\geq  (1-\eps) \cdot p(S) - \left(d-1+\frac{\eps}{n}\right)\cdot p(i).
\end{aligned}
$$
Furthermore, as $i^*\in S$ we also have $\weight(v_{\{i^*\}})=\weight(i^*)\leq \weight(S)$. That is, the last property of the  lemma holds in this case as well.
Overall, we showed that the last property holds in all cases, which completes the proof.
\end{proof}

\subsection{Meet in the Middle via Orthogonal Range Searching}
\label{sec:dalgo_ORS}

We say that two sets of items $S_1,S_2 \subseteq I$ can be \emph{matched} if $\weight(S_1) + \weight(S_2) \le \capacity$ and $\max(S_1) < \min(S_2)$. 
That is, we require that the total weights of $S_1$ and $S_2$ do not exceed the capacities, and the maximum index of an item in $S_1$ is less than the minimum index of an item in $S_2$.
The former ensures that $S_1 \cup S_2$ is feasible, while the latter ensures that $S_1$ and $S_2$ are disjoint, and thus $p(S_1 \cup S_2) = p(S_1) + p(S_2)$. 

Next we write this notion of \emph{matching} in a geometric language. To this end, we replace $S_1$ by a point $\embed(S_1)$, and we replace $S_2$ by a box $\range_{\capacity}(S_2)$ such that $\embed(S_1) \in \range_{\capacity}(S_2)$ if and only if $S_1$ and $S_2$ can be matched.

\begin{definition}[Embedding and Range]
	\label{def:embedding}
	The {\em embedding} $\embed(S)$ of $S\subseteq I$ is the point $\bar{x}\in (\mathbb{R} \cup \{-\infty\})^{d+1}$ defined by
	\begin{itemize}
		\item  $\bar{x}_t= \weight_t(S)$ for every $1\leq t\leq d$ and
		\item  $\bar{x}_{d+1} =\max(S)$.
	\end{itemize}
	
	\noindent
	The {\em range} $\range_{\capacity}(S)$ of $S\subseteq I$ is the $(d+1)$-dimensional box $R=[0,b_1]\times \ldots \times [0,b_d] \times [-\infty,b_{d+1}]$ defined by
	\begin{itemize}
		\item   $b_t=\capacity_t-\weight_t(S)$ for every $1\leq t\leq d$ and
		\item  $b_{d+1} =\min(S) -1$.
	\end{itemize}
\end{definition}

Observe that in case $S$ is not feasible (i.e., $\weight_t(S)>\capacity_t$ for some $t\in \{1,\ldots, d\}$) then its range is empty. One can easily observe the following.
\begin{lemma}
	\label{lem:embed_to_range}
  For any sets of items $S_1,S_2 \subseteq I$, we have $\embed(S_1) \in \range_{\capacity}(S_2)$ if and only if $S_1$ and $S_2$ can be matched (i.e., $\weight(S_1) + \weight(S_2) \le \capacity$ and $\max(S_1) < \min(S_2)$).
\end{lemma}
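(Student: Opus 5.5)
The plan is to unfold both sides coordinatewise from \Cref{def:embedding} and compare them one constraint at a time. Write $\bar{x} = \embed(S_1)$ and $R = \range_{\capacity}(S_2) = [0,b_1]\times\cdots\times[0,b_d]\times[-\infty,b_{d+1}]$. Then $\bar{x}\in R$ is the conjunction of $0 \le \weight_t(S_1) \le \capacity_t - \weight_t(S_2)$ for each $t\in[d]$, together with $-\infty \le \max(S_1) \le \min(S_2)-1$. I will show that the first $d$ conditions are equivalent to $\weight(S_1)+\weight(S_2)\le\capacity$, and that the last one is equivalent to $\max(S_1)<\min(S_2)$.

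For the weight coordinates, the lower bound $0 \le \weight_t(S_1)$ holds automatically, since all weights are nonnegative. The upper bound $\weight_t(S_1) \le \capacity_t - \weight_t(S_2)$ rearranges to $\weight_t(S_1)+\weight_t(S_2)\le \capacity_t$. Taking all $t\in[d]$ together gives exactly $\weight(S_1)+\weight(S_2)\le\capacity$.

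For the index coordinate, $-\infty \le \max(S_1)$ is trivial. It remains to show that $\max(S_1)\le \min(S_2)-1$ is equivalent to $\max(S_1)<\min(S_2)$.
\begin{itemize}
\item When both sets are nonempty, indices are positive integers, so for integers $a<b$ is equivalent to $a\le b-1$.
\item The main point needing care is the empty-set conventions $\max(\emptyset)=-\infty$ and $\min(\emptyset)=\infty$, with $\infty - 1 = \infty$. If $S_1=\emptyset$, both conditions read $-\infty \le \cdot$ and $-\infty<\cdot$, which always hold. If $S_2=\emptyset$, we have $b_{d+1}=\infty$ and $\min(S_2)=\infty$, so both conditions hold again. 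The case where both are empty is covered by the same reasoning.
\end{itemize}

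Combining the weight and index equivalences gives the lemma. No step is a real obstacle; the only subtlety is handling these boundary conventions consistently.
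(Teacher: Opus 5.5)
Your proposal is correct and matches the paper's approach: the paper states this lemma as an easy observation without proof, and your coordinatewise unfolding of \Cref{def:embedding} is exactly the intended argument. Your explicit treatment of the conventions $\max(\emptyset)=-\infty$ and $\min(\emptyset)=\infty$ (with $\infty-1=\infty$), together with the integrality of indices, covers the only points the paper leaves implicit.
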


This geometric setup allows us to use orthogonal range searching from computational geometry. 

\begin{lemma}[Orthogonal Range Searching (ORS) \cite{BergCKO08}]
	\label{lem:kvors}
	For every $d\in \mathbb{N}$, 
	given a set $\cC$ of $n$ objects,
	where each object $S\in\cC$ is associated with a point
	$\bar{x}_S\in(\mathbb{R} \cup \{-\infty,\infty\})^d$ and a profit  $p_S\in\mathbb{R}$, in
	time $O(n\log^{d-1} n)$ we can build a data structure that supports the following query in time $O(\log^d n)$:
	Given an axis-aligned range
	$R = [a_1,b_1]\times\cdots\times[a_d,b_d]$,
	the query returns an object $S\in\cC$ such that
	$\bar{x}_S\in R$ and $p_S$ is maximized among all such objects.
	If no object $S \in \cC$ satisfies $\bar{x}_S\in R$, the query returns~$\bot$.
\end{lemma}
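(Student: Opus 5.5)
This is the standard max-augmented multilevel range tree of de Berg et al., so the plan is to recall that construction and verify its bounds, proceeding by induction on the dimension $d$.

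For the base case $d=1$, we sort the $n$ objects by their single coordinate in time $O(n\log n)$. We then build a balanced binary search tree over them. Each node stores the object of maximum profit in its subtree, which a bottom-up pass computes in linear time. A query $[a_1,b_1]$ is answered as follows:
\begin{enumerate}
\item Search for $a_1$ and $b_1$ to identify $O(\log n)$ canonical subtrees whose disjoint union is exactly the set of objects with coordinate in the range.
\item Return the stored maximum over these canonical subtrees, or $\bot$ if there are none.
\end{enumerate}
This costs $O(\log n)$ per query. Coordinates equal to $\pm\infty$ need no special treatment, since they sort correctly in the extended reals and closed intervals with infinite endpoints behave as expected. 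Ties in a coordinate are broken by an arbitrary fixed secondary order, so that canonical decompositions stay well defined.

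For the inductive step $d\ge 2$, we build a balanced tree on the first coordinate. At every node $u$ we attach an associated $(d-1)$-dimensional structure, built by induction on the points of $u$'s subtree projected to coordinates $2,\dots,d$, keeping the same profits.

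A query $R$ proceeds in three stages:
\begin{enumerate}
\item Decompose $[a_1,b_1]$ into $O(\log n)$ canonical nodes.
\item Query the associated structure of each canonical node with $[a_2,b_2]\times\cdots\times[a_d,b_d]$.
\item Return the best of the answers, or $\bot$ if all of them are $\bot$.
\end{enumerate}
Correctness follows because the canonical subtrees partition exactly the objects satisfying the first-coordinate constraint. The maximum of the per-subtree maxima is therefore the global maximum over $R$. The query time satisfies $Q_d(n)=O(\log n)\cdot Q_{d-1}(n)$, giving $O(\log^d n)$.

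For preprocessing, each object appears in $O(\log n)$ associated structures, one per level of the first-coordinate tree. Summing the inductive bound $O(m\log^{d-2} m)$ over all nodes, level by level, gives $O(n\log^{d-1} n)$. This requires that the base case be built in linear time after sorting, which we achieve by presorting the objects once per coordinate and passing sorted sublists down the recursion, rather than re-sorting at every node. This accounting, namely keeping the base-level construction linear so that the $\log$ factors come out as $d-1$ and not $d$, is the only delicate point. Everything else is standard, so in the paper I would simply cite \cite{BergCKO08} for the construction, with the remark that storing subtree maxima at the base level replaces reporting all points in the range.
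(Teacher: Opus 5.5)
Your proposal is correct and is exactly the max-augmented multilevel range tree (with presorting to keep the preprocessing at $O(n\log^{d-1}n)$) from \cite{BergCKO08}, which is all the paper relies on, since it gives no proof beyond the citation. The only caveat is $d=1$: there your base case costs $O(n\log n)$ rather than the literal $O(n\log^{0}n)=O(n)$. No comparison-based structure with polylogarithmic query time can do better, so the statement really needs $d\ge 2$ (as in de Berg et al.), which is harmless because the paper only applies it in dimension $d+1\ge 3$ and in dimension $3$.
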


This data structure can be used to implement meet in the middle for $d$-Knapsack along the following lines.
Let ${\cal S}_1, {\cal S}_2$ be some families of sets of items.
Build an ORS data structure containing for each $S_1 \in {\cal S}_1$ the object $S_1$ with point $\embed(S_1)$ and profit $p(S_1)$. 
Then for any $S_2 \in {\cal S}_2$ issue the query $\range_{\capacity}(S_2)$. This returns a set $S_1 \in {\cal S}_1$ maximizing $p(S_1)$ subject to the constraint that $S_1$ can be matched with $S_2$. That is, in time $O(\log^{d+1} |{\cal S}_1|)$ we obtain the optimal match for $S_2$ among all sets in ${\cal S}_1$. 
In particular, we can determine the pair $(S_1,S_2) \in {\cal S}_1 \times {\cal S}_2$ that can be matched and maximizes the total profit in time $O((|{\cal S}_1| + |{\cal S}_2|) \log^{d+1} |{\cal S}_1|)$, which significantly improves over naively checking all pairs of sets in ${\cal S}_1$ and ${\cal S}_2$.

\subsection{Fast Algorithm for \boldmath$d$-Knapsack}
\label{sec:dalgo_combine}
\label{sec:ddim_mitm}

The main bottleneck of the simple algorithm $\simpleddim$ (\Cref{alg:ddim_simple}) is the
enumeration of the $\big\lceil \frac{d-1}{\eps} \big\rceil$ most profitable items
of an optimal solution. We reduce this cost by adopting a meet-in-the-middle strategy, which integrates the geometric language presented in \Cref{sec:dalgo_ORS} together with the representative solutions from \Cref{sec:dalgo_representativesols}. Conceptually, our meet-in-the-middle approach matches one of the representative  solutions and the set $\OPT[q-\ell_2+1:q]$ on the one side with the set $\OPT[\semi q-\ell_2]$ on the other side, where the values of $q$ and $\ell_2$ are carefully selected to ensure the running time and approximation ratio. See \Cref{alg:mitmddim} for detailed pseudocode of our algorithm.

\begin{algorithm}[t]
	\caption{$\mitmddim(I=\{1,\ldots,n\},\capacity,\eps,\rho)$}
	
	Sort items by non-increasing profits, that is, $p(1) \ge \ldots \ge p(n)$.
	
	\label{alg:mitmddim}
	Define $q \gets \ceil{\frac{d-1}{\eps} -d +\rho}$, $\ell_1 \gets \floor{  \frac{q + d-1}{2}}$, and $\ell_2 \leftarrow \max\{q-\ell_1, 1\}$ \label{mitm:ell} 
	
	Initialize an ORS data structure (\Cref{lem:kvors}) storing an object $S\subseteq I$ with point $\bar x_S := \embed(S)$ and profit $p_S := p(S)$ for every $S\subseteq I$ with $\abs{S}\leq \ell_1$. \label{mitm:ors}

	$\cL_1,\ldots,\cL_{n+1} \leftarrow \lowprofit(I,\delta)$ where $\delta =  \frac{\eps^2\cdot \rho }{2\cdot (d-1)}$ (\Cref{lem:low_profit}).\label{mitm:low}
	
	$\best \leftarrow\emptyset$.

	\For{every $S_2 \subseteq I$ such that $1\leq \abs{S_2}\leq \ell_2$ and every $v\in \cL_i$ where $i=\max(S_2) +1$\label{mitm:loop}}{
		Issue a query to the ORS with $\range_{\capacity}( \set(S_2\cup\{v\}))$; let $S_1$ be the returned object.
		\label{mitm:query}
		
		If $S_1\neq \bot$ and $p(S_1\cup S_2\cup \{v\})>p(\textsf{\best})$ then update $\best \leftarrow S_1\cup S_2 \cup \{v\}$.  
		\label{mitm:update}
	}
	\Return $\set(\best)$ 
\end{algorithm}

\begin{lemma}
	\label{lem:mitm_approx}
	For every integer $d\geq 2$, $\rho \in (0,1)$, and  $\eps \in \left(0,\frac{d-1}{d}\right]$, 
	\Cref{alg:mitmddim} is a $(1-\eps)$-approximation algorithm for $d$-Knapsack.
\end{lemma}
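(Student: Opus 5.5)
The proof has two parts: first that the output is feasible, then that its profit is at least $(1-\eps)\cdot p(\OPT)$. For the profit bound we exhibit one specific iteration of \Cref{mitm:loop} in which the triple $(S_1,S_2,v)$ we have in mind is available, and use the fact that the ORS query returns something at least as profitable.

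\textbf{Feasibility.} Either $\best=\emptyset$, or $\best=S_1\cup S_2\cup\{v\}$ where $S_1$ was returned by the query $\range_{\capacity}(\set(S_2\cup\{v\}))$.
By \Cref{lem:low_profit}, $\set(v)\subseteq\{\max(S_2)+1,\ldots,n\}$, so $\set(v)$ is disjoint from $S_2$ and the virtual item $S_2\cup\{v\}$ is well defined.
By \Cref{lem:embed_to_range}, $S_1$ and $\set(S_2\cup\{v\})$ can be matched. Hence they are disjoint, and $\weight(S_1)+\weight(S_2)+\weight(v)\le\capacity$.
It follows that $\set(\best)$ is feasible and $p(\set(\best))=p(\best)$.

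\textbf{Small optimum.} If $\OPT=\emptyset$ there is nothing to show, so assume $\OPT\neq\emptyset$. Note that $\ell_1+\ell_2\ge q$ and $\ell_2\ge 1$.
Suppose first that $|\OPT|\le \ell_1+\ell_2$. Let $S_2^*$ be the last $\min\{\ell_2,|\OPT|\}\ge 1$ items of $\OPT$, and let $S_1^*=\OPT\setminus S_2^*$, so that $|S_1^*|\le\ell_1$.
Consider the iteration with $S_2=S_2^*$ and $v=v_\emptyset\in\cL_i$. The set $S_1^*$ is stored in the ORS and can be matched with $S_2^*$. Therefore the query returns some $S_1$ with $p(S_1)\ge p(S_1^*)$, and $\best$ attains profit at least $p(\OPT)$.

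\textbf{Large optimum.} Now suppose $m:=\ell_1+\ell_2<|\OPT|$. Take $S_1^*=\OPT[\semi \ell_1]$, $S_2^*=\OPT[\ell_1+1\semi m]$ and $L=\OPT[m+1\semi]$. Let $i=\max(S_2^*)+1$, so that $L\subseteq\{i,\ldots,n\}$ and $p(i)\le p(\OPT[m])\le p(H)/m$, where $H=\OPT[\semi m]$.
\Cref{lem:low_profit}, applied with parameter $\delta$, yields $v\in\cL_i$ with $\weight(v)\le\weight(L)$, so $S_1^*$ still matches $\set(S_2^*\cup\{v\})$. Moreover $v$ satisfies both profit bounds:
\begin{align*}
p(v)&\ge(1-\delta)\,p(L)-(d-1+\delta/n)\,p(i),\\
p(v)&\ge\tfrac{1-\delta}{d}\,p(L)-\tfrac{\delta}{n}\,p(i).
\end{align*}
As in the small case, the query returns something at least as profitable as $S_1^*$. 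Hence $p(\best)\ge p(H)+p(v)$, and it remains to lower-bound $p(H)+p(v)$.

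\textbf{Combining the bounds (main obstacle).} I expect the delicate part to be this arithmetic, which mirrors the Caprara et al.\ footnote with $d$ replaced by $d-1$. From the first profit bound and $p(i)\le p(H)/m$ we get
\[
p(H)+p(v)\ \ge\ \Bigl(1-\tfrac{d-1+\delta}{m}\Bigr)p(H)+(1-\delta)p(L).
\]
From the second profit bound,
\[
p(H)+p(v)\ \ge\ \Bigl(1-\tfrac{\delta}{m}\Bigr)p(H)+\tfrac{1-\delta}{d}\,p(L).
\]
Take a convex combination of these two inequalities with weights chosen so that the coefficients of $p(H)$ and $p(L)$ coincide, roughly $\alpha=m/(m+d)$. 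Ignoring $\delta$, this gives a guarantee of $1-\frac{d-1}{m+d}$.

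We have $m\ge q=\lceil\frac{d-1}{\eps}-d+\rho\rceil$, so $m+d\ge\frac{d-1}{\eps}+\rho$. This yields $1-\frac{d-1}{m+d}\ge 1-\eps+\Omega(\eps^2\rho/(d-1))$.
The slack of $\rho$ in this bound is what absorbs the $O(\delta)$ losses, which is exactly why $\delta=\frac{\eps^2\rho}{2(d-1)}$ was chosen. The hypothesis $\eps\le\frac{d-1}{d}$ guarantees $q\ge 0$, so the parameters in \Cref{mitm:ell} are well defined.

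What must be checked carefully is that the $\delta$-terms, after multiplication by the mixing weights, are indeed dominated by this $\rho$-slack. This reduces to a routine but slightly messy inequality in $\eps$, $d$, $\rho$ and $m$.
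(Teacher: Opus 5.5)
Your proof is correct and follows essentially the same route as the paper: the same feasibility argument via \Cref{lem:embed_to_range}, a surrogate $v\in\cL_i$ for $i=\max(S_2)+1$ from \Cref{lem:low_profit}, and the Caprara-style averaging with $\alpha=m/(m+d)$, giving $1-\frac{d-1}{m+d}-\delta\ge 1-\eps$. The check you left open does close, since $\eps-\frac{(d-1)\eps}{d-1+\rho\eps}=\frac{\rho\eps^2}{d-1+\rho\eps}\ge\delta$ and the total $\delta$-loss is at most $\delta$. The only cosmetic difference is that you cut the high-profit prefix at $m=\ell_1+\ell_2\ge q$ rather than at $q$; the two choices coincide unless $\ell_1\ge q$, and yours makes $S_2^*\neq\emptyset$ automatic without needing $q\ge 1$.
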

\begin{proof}
	\Cref{alg:mitmddim} works as follows. We assume that the items in $I$ are indexed by $1,\ldots,n$, that is, $I = \{1,\ldots,n\}$, and we sort them by non-increasing profits. We build an ORS data structure storing the embedding of all sets of size at most $\ell_1$. We compute representative solutions $\cL_1,\ldots,\cL_{n+1}$ by \Cref{lem:low_profit}. Then we iterate over all sets $S_2$ of size in $[1,\ell_2]$ and all representative solutions $v \in \cL_{\max(S_2)+1}$. For each pair $(S_2,v)$, we query the ORS data structure to find the best match $S_1$, and we maintain the best found solution $S_1 \cup S_2 \cup \{v\}$. Throughout, we work on virtual items in their compressed form, only in the last line we expand the virtual item into a set, so that the algorithm returns a set. In the remainder we analyze this algorithm.

	We first observe that the algorithm always returns a feasible solution.
	Specifically, the algorithm maintains the invariant that  $\set(\best)$ is always feasible. Initially, $\best=\emptyset$, and therefore $\set(\best) = \emptyset$ which is feasible. 
Whenever  $\best$ is updated in  \Cref{mitm:update}, it  is set to $S_1\cup S_2\cup \{v\}$, where $\embed(S_1)\in \range_{\capacity}(\set(S_2\cup \{v\}))$. Therefore, $\set(S_1\cup S_2\cup\{v\})= S_1\cup \set(S_2\cup \{v\})$ is feasible  by \Cref{lem:embed_to_range}.
Thus, $\set(\best)$ remains feasible throughout the execution of the algorithm, and the algorithm returns a feasible solution.

	Let $\OPT$ be an optimal solution for the given $d$-Knapsack instance $(I,\capacity)$.
	 Since $\eps\leq \frac{d-1}{d}$ it holds that 
	 $q\geq \frac{d-1}{\eps}-d+\rho \geq \rho >0$, and therefore $q\geq 1$.  
	Consider the following three cases.
	 \begin{itemize}
		\item Case 1: $\abs{\OPT}>q$. 
	Let $H=\OPT[\semi q]$ be the  set of $q$ most profitable items in $\OPT$ and let $L=\OPT[q+1 \semi]=\OPT\setminus H$ be the set of low-profit items in $\OPT$. 
	Define 
	$\tS_1= \OPT[1:q-\ell_2]$ and $\tS_2=\OPT[q-\ell_2+1:q]$.   Clearly, $\tS_1\cup \tS_2 =H$.   Since $q,\ell_2\geq 1$ we  have that $\tS_2\neq \emptyset$. 
	Also, let $i=\max(\tS_2)+1$ and note that $L\subseteq \{i,i+1,\ldots,n\}$. By \Cref{lem:low_profit} there exists $v\in \cL_{i}$ such that $\weight(v)\leq \weight(L)$, 
	\begin{equation}
		\label{eq:v_profit} 
	\begin{aligned}
		p(v)\,&\ge\ (1-\delta)\cdot p(L)\ -\ \left(d-1+\frac{\delta}{n}\right)\cdot p(i) \qquad \textnormal{ and } \\
		p(v)\,&\ge \frac{1-\delta}{d} \cdot p(L) - \frac{\delta}{n} \cdot p(i).
	\end{aligned} 
	\end{equation}
	
	Consider the iteration of  \Cref{mitm:loop}  in which $S_2=\tS_2$ and $v$ is the virtual item defined above. It holds that $\weight(\tS_1\cup \tS_2 \cup\{v\})\leq \weight(H) +\weight(L) \leq \capacity$ and $\max(\tS_1) <\min(\set(\tS_2\cup\{v\}))$, therefore the sets 
	 $\tS_1$ and $\set(\tS_2\cup \{v\})$ can be matched. 
	 By \Cref{lem:embed_to_range} this implies that $\embed(\tS_1)\in \range_{\capacity}(\set(\tS_2\cup\{v\}))$. Since $|\tS_1| = q-\ell_2 \leq \ell_1$, the set $\tS_1$ is added to the ORS data structure in \Cref{mitm:ors}. Therefore, the query in \Cref{mitm:query} returns a set $S_1 \subseteq I$ such that $\embed(S_1)\in \range_{\capacity}(\set(S_2\cup\{v\}))$ and $p(S_1)\geq p(\tilde{S_1})$, and thus by \Cref{lem:embed_to_range} it holds that 
	 \begin{equation}
	 	\label{eq:HV_lowerbound}
	 p(S_1\cup S_2\cup\{v\}) = p(S_1)+p(S_2)+p(v)  \geq p(\tS_1)+p(\tS_2) + p(v) = p(H)+p(v).
		\end{equation}
	Since the items are sorted by profit it holds that $p(i) \leq p(\max(S_2)) =p(\OPT[q]) \leq \frac{p(\OPT[\semi q])}{q} = \frac{p(H)}{q}$.  Combining this with \eqref{eq:v_profit} yields
	 	$$
	 	\begin{aligned}
	 			p(v)\,&\geq (1-\delta)\cdot p(L) -(d-1+\delta ) \cdot \frac{p(H)}{q} \qquad \textnormal{ and } \\
	 	p(v)\,&\ge \frac{1-\delta}{d} \cdot p(L) - \delta \cdot \frac{p(H)}{q}.
	 	\end{aligned}
	 	$$
	 	Let $\alpha = \frac{q}{q+d}\in (0,1)$. From the above inequalities we can deduce that 
	 	\begin{align*}
	 	p(v) &= \alpha \cdot p(v) + (1-\alpha) \cdot p(v)  \\
	 	&\geq p(L)\cdot \left(\alpha (1-\delta) + (1-\delta)\cdot \frac{1-\alpha}{d} \right) - p(H)\cdot \left(\frac{\alpha}{q} \cdot (d-1) + \frac{\delta}{q}  \right) \\
	 	&\geq p(L)\cdot \left(\alpha  + \frac{1-\alpha}{d} - \delta \right) - p(H)\cdot \left(\frac{\alpha}{q} \cdot (d-1) + \delta  \right).
	 	\end{align*}
	 	
	 	By the above inequality and \eqref{eq:HV_lowerbound} we have,
	 \[
	 \begin{aligned}
	 	 p(S_1\cup S_2\cup\{v\})&\geq p(H)+p(v)\\
	 	 &\geq p(H)\cdot \left( 1-\frac{\alpha}{q}\cdot (d-1) -\delta\right) +p(L)\cdot \left( \alpha +  \frac{1-\alpha}{d}  -\delta \right) \\
	 	 &= p(H)\cdot \left(1-\frac{d-1}{q+d} -\delta\right) + p(L)\cdot \left(1- \frac{d-1}{q+d}-\delta \right)\\
	 	 &= p(\OPT) \cdot \left( 1-\frac{d-1}{q+d}-\delta\right),
	 \end{aligned}
	 \]
	 where the first equality follows from $\alpha =\frac{q}{q+d}$. By the definitions of $q$  and $\delta$ we also have that 
	 $$
	 \frac{d-1}{q+d} +\delta \leq \frac{d-1}{ \frac{d-1}{\eps} -d+\rho +d} +\frac{\rho\cdot \eps^2}{2(d-1)} \leq \eps\cdot \left(\frac{d-1}{d-1+
	 \rho \cdot \eps} + \frac{\rho \cdot \eps}{d-1+\rho\cdot \eps} \right)  \leq \eps,
	 $$
	 where the second inequality holds as $2(d-1)\geq d-1+\rho\cdot \eps$, by $d \ge 2$ and $\eps,\rho \in (0,1)$. 
	 Therefore,
	 $$
	 p(S_1\cup S_2 \cup\{v\}) \geq p(\OPT) \cdot \left( 1-\frac{d-1}{q+d}-\delta\right)  \geq p(\OPT)\cdot (1-\eps). 
	 $$ 
	  
	  That is, following this iteration it  holds that $p(\best) \geq (1-\eps) \cdot p(\OPT)$, and we can conclude that the algorithm returns a $(1-\eps)$-approximation for $d$-Knapsack. 
	  
	  We remark that the the weighted averaging by $\alpha$ is essentially the argument used by Caprara et al.~\cite{CapraraKPP00} in their factor-$n$ improvement in the previous state-of-the-art algorithm.

	\item Case 2: $1 \le \abs{\OPT} \leq q$. Then 
	let $r = \min\{\ell_1, \abs{\OPT}-1\}$ and define $\tS_1 = \OPT[\semi r]$ 
	 and  $\tS_2 = \OPT[r+1 \semi]$. 
	 Since $r\leq \abs{\OPT}-1$ and $\abs{\OPT} \ge 1$ it follows that $\tS_2 = \OPT[r+1 \semi]\neq \emptyset$. Furthermore, if $r=\abs{\OPT}-1$ we have $|\tS_2| =1\leq \ell_2$, and otherwise $r=\ell_1$ which implies $|\tS_2| =\abs{\OPT }-\ell_1 \leq q-\ell_1 \leq \ell_2$. Therefore, $|\tS_2|\leq \ell_2$ in both cases. 

	 Consider the iteration of \Cref{mitm:loop} in which $S_2=\tS_2$ and $v=v_{\emptyset}$. Note that $v_{\emptyset} \in \cL_i$ by \Cref{lem:low_profit}.   As $|\tS_1|=r\leq \ell_1$ the set $\tS_1$ is  added to the ORS structure in \Cref{mitm:ors}.  Since $\tS_1$ and $\tS_2$  can be matched we have that  $\embed(\tS_1) \in \range_{\capacity}(\tS_2) = \range_{\capacity}(\set(\tS_2\cup \{v_{\emptyset}\}))$, and therefore the ORS query in \Cref{mitm:query} returns a set $S_1$ such that $S_1$ and $\set(\tS_2\cup\{v_{\emptyset}\})=\tS_2$ can be matched and $p(S_1) \geq p(\tS_1)$. Therefore, $p(S_1\cup S_2\cup\{v_{\emptyset}\})\geq p(\tS_1 )+p(\tS_2) = p(\OPT)$.
		  That is, following this iteration it holds that $p(\textsf{\best}) \geq p(\OPT)$, and in particular the algorithm returns a $(1-\eps)$-approximate solution. 
		  
	\item Case 3: $|\OPT| = 0$. Then the initial $\best \leftarrow\emptyset$ already is an optimal solution.
	\end{itemize}
\end{proof}

Having established the approximation ratio, we now turn to the running time of the algorithm.
\begin{lemma}
	\label{lem:runtime} For every $\eps, \rho \in (0,1)$, \Cref{alg:mitmddim} runs in time $\left( n^{\ceil{\frac{d-1}{2\eps} -\frac{1}{2} +\rho}}+n^{d} \right)\cdot \big(\frac{d \log n}{\rho\cdot \eps}\big)^{O(d)}$.
\end{lemma}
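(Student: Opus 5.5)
The plan is to bound the cost of each of the four phases of \Cref{alg:mitmddim} separately, using the previously stated lemmas. Then I reduce the exponents of $n$ to the claimed form using only the definitions of $q,\ell_1,\ell_2$ and the ceiling identities $\lceil x\rceil + m = \lceil x+m\rceil$ and $\lceil \lceil x\rceil/m\rceil = \lceil x/m\rceil$ for $m\in\mathbb{N}$.

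\textbf{Costs of the phases.}
\begin{enumerate}
\item Sorting takes $O(n\log n)$.
\item The ORS structure in \Cref{mitm:ors} stores $N \le (\ell_1+1)\,n^{\ell_1}$ objects. Each embedding is computed in $O(d\ell_1)$ time, and by \Cref{lem:kvors} the structure is built in time $O(N\log^{d} N)$. Since $\ell_1 = O(d/\eps)$, we have $\log N = O(\frac d\eps \log n)$. Hence this phase costs $n^{\ell_1}\cdot(\frac{d\log n}{\eps})^{O(d)}$.
\item By \Cref{lem:low_profit} with $\delta = \frac{\eps^2\rho}{2(d-1)}$, \Cref{mitm:low} costs $n^{d}(\frac{\log n}{\delta})^{O(d)} = n^d(\frac{d\log n}{\rho\eps})^{O(d)}$.
\item The loop in \Cref{mitm:loop} ranges over at most $(\ell_2+1)\,n^{\ell_2}$ sets $S_2$. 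For each, it ranges over at most $n^{d-1}(\frac{d\log n}{\rho\eps})^{O(d)}$ virtual items $v$, again by \Cref{lem:low_profit}. Per pair, three steps are needed. Forming $S_2\cup\{v\}$ takes $O(d\ell_2)$ time using virtual items. Its range is computed in $O(d)$, since $\min(\set(S_2\cup\{v\}))=\min(S_2)$ and $\weight(v)$ is available in $O(1)$ per coordinate; the expensive $\set(\cdot)$ is only called once, at the end. The ORS query takes $O(\log^{d+1}N)$. So the loop costs $n^{\ell_2+d-1}(\frac{d\log n}{\rho\eps})^{O(d)}$.
\end{enumerate}
The final $\set(\best)$ costs $O(n)$.

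\textbf{Exponent arithmetic.} It remains to show that
\[
\max\{\ell_1,\ \ell_2+d-1\}\le \max\Big\{\Big\lceil \tfrac{d-1}{2\eps}-\tfrac12+\rho\Big\rceil,\ d\Big\}.
\]
Write $x = \frac{d-1}{\eps}-d+\rho$, so $q=\lceil x\rceil$. Since $\ell_1=\lfloor (q+d-1)/2\rfloor$, we get $q-\ell_1 = \lceil (q-d+1)/2\rceil$. This gives $\ell_2+d-1 = \max\{\lceil (q+d-1)/2\rceil,\ d\}$. Clearly $\ell_1 \le \lceil (q+d-1)/2\rceil$ as well. Using the identities,
\[
\Big\lceil \tfrac{q+d-1}{2}\Big\rceil=\Big\lceil \tfrac{\lceil x+d-1\rceil}{2}\Big\rceil=\Big\lceil \tfrac{x+d-1}{2}\Big\rceil=\Big\lceil \tfrac{d-1}{2\eps}-\tfrac12+\tfrac\rho2\Big\rceil\le \Big\lceil \tfrac{d-1}{2\eps}-\tfrac12+\rho\Big\rceil.
\]
Summing the phase costs then gives the claimed bound. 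The lower-order factors such as $(\ell_1+1)$, $d\ell_2$ and the $\log^{d+1}N$ terms are all absorbed into $(\frac{d\log n}{\rho\eps})^{O(d)}$.

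\textbf{Main obstacle.} The only step needing care is this exponent bookkeeping. In particular, one must check that the $\max\{\cdot,1\}$ in $\ell_2$ only ever contributes the $n^d$ term. The rest is a routine accounting of the lemma costs, plus noting that $\log N$ grows only by an $O(d/\eps)$ factor.
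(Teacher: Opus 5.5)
Your proposal is correct and follows essentially the same route as the paper. Both arguments bound the ORS construction by $n^{\ell_1}$, the call to \texttt{RepresentativeSolutions} by $n^{d}$, and the main loop by $n^{\ell_2+d-1}$, each up to $(\frac{d\log n}{\rho\eps})^{O(d)}$ factors, and then use the ceiling identities to show $\ell_1,\ \ell_2+d-1 \le \max\{\lceil \frac{d-1}{2\eps}-\frac12+\frac\rho2\rceil, d\}$. The differences are cosmetic: you account for sorting and bound $\ell_1\le\lceil (q+d-1)/2\rceil$ directly, where the paper uses the floor--ceiling inequality.
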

\begin{proof}

	The initialization of the ORS data structure in \Cref{mitm:ors} inserts $O(n^{\ell_1})$ objects, where each set is stored as a vector of dimension $d+1$. Therefore,   by \Cref{lem:kvors}, the construction of the data structure takes time 
	$$
O\big( n^{\ell_1}	 \cdot \log^{d} \big(n^{\ell_1}\big) \big) = O\big(\ell_1^d\cdot n^{\ell_1} \cdot\log^{d} n   \big) \\
		=n^{\ell_1}\cdot \left(\frac{d \log n}{\eps}\right)^{O(d)}.
 $$

	Next, the algorithm generates the  sets of representative solutions  $\cL_1,\ldots, \cL_{n+1} $ in \Cref{mitm:low}  using \Cref{lem:low_profit} with error parameter $\delta = \frac{\eps^2\cdot \rho }{2\cdot(d-1)}$. This takes time
	$n^{d} \cdot \big(\frac{ \log n }{\delta}\big)^{O(d)}=
	n^{d} \cdot \big( \frac{d \log n }{\rho\cdot \eps}\big)^{O(d)}$. 
	
	Finally, consider the loop in \Cref{mitm:loop}. \Cref{mitm:query} requires computing $\range_{\capacity}( \set(S_2\cup\{v\}))$, which naively takes time $O(dn)$ as $\set(v)$ can have size up to $\Omega(n)$. More cleverly, by observing that \Cref{mitm:loop} ensures $\min(\set(v)) > \max(S_2)$ it holds that $\min(\set(S_2\cup\{v\})) = \min(S_2)$ and $\weight_t(S_2\cup\{v\}) = \weight_t(S_2) + \weight_t(v)$ and thus 
	\begin{align*} 
	&\range_{\capacity}( \set(S_2\cup\{v\})) \\
	&= [0, \capacity_1 - \weight_1(S_2) - \weight_1(v)] \times \ldots \times [0, \capacity_d - \weight_d(S_2) - \weight_d(v)] \times [-\infty, \min(S_2) - 1].
	\end{align*}
	Since $\weight_t(v)$ can be read from $v$ in time $O(1)$, and $|S_2| = O(d/\eps)$, this range can be computed in time $O(d^2/\eps)$. Similarly, the evaluation of $p(S_1\cup S_2\cup\{v\})$ in \Cref{mitm:update} can be implemented in time $O(d^2/\eps)$, as it is equal to $p(S_1) + p(S_2) + p(v)$. 
	Therefore each iteration takes time $O(d^2/\eps) $ for the computation of the embedding and update operations, and $O(\log^{d+1}(n^{\ell_1}))= O\big( \ell_1^{d+1} \cdot \log^{d+1} n \big) =\big( \frac{ d \log n }{\eps}\big)^{O(d)}$ for the ORS query.
	As $\abs{\cL_i }\leq n^{d-1}\big(\frac{\log n}{\delta}\big)^{O(d)}=n^{d-1}\big(\frac{\log n}{\rho\cdot \eps }\big)^{O(d)}$ for all $i\in \{1,\ldots,n+1\}$, 
	the number of iterations of  \Cref{mitm:loop} is 
	$$
	n^{\ell_2} \cdot n^{d-1} \cdot  \left(\frac{d\cdot \log n}{\rho\cdot \eps }\right)^{O(d)}=  n^{\ell_2+d-1}\cdot \left(\frac{d\cdot \log n}{\rho\cdot \eps }\right)^{O(d)}
	$$
	Overall, the total running time of  the loop in \Cref{mitm:loop} is 
	$$
n^{\ell_2+d-1}\cdot \left(\frac{d\cdot \log n}{\rho\cdot \eps }\right)^{O(d)} \cdot \left( O\left(\frac{d^2}{\eps}\right)+ \left( \frac{ d \log n }{\eps}\right)^{O(d)} \right)
=n^{\ell_2+d-1}\cdot \left(\frac{d\cdot \log n}{\rho\cdot \eps }\right)^{O(d)}.$$

Hence, the total running time of the algorithm is $\left( n^{\ell_1}+n^{\ell_2+d-1} +n^d\right) \cdot \left(\frac{d\cdot \log n}{\rho\cdot \eps }\right)^{O(d)}$.

In the following derivations we will use the fact that for any $x \in \mathbb{R}_{\ge 0}$ and $m \in \mathbb{N}$ we have \begin{align} \label{eq:rounding_fact}
\bigg\lfloor \frac{\lceil x \rceil} m \bigg\rfloor \le \bigg\lceil \frac {\lceil x \rceil} m \bigg\rceil = \bigg\lceil \frac x m \bigg\rceil. 
\end{align}
We bound $\ell_1$ as
\begin{equation}
	\label{eq:l1_runtime}
\ell_1 =\floor{\frac{q+d-1}{2}} = \floor{\frac{\ceil{\frac{d-1}{\eps} -d+\rho }+d-1}{2} } = \floor {\frac{\ceil{\frac{d-1}{\eps}+\rho -1} }{2}} \stackrel{\eqref{eq:rounding_fact}}{\leq}  \ceil{\frac{d-1}{2\cdot \eps} -\frac{1}{2} +\frac{\rho}{2}},
\end{equation}
We also bound
$$
q-\ell_1 +d-1 =q-\floor{\frac{q+d-1}{2} } +d-1=  \ceil{\frac{q+d-1}{2}}= \ceil{ \frac{\ceil{\frac{d-1}{\eps} - d+\rho +d-1}}{2}} \stackrel{\eqref{eq:rounding_fact}}{=} \ceil{\frac{d-1}{2\cdot \eps} -\frac{1}{2} +\frac{\rho}{2}}.
$$
Therefore,
\begin{equation}
	\label{eq:l2_runtime}
\ell_2 +d-1= \max\{q-\ell_1, 1\} +d-1 \leq \max\left\{ \ceil{\frac{d-1}{2\cdot \eps} -\frac{1}{2} +\frac{\rho}{2}}, \,d\right\}.
\end{equation}
By \eqref{eq:l1_runtime} and \eqref{eq:l2_runtime} the overall running time of the algorithm is 
\begin{align*}
\left( n^{\ceil{\frac{d-1}{2\cdot \eps} -\frac{1}{2} +\frac{\rho}{2}}} +n^d\right) \cdot \left(\frac{d\cdot \log n}{\rho\cdot \eps }\right)^{O(d)} \le \left( n^{\ceil{\frac{d-1}{2\cdot \eps} -\frac{1}{2} +\rho}} +n^d\right) \cdot \left(\frac{d\cdot \log n}{\rho\cdot \eps }\right)^{O(d)}.&\hfill\qedhere
\end{align*}
\end{proof}

\thmMainAlgo*
\begin{proof}
	By \Cref{lem:mitm_approx,lem:runtime}, for every
	$\eps\leq \frac{d-1}{d}$ there is an algorithm that returns a
	$(1-\eps)$-approximate solution for a $d$-Knapsack instance in time
	\(
	\big(
	n^{\ceil{\frac{d-1}{2\eps}-\frac{1}{2}+\rho}}
	+n^d
	\big)
	\cdot
	\big(
	\frac{d\log n}{\rho\eps}
	\big)^{O(d)}.\)
	
	It remains to consider the case
	$\frac{d-1}{d}<\eps<1$.
	Set $\eps'=\frac{d-1}{d}$ and run \Cref{alg:mitmddim} with error parameter $\eps'$.
	The algorithm returns a $(1-\eps')$-approximate solution, which is also a
	$(1-\eps)$-approximate solution since $\eps'\leq \eps$.
	We can bound its running time by using $\frac{d-1}{2\eps'}-\frac{1}{2}+\rho = \frac{d-1}2+\rho \le \frac {d+1}2 \le d$ and $\eps' = \Theta(\eps)$ as
	\[
	\left(
	n^{\ceil{\frac{d-1}{2\eps'}-\frac{1}{2}+\rho}}
	+n^d
	\right)
	\cdot
	\left(
	\frac{d\log n}{\rho\eps'}
	\right)^{O(d)}
	=
	n^d\cdot
	\left(
	\frac{d\log n}{\rho\eps}
	\right)^{O(d)}.
	\]
	Thus, the running time matches the bound stated in the theorem.
\end{proof}

\subsection{Proof of \Cref{lem:ddim_dpv}}
\label{sec:dalgo_proofapxlem}

\newcommand{\roundup}{\mathrm{round}^{\uparrow}}
\newcommand{\rounddown}{\mathrm{round}^{\downarrow}}
\newcommand{\round}{\mathrm{round}}
\newcommand{\candidates}{\mathcal{C}}
\newcommand{\mincandidate}{c}
\newcommand{\Rnd}{\mathcal{R}}
\newcommand{\minwd}{\mathrm{min}^d}

It remains to prove~\cref{thm:simple-apx-data-structure}. We will focus on the following variant, because it is easier to prove and easily implies \cref{thm:simple-apx-data-structure}.

\begin{lemma}[Internal Approximation Lemma]\label{lem:internal-APX-Lemma}
	Given a sequence $I=\{1,\ldots,n\}$ of $n$ items, weight estimates $\capacity_2,\ldots,\capacity_{d-1} \in \mathbb{R}_{\ge 0}$, and error parameters $\eps,\eta \in (0,1]$, in time $n\cdot \left(\log(n/\eta)/\eps\right)^{O(d)}$ we can compute
	sets $\mathcal{L}_1,\ldots,\mathcal{L}_n$ of virtual items each with $|\mathcal{L}_i| = (\log(n/\eta)/\eps)^{O(d)}$ and the following properties. For every $i \in [n]$ and $v \in \cL_i$ it holds that $\textsf{set}(v) \subseteq \{i,\ldots,n\}$. For every $i \in [n]$, let
	\[
	\mathcal{Q}_i \coloneqq \left\{ Q \subseteq \{i,\ldots,n\} \, \middle| \,p(Q) \ge \eta \cdot \max_{i \le j \le n} p(j),\, \weight_t(Q) \le \eta^{-1} \capacity_t \textnormal{ for each } t \in \{2,\ldots,d-1\} \right\}.
	\]
	Then for every $i \in [n]$ and $Q \in \mathcal{Q}_i$, there exists a virtual item $v \in \mathcal{L}_i$ that satisfies
	\begin{align*}
		p(v)& \geq (1-\eps)\, p(Q),\\
		\weight_1(v)& \leq \max\left\{(1+\eps)\, \weight_1(Q), \eta \cdot \max_{i \le j \le n}\weight_1(j)\right\},\\
		\weight_{d}(v) & \leq   \weight_d(Q),\text{ and }\\
		\weight_t(v)& \leq \max\left\{(1+\eps)\, \weight_t(Q), \eta \, \capacity_t\right\}\text{ for each } 1 < t < d.
	\end{align*}
\end{lemma}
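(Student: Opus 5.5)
We propose to prove \cref{lem:internal-APX-Lemma} in the style of Chan's Knapsack approximation~\cite{Chan18a}: a divide-and-conquer over suffixes combined with rounded \emph{Pareto-type} lists and ``merge-and-reduce''. Think of each solution as a vector $(p,\weight_1,\ldots,\weight_d)$. Profit and weights $1,\ldots,d-1$ may be rounded to powers of $(1+\eps')$ with $\eps' = \eps/\Theta(\log n)$, since they carry slack. Weight $d$ must stay exact, so for each rounded class we keep only the virtual item minimizing $\weight_d$. Values that are tiny relative to the thresholds $\eta\max_j p(j)$, $\eta \capacity_t$ and $\eta\max_j \weight_1(j)$ are rounded up to a single ``zero'' bucket; values above the upper thresholds are discarded. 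With these conventions each coordinate $1,\ldots,d-1$ and the profit take $O(\log(n/\eta)/\eps')$ bucket values. A reduced list therefore has $(\log(n/\eta)/\eps)^{O(d)}$ entries, one per bucket vector, each storing a $\weight_d$-minimizing virtual item.

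\textbf{Step 1 (suffix structure).} We would build the lists $\mathcal{L}_i$ for all suffixes at once by splitting $\{1,\ldots,n\}$ into a balanced binary tree of intervals. Any suffix $\{i,\ldots,n\}$ is the disjoint union of $O(\log n)$ canonical nodes. For every node $u$ we compute a reduced list $\mathcal{R}_u$ representing all subsets of its interval. We do this bottom-up: at a leaf the list is $\{v_\emptyset, v_{\{j\}}\}$. At an internal node we take the ``sum'' of the children's lists, pairing all entries and forming the union $v_1\cup v_2$ in $O(d)$ time, and then reduce by rounding into buckets and keeping the $\weight_d$-minimum. Then $\mathcal{L}_i$ is obtained by successively summing and reducing the $O(\log n)$ canonical lists of the suffix; equivalently, we sweep $i$ from $n$ down to $1$, maintaining prefix sums along the right spine. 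A naive pairwise sum costs $|\mathcal{R}|^2 = (\log(n/\eta)/\eps)^{O(d)}$ per node. With $O(n)$ nodes and $n$ suffixes each needing $O(\log n)$ merges, the total is $n\cdot(\log(n/\eta)/\eps)^{O(d)}$.

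\textbf{Step 2 (correctness).} Fix $Q\in\mathcal{Q}_i$ and split it along the canonical decomposition of $\{i,\ldots,n\}$ and, recursively, along the tree. By induction on the merge depth we claim the following: for every node-subset $Q_u$ there is an entry $v$ with $\weight_d(v)\le \weight_d(Q_u)$, and with profit and weights $1,\ldots,d-1$ within factor $(1\pm\eps')^{h}$ of $Q_u$'s values up to additive rounding-to-zero errors. Here $h$ is the number of reductions above. The $\weight_d$ inequality is preserved because, at each bucket, we keep the minimal $\weight_d$, and replacing both parts by bucket-mates of no larger $\weight_d$ stays in (a bucket at least as good as) the same bucket after summation. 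Since the depth is $O(\log n)$, choosing $\eps'=\eps/\Theta(\log n)$ gives overall factor $1\pm\eps$. The additive errors arise from values rounded down to zero (profit) or up to the floor $\eta\cdot$threshold (weights). Summed over at most $O(n)$ pieces, they cost at most $O(n)\cdot\eta'$ times the threshold. So we run the construction with $\eta' = \eta\eps/n$, and the losses become at most $\eps\eta\max_j p(j) \le \eps\, p(Q)$ for profit (using $p(Q)\ge\eta\max p$), and at most $\eta\capacity_t$ resp.\ $\eta\max\weight_1$ for weights. This matches the $\max\{\cdot,\cdot\}$ form in the statement. The $\log(n/\eta')$ blow-up is absorbed in $(\log(n/\eta)/\eps)^{O(d)}$. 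Dimension $1$ has no estimate $\capacity_1$, so we use $\max_{j\ge i}\weight_1(j)$ as its scale. Its upper cap is $n\max_j\weight_1(j)$, and no subset exceeds it.

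\textbf{Main obstacle.} The main difficulty is that the scale $\max_{j\ge i}\weight_1(j)$ and $\max_{j\ge i} p(j)$ depend on the suffix $i$, while the tree lists are shared. We would first try to handle this by storing canonical lists using absolute (suffix-independent) geometric buckets, which contain too many values in general. We then restrict to the window $[\eta'\cdot\text{scale}, n\cdot\text{scale}]$ only when forming $\mathcal{L}_i$ along the sweep. At each node we keep values in a window relative to that node's own maxima, and when merging into a suffix whose maxima are larger, we collapse smaller buckets into zero. This is monotone, since the maxima only grow as $i$ decreases. Verifying that this windowing keeps the error accounting of Step 2 valid, with every collapsed value charged against the suffix threshold, is the delicate part.
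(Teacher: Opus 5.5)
Your proposal is correct and follows essentially the same route as the paper. It uses a dyadic tree of intervals, with per-node lists of size $(\log(n/\eta)/\eps)^{O(d)}$ rounded to powers of $1+\eps/\Theta(\log n)$ and $\weight_d$ kept exact by taking a minimum. Error scales are relative to each node's own maxima, and your charging of collapsed values to the suffix threshold is exactly the paper's use of $E^A \le E^{A\cup B}$. Suffix lists are obtained from $O(\log n)$ merges of canonical intervals. The only real difference is representation: the paper stores each list as a function on a fixed rounded grid of target vectors (the virtual item of minimum $\weight_d$ subject to $\vecy+\vecz\succeq\vecx$) rather than bucketing realized solutions, which makes the inductive invariant slightly cleaner but gives the same bounds.
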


\begin{proof}[Proof of~\cref{thm:simple-apx-data-structure}
assuming~\cref{lem:internal-APX-Lemma}]
We can assume that $\tp>0$, as for $\tp=0$ it suffices to return $\{v_\emptyset\}$, i.e., a singleton set containing the virtual item that represents the empty set. 
While in all other parts of this paper we sorted items by non-increasing profit, here we sort differently: 
We split the items by profit into $H := \{i \in I \mid p(i) \ge \eta^{-1} \tp\}$ and $L := I \setminus H$. We first sort the items such that $H$ precedes $L$, that is, $H = \{1,\ldots,|H|\}$ and $L = \{|H|+1,\ldots,n\}$. Finally, sort the items in $L$ by non-increasing first weight, that is, $\weight_1(|H|+1)\geq\cdots\geq\weight_1(n)$. 

Now we apply \cref{lem:internal-APX-Lemma} on $I$ with error
parameters $\eps$ and $\eta' := \eta^2$, and weight estimates
$\capacity_2,\ldots,\capacity_{d-1}$. Denote its output by
$\mathcal{L}_1,\ldots,\mathcal{L}_n$, and return
\[
	\mathcal{L}\coloneqq \big\{ v_{\{i\}} \mid i \in H \big\} \cup \bigcup_{i=|H|+1}^n\mathcal{L}_i.
\]
For correctness, consider an arbitrary $Q\in\mathcal{Q}$.
In case $Q \cap H \ne \emptyset$, consider any $i \in Q \cap H$. Then $v_{\{i\}} \in \cL$ satisfies 
\begin{align*}
	p(v_{\{i\}}) = p(i) \ge \eta^{-1} \tp \ge \min\big\{ (1-\eps)\, p(Q), \eta^{-1} \tp \big\},
\end{align*}
and $\weight_t(v_{\{i\}}) = \weight_t(i) \le \weight_t(Q)$ for all $t \in [d]$, so $v_{\{i\}}$ satisfies the claims of \cref{thm:simple-apx-data-structure}.

It remains to consider the case $Q \cap H = \emptyset$. 
Let $i$ be the smallest index of an item in $Q$, and observe that $|H| < i \le n$, 
	$Q\subseteq\{i,\ldots,n\}$, and $\weight_1(i)\leq\weight_1(Q)$. 
	Since $Q\in\mathcal{Q}$ it holds that $\weight_t(Q) \le \eta^{-1} \capacity_t \le (\eta')^{-1} \capacity_t$ for each $t \in \{2,\ldots,d-1\}$ and 
	\[ p(Q) \ge \eta \, \tp \ge \eta^2 \max_{i \le j \le n} p(j) = \eta' \cdot \max_{i \le j \le n} p(j), \]
	and thus $Q$ belongs to the family $\mathcal{Q}_i$ of the application of \cref{lem:internal-APX-Lemma}. 
	Hence, by~\cref{lem:internal-APX-Lemma} there exists a virtual item $v\in\mathcal{L}_i\subseteq\mathcal{L}$
	that satisfies
	\begin{align*}
		p(v)&\geq (1-\eps)\, p(Q)
		\ge \min\left\{(1-\eps)\, p(Q),\eta^{-1}\tp\right\},\\
		\weight_1(v)&\leq
		\max\left\{(1+\eps)\weight_1(Q),
		\eta' \cdot \max_{i\leq j\leq n}\weight_1(j)\right\}
		=\max\left\{(1+\eps)\weight_1(Q),
		\eta' \cdot \weight_1(i)\right\}
		=(1+\eps)\weight_1(Q),\\
		\weight_d(v)&\leq\weight_d(Q),\\
		\weight_t(v)&\leq
		\max\left\{(1+\eps)\weight_t(Q),
		\eta' \capacity_t\right\}
		\leq \max\left\{(1+\eps)\weight_t(Q),
		\eta \capacity_t\right\}
		\qquad\text{for every }t \in \{2,\ldots,d-1\}.
	\end{align*}
	Hence, all claimed bounds of \cref{thm:simple-apx-data-structure} hold.
	Finally, note that the running time is dominated by the call to \cref{lem:internal-APX-Lemma}, which takes time $n\cdot(\log(n/\eta')/\eps)^{O(d)} = n\cdot(\log(n/\eta)/\eps)^{O(d)}$.
\end{proof}

The rest of this section is dedicated to the proof
of~\cref{lem:internal-APX-Lemma}. 

\begin{proof}[Proof of \cref{lem:internal-APX-Lemma}]
We may assume that $n\geq2$ is a power of
two (otherwise add dummy items with profit and weights 0). Define 
\[ \delta\coloneqq \frac \eps{10\log n} \quad \text{ and } \quad \Delta\coloneqq \Big(\frac{\eps \, \eta}{10n} \Big)^5. \]
For every nonempty $J\subseteq I$, we define the error scales
\begin{align*}
	E_0^J := \Delta\max_{i\in J}p(i), \qquad E_1^J := \Delta\max_{i\in J}\weight_1(i), \qquad E_t^J := \Delta\capacity_t \quad \text{for }t\in\{2,\ldots,d-1\},
\end{align*}
and the domains:
\begin{align*}
	\mathcal{P}^J&\coloneqq
	\big([E_0^J,
	\Delta^{-2} \cdot E_0^J] \cap \{(1+\delta)^k\mid k\in\mathbb Z\}\big) \cup\{0\},\\
	\mathcal{W}_t^J&\coloneqq
	\big([E_t^J,
	\Delta^{-2} \cdot E_t^J] \cap \{(1+\delta)^k\mid k\in\mathbb Z\} \big) \cup\{0\}, \qquad
	\text{for every }t\in [d-1],\\
	\Rnd^J&\coloneqq \mathcal{P}^J\times \mathcal{W}_1^J\times\cdots\times\mathcal{W}_{d-1}^J.
\end{align*}
Since for any number $x > 0$ the set $[x, \Delta^{-2} x] \cap \{(1+\delta)^k\mid k\in\mathbb Z\}$ has size $O(\log_{1+\delta} (\Delta^{-2})) = O(\log(\Delta^{-1})/\delta) = (\log(n/\eta)/\eps)^{O(1)}$, it holds that
$|\Rnd^J|=(\log(n/\eta)/\eps)^{O(d)}$.

\medskip
We represent a set of items $S \subseteq I$ by a vector $\vecx \in \mathbb{R}^d_{\ge 0}$ with $\vecx_0 = p(S)$ and $\vecx_t = \weight_t(S)$ for each $t \in [d-1]$.
Correspondingly, the natural partial order on vectors $\vecx, \vecy \in \mathbb{R}^d_{\ge 0}$ is
$\vecx \succeq \vecy$ if and only if $\vecx_0 \ge \vecy_0$ and $\vecx_t \le \vecy_t$ for each $t \in [d-1]$. We also define $\vecx+\vecy$ by
coordinate-wise addition. 

We will use the
dummy symbol $\bot$ with $p(\bot) := \infty$ and $\weight_t(\bot) := - \infty$ for
all $t \in [d]$. 

\begin{definition}[Convolution]\label{def:convolution}
For nonempty $A \subseteq I$ we call $f$ an \emph{$A$-function} if $f$ maps each $\vecx \in \Rnd^A$ either to $f(\vecx) = \bot$ or to a virtual item $f(\vecx)$ with $\textsf{set}(f(\vecx)) \subseteq A$.
For nonempty and disjoint $A,B \subseteq I$, an $A$-function $f$, and a $B$-function $g$, we define the \emph{convolution} $f \oplus g$ as the $(A \cup B)$-function that maps each $\vecx \in \Rnd^{A \cup B}$ to the virtual item $v = f(\vecy) \cup g(\vecz)$ that minimizes the $d$-th weight $\weight_d(v)$ over all $\vecy \in \Rnd^A, \vecz \in \Rnd^B$ with $\vecy + \vecz \succeq \vecx, f(\vecy) \ne \bot$, and $g(\vecz) \ne \bot$, that is,
\[ (f \oplus g)(\vecx) := \argmin\{ \weight_d(v) \mid v = f(\vecy) \cup g(\vecz), \vecy \in \Rnd^A, \vecz \in \Rnd^B, \vecy+\vecz \succeq \vecx, f(\vecy) \ne \bot, g(\vecz) \ne \bot \}. \]
If the set on the right hand side is empty, then we define $(f \oplus g)(\vecx) := \bot$. 
\end{definition}

Next, we analyze the properties of the convolution operation~$\oplus$.

\begin{definition}[Approximate function]\label{def:correct_function}
	Let $C \subseteq I$ and $\gamma \ge 0$. For any $\vecx \in \mathbb{R}^d_{\ge 0}$ define
	\begin{align*} 
		\mathcal{Q}_{C}^{(\gamma)}(\vecx) \coloneqq \{X \subseteq C \mid \; &\vecx_0 \le \max\{ (1+\delta)^{-\gamma} \cdot p(X) - (8^\gamma - 1) E_0^C, 0\}, \\
		&\vecx_t \ge (1+\delta)^\gamma \cdot \weight_t(X) + (8^\gamma - 1) E_t^C \;\text{ for each } t \in [d-1]\}. 
	\end{align*}
	A $C$-function $h$ is \emph{$\gamma$-approximate on $C$} if
	for all $\vecx \in \Rnd^C$ we have:
	\begin{enumerate}
		\item\label{clm:correctness_of_dpv_profit}
		$p(h(\vecx))\geq\vecx_0$,
		\item\label{clm:correctness_of_dpv_weights}
		$\weight_t(h(\vecx))\leq\vecx_t$
		for every $t\in [d-1]$,
		\item\label{clm:correctness_of_dpv_item3} $\weight_d(h(\vecx)) \le \min\{\weight_d(X) \mid X \in \mathcal{Q}_{C}^{(\gamma)}(\vecx) \}$, and
		\item\label{clm:correctness_of_dpv_item4} if $\mathcal{Q}_{C}^{(\gamma)}(\vecx) \ne \emptyset$ then $h(\vecx) \ne \bot$.
	\end{enumerate}
	Observe that for any $\gamma \le \gamma'$ it holds that $\mathcal{Q}_{C}^{(\gamma')}(\vecx) \subseteq \mathcal{Q}_{C}^{(\gamma)}(\vecx)$, and thus any $\gamma$-approximate function is also $\gamma'$-approximate.
\end{definition}

Note that Properties \ref{clm:correctness_of_dpv_profit}-\ref{clm:correctness_of_dpv_item3} are trivially satisfied if $h(x) = \bot$, since $p(\bot) = \infty$ and $\weight_t(\bot) = -\infty$ for all $t \in [d]$. Property \ref{clm:correctness_of_dpv_item4} limits when this trivial solution is allowed.

\begin{claim}\label{clm:correctness_of_dpv}
	Let $A,B \subseteq I$ be nonempty and disjoint, let $f$ be $\alpha$-approximate on $A$, and let $g$ be $\beta$-approximate on $B$. If $\gamma\coloneqq\max\{\alpha,\beta\}+1\leq\log n$, then $h \coloneqq f \oplus g$
	is $\gamma$-approximate on $C \coloneqq A \cup B$. Moreover, given $f,g$, the function $h$ can be computed in time $(\log(n/\eta)/\eps)^{O(d)}$.
\end{claim}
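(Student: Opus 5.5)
We verify the four properties of Definition~\ref{def:correct_function} for $h=f\oplus g$ one at a time, and then bound the running time.

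\textbf{Properties 1 and 2.} These follow directly from the definition of the convolution. Suppose $h(\vecx)=f(\vecy)\cup g(\vecz)\neq\bot$. Then $f(\vecy),g(\vecz)\neq\bot$, so properties 1--2 for $f$ and $g$ give
\[ p(h(\vecx))=p(f(\vecy))+p(g(\vecz))\ge \vecy_0+\vecz_0\ge\vecx_0, \]
and similarly $\weight_t(h(\vecx))\le\vecy_t+\vecz_t\le\vecx_t$ for all $t\in[d-1]$, using $\vecy+\vecz\succeq\vecx$. If $h(\vecx)=\bot$, both properties hold trivially.

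\textbf{Properties 3 and 4.} Fix $\vecx\in\Rnd^C$ and $X\in\mathcal{Q}^{(\gamma)}_C(\vecx)$, and write $X_A=X\cap A$ and $X_B=X\cap B$. The goal is to construct $\vecy\in\Rnd^A$ with $X_A\in\mathcal{Q}^{(\alpha)}_A(\vecy)$ and $\vecz\in\Rnd^B$ with $X_B\in\mathcal{Q}^{(\beta)}_B(\vecz)$ such that $\vecy+\vecz\succeq\vecx$.

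Given such $\vecy,\vecz$, property 4 of $f$ and $g$ makes $f(\vecy),g(\vecz)\neq\bot$, so the pair is a candidate in the argmin. Property 3 of $f$ and $g$ then gives
\[ \weight_d(h(\vecx))\le \weight_d(f(\vecy))+\weight_d(g(\vecz))\le \weight_d(X_A)+\weight_d(X_B)=\weight_d(X). \]
This yields both property 3 and property 4 for $h$.

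\textbf{Choice of $\vecy$ and $\vecz$.} Take $\vecy_0$ to be the rounding down to $\mathcal{P}^A$ of $\max\{(1+\delta)^{-\alpha}p(X_A)-(8^\alpha-1)E_0^A,0\}$, and take $\vecy_t$ to be the rounding up to $\mathcal{W}^A_t$ of $(1+\delta)^{\alpha}\weight_t(X_A)+(8^\alpha-1)E_t^A$. Define $\vecz$ analogously on $B$ with $\beta$.

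Multiplicative rounding to powers of $1+\delta$ costs one extra factor $1+\delta$. This is absorbed by the step from exponent $\max\{\alpha,\beta\}$ to $\gamma$, since $X$ satisfies the $\gamma$-conditions. The additive terms satisfy $E^A,E^B\le E^C$, and $(8^\alpha-1)+(8^\beta-1)\le 2(8^{\gamma-1}-1)$, which leaves a slack of at least $6\cdot 8^{\gamma-1}E^C$ to be spent on the boundary effects.

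\textbf{Main obstacle: the domain boundaries.} The main difficulty is the truncation at the ends of the domains.

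At the lower end, a value below $E^A_0$ must be rounded down to $0$, losing at most $E_0^A\le E_0^C$ in profit. The analogous loss on the weight side is paid by rounding up to $E^A_t\le E^C_t$. Both losses are covered by the additive slack. The requirement that the rounded vectors actually lie in the grid also forces us to handle the cases $X_A=\emptyset$ or $X_B=\emptyset$, where we take the zero vector.

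At the upper end, the values must not exceed $\Delta^{-2}E^A$. Here I would use that $\vecx\in\Rnd^C$ is bounded by $\Delta^{-2}E^C$, and that $p(X_A)\le nE_0^A/\Delta$. If the $\vecy$-value would exceed the top of the grid, I would cap it at the top, and cap $\weight_t$ similarly. I expect the weight side to need care: when $\weight_t(X_A)$ exceeds the top of the grid, one must argue that $\vecx_t$ itself is then bounded by the domain of $C$ relative to $A$. One way is to use the top grid value and note that $\vecy_t+\vecz_t\ge\vecx_t$ still holds, because $\vecx_t\le\Delta^{-2}E^C_t$ and, for $t\ge2$, $E^A_t=E^C_t$. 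For $t=1$ and for profit, $E^C$ is attained in $A$ or in $B$, and the $\Delta^{-2}$ range absorbs the factor $n/\Delta$. The exponent $\gamma\le\log n$ keeps $8^\gamma E\le n^3\Delta\cdot(\cdot)$ negligible relative to this range.

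\textbf{Running time.} Iterate over all pairs $(\vecy,\vecz)\in\Rnd^A\times\Rnd^B$ and, for each pair, round $\vecy+\vecz$ to every dominated $\vecx\in\Rnd^C$ and keep the minimum $\weight_d$. A more direct way is to loop over all triples $(\vecx,\vecy,\vecz)$ and check $\vecy+\vecz\succeq\vecx$. This costs $|\Rnd|^3=(\log(n/\eta)/\eps)^{O(d)}$, and each union of virtual items takes $O(d)$ time.
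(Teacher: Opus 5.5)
Your overall structure matches the paper's proof. Properties 1--2 follow directly from $\vecy+\vecz\succeq\vecx$. For Properties 3--4 you split $X$ into $X\cap A$ and $X\cap B$, round profit down and weights up into $\Rnd^A,\Rnd^B$, check $\vecy+\vecz\succeq\vecx$ using the slack between $2\cdot 8^{\gamma-1}$ and $8^\gamma-1$, and then invoke Properties 3--4 of $f$ and $g$. The running time comes from the same triple enumeration. Using exponents $\alpha,\beta$ instead of the paper's common $\gamma-1$ is harmless.

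The gap is in the step you flag as the main obstacle: the upper end of the weight grids. Let $V:=(1+\delta)^\alpha\weight_t(X_A)+(8^\alpha-1)E^A_t$. If you cap $\vecy_t$ at the top grid point whenever $V$ exceeds it, you destroy $X_A\in\mathcal{Q}^{(\alpha)}_A(\vecy)$. That membership is exactly what you need to get $f(\vecy)\neq\bot$ from Property 4 and $\weight_d(f(\vecy))\le\weight_d(X_A)$ from Property 3. The inequality you then invoke, $\vecy_t+\vecz_t\ge\vecx_t$, also points the wrong way: $\vecy+\vecz\succeq\vecx$ requires $\vecy_t+\vecz_t\le\vecx_t$. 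As written, this step fails.

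What is needed is that the capping case never occurs, and the argument differs by coordinate (this is \Cref{cla:vecy_range} in the paper).

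For $t\ge 2$, $\weight_t(X_A)$ has no a priori bound in terms of $\capacity_t$, so the bound must come from $X$ itself. Using $\alpha\le\gamma$, $X_A\subseteq X$ and $E^A_t=E^C_t$, you get $V\le(1+\delta)^\gamma\weight_t(X)+(8^\gamma-1)E^C_t\le\vecx_t$. Since $\vecx_t$ is itself a point of $\mathcal{W}^C_t=\mathcal{W}^A_t$, rounding $V$ up stays in the grid and is at most $\vecx_t$.

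For $t=1$ this route is unavailable. $E^A_1$ can be much smaller than $E^C_1$, so $\vecx_1$ may lie above the top of $\mathcal{W}^A_1$. Here you use only $X_A\subseteq A$: $\weight_1(X_A)\le n\Delta^{-1}E^A_1$, and $8^{\alpha}\le n^3$ (this is where $\gamma\le\log n$ enters). Together these give $V\le\tfrac12\Delta^{-2}E^A_1$, and some power of $1+\delta$ lies in $[\tfrac12\Delta^{-2}E^A_1,\Delta^{-2}E^A_1]$ (if $E^A_1=0$, then $V=0$). The profit coordinate needs nothing more than what you wrote, since rounding down is always defined.

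Your special treatment of $X_A=\emptyset$ via $\vecy=\mathbf{0}$ is also wrong. For $\alpha\ge 1$ and $E^A_t>0$ we have $\emptyset\notin\mathcal{Q}^{(\alpha)}_A(\mathbf{0})$, so nothing forces $f(\mathbf{0})\neq\bot$. It is unnecessary anyway: the general recipe already handles $X_A=\emptyset$ by rounding up $(8^\alpha-1)E^A_t$, which lies within the grid range.
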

\begin{claimproof}
	For the running time, we first recall that $|\Rnd^J| = (\log(n/\eta)/\eps)^{O(d)}$ for any $J \subseteq I$. To compute $h = f \oplus g$, we iterate over $(\log(n/\eta)/\eps)^{O(d)}$ triples $\vecx,\vecy,\vecz$. If $\vecy + \vecz \succeq \vecx$, $f(\vecy) \ne \bot$ and $g(\vecz) \ne \bot$ then we compute the new virtual item $f(\vecy) \cup g(\vecz)$ in time $O(d)$, and update $(f \oplus g)(\vecx)$ in case we find a smaller $d$-th weight. Hence, the total running time is $O(d) \cdot (\log(n/\eta)/\eps)^{O(d)} = (\log(n/\eta)/\eps)^{O(d)}$.
	
	It remains to show that $h$ is $\gamma$-approximate. 
	If
	$h(\vecx)=\bot$, Properties~\ref{clm:correctness_of_dpv_profit} and
	\ref{clm:correctness_of_dpv_weights} follow from the definition of $\bot$, since $p(\bot) = \infty$ and $\weight_t(\bot) = -\infty$ for all $t \in [d]$.
	Otherwise, let $\vecy\in\Rnd^A$ and $\vecz\in\Rnd^B$ be the pair
	selected by $\oplus$, so that
	$h(\vecx)=f(\vecy)\cup g(\vecz)$. 
	Since $A,B$ are disjoint, it holds that $p(h(\vecx)) = p(f(\vecy)\cup g(\vecz)) = p(f(\vecy)) + p(g(\vecz)) \ge \vecy_0 + \vecz_0 \ge \vecx_0$, where the last two steps used that $f$ is $\alpha$-approximate on $A$, $g$ is $\beta$-approximate on $B$, and $\vecy + \vecz \succeq \vecx$. Analogously, $\weight_t(h(\vecx)) = \weight_t(f(\vecy)) + \weight_t(g(\vecz)) \le \vecy_t + \vecz_t \le \vecx_t$. 
	
	Property \ref{clm:correctness_of_dpv_item3} is trivial if $\mathcal{Q}_{C}^{(\gamma)}(\vecx) = \emptyset$, since then the right hand side is $\infty$. Hence, for proving Properties \ref{clm:correctness_of_dpv_item3} and \ref{clm:correctness_of_dpv_item4} we can assume $\mathcal{Q}_{C}^{(\gamma)}(\vecx) \ne \emptyset$. So let $X \in \mathcal{Q}_{C}^{(\gamma)}(\vecx)$ minimize $\weight_d(X)$.
	We partition $X$ into $Y = X \cap A$ and $Z = X \cap B$. Define $\vecy',\vecz' \in \mathbb{R}^d_{\ge 0}$ by
	\begin{align*}
		\vecy'_0 &\coloneqq \max\{(1+\delta)^{-(\gamma-1)}p(Y) - (8^{\gamma-1} - 1) E_0^A, 0\},
		\\
		\vecy'_t &\coloneqq (1+\delta)^{\gamma-1}\weight_t(Y) + (8^{\gamma-1} - 1) E_t^A,
		\qquad\text{for each }t\in [d-1], \\
		\vecz'_0 &\coloneqq \max\{(1+\delta)^{-(\gamma-1)}p(Z) - (8^{\gamma-1} - 1) E_0^B, 0\}, \\
		\vecz'_t &\coloneqq (1+\delta)^{\gamma-1}\weight_t(Z) + (8^{\gamma-1} - 1) E_t^B,
		\qquad\text{for each }t\in [d-1].
	\end{align*}
	Thus, $Y\in\mathcal{Q}_A^{(\gamma-1)}(\vecy')$ and
	$Z\in\mathcal{Q}_B^{(\gamma-1)}(\vecz')$.
	We first show that the entries of $\vecy'$ lie in a good range.
	\begin{claim} \label{cla:vecy_range}
		For every $1 \le \gamma \le \log n$ and $t \in \{0,1,\ldots,d-1\}$ we have 
		$$0 \le \vecy'_t \le \max(\{0\} \cup \{ (1+\delta)^k \mid k \in \mathbb{Z}, (1+\delta)^k \le \Delta^{-2} E_t^A \}).$$
	\end{claim}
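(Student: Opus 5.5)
Lower bound: $\vecy'_0 \ge 0$ holds by the explicit $\max$ with $0$. For $t \ge 1$ we have $\vecy'_t \ge 0$ because $\weight_t(Y)\ge 0$, $E_t^A \ge 0$ and $8^{\gamma-1}-1 \ge 0$ when $\gamma \ge 1$. The work is therefore the upper bound, and I would split into the profit coordinate and the weight coordinates.

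For $t=0$, first handle the case $\vecy'_0 = 0$, which is trivial. Otherwise $\vecy'_0 \le p(Y) \le |A| \cdot \max_{i \in A} p(i) \le n\,\Delta^{-1} E_0^A$. Next I use $\Delta \le (10n)^{-5}$, so $n\Delta^{-1} \le \Delta^{-2}/(1+\delta)$ with huge room to spare, since $\Delta^{-1} \gg 2n$. This gives $\vecy'_0 \le \Delta^{-2}E_0^A/(1+\delta)$. The largest power $(1+\delta)^k$ not exceeding $\Delta^{-2}E_0^A$ is at least $\Delta^{-2}E_0^A/(1+\delta)$, so it dominates $\vecy'_0$. (If $E_0^A = 0$, then all profits in $A$ are $0$ and $\vecy'_0 = 0$.)

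For $t=1$, the same argument works. Every item of $A$ has $\weight_1(i) \le \Delta^{-1}E_1^A$, so $(1+\delta)^{\gamma-1}\weight_1(Y) \le (1+\delta)^{\log n}\, n\,\Delta^{-1}E_1^A \le e^{\eps/10}\, n\,\Delta^{-1}E_1^A$. Also $(8^{\gamma-1}-1)E_1^A \le 8^{\log n}E_1^A = n^3 E_1^A$. The sum is at most $(2n\Delta^{-1} + n^3)E_1^A \le \Delta^{-2}E_1^A/(1+\delta)$ because $\Delta^{-1} \ge (10n)^5$. I then conclude exactly as for $t=0$, with the degenerate case $E_1^A = 0$ giving $\vecy'_1 = 0$.

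For $2 \le t \le d-1$ there is an obstacle: no a-priori bound on $\weight_t(Y)$ is available from the claim's hypotheses alone. Here I would use the context of the surrounding proof, where $X \in \mathcal{Q}_C^{(\gamma)}(\vecx)$ with $\vecx \in \Rnd^C$, so $\vecx_t \le \Delta^{-2}E_t^C = \Delta^{-1}\capacity_t$. Membership of $X$ gives $(1+\delta)^\gamma \weight_t(X) \le \vecx_t$, and therefore $(1+\delta)^{\gamma-1}\weight_t(Y) \le \vecx_t/(1+\delta)$. For these $t$ we have $E_t^A = E_t^C = \Delta\capacity_t$, so
\[
\vecy'_t \le \frac{\vecx_t}{1+\delta} + n^3\Delta\capacity_t \le \frac{\Delta^{-1}\capacity_t}{1+\delta} + n^3\Delta\capacity_t .
\]
This is only marginally larger than $\Delta^{-1}\capacity_t/(1+\delta)$, and I expect this margin to be the main delicate point.

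To close it, I would refine the argument. Since $\vecx_t$ is itself a power of $(1+\delta)$ that is at most $\Delta^{-2}E_t^C$, I use $\vecx_t - (1+\delta)^\gamma \weight_t(X) \ge (8^\gamma-1)E_t^C$, which holds by membership in $\mathcal{Q}_C^{(\gamma)}(\vecx)$. It follows that
\[
\vecy'_t \le (1+\delta)^{-1}\bigl(\vecx_t - (8^\gamma-1)E_t^C\bigr) + (8^{\gamma-1}-1)E_t^C \le \vecx_t,
\]
using $8^\gamma - 1 \ge 8^{\gamma-1}-1$. Hence $\vecy'_t \le \vecx_t \le \Delta^{-2}E_t^A$. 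Since $\vecx_t$ is a grid point, it is bounded by the stated maximum grid value. The same subtraction trick can also serve as a uniform argument for $t=0,1$ via $\vecx$, but there the direct size bound already suffices.
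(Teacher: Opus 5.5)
Your proof is correct and follows essentially the paper's argument. For $t\in\{0,1\}$ you use direct size bounds placing $\vecy'_t$ far below $\Delta^{-2}E_t^A/(1+\delta)$. For $2\le t\le d-1$ you bound $\vecy'_t\le\vecx_t$ via $X\in\mathcal{Q}_C^{(\gamma)}(\vecx)$, and then use that $\vecx_t\in\mathcal{W}_t^C$ is itself $0$ or a power of $1+\delta$ at most $\Delta^{-2}E_t^A$.

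Your first attempt for $t\ge 2$ is an unnecessary detour. In your final chain, the inequality $8^\gamma-1\ge 8^{\gamma-1}-1$ suffices only because you may first drop the factor $(1+\delta)^{-1}$, which is valid since $\vecx_t-(8^\gamma-1)E_t^C\ge 0$; the paper's one-line chain $\vecy'_t\le(1+\delta)^\gamma\weight_t(X)+(8^\gamma-1)E_t^C\le\vecx_t$ says the same thing more directly.
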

	\begin{proof}
	The lower bound $\vecy'_t \ge 0$ is immediate. For the upper bound, for $t = 0$ we have
	\[ \vecy'_0 = \max\{(1+\delta)^{-(\gamma-1)}p(Y) - (8^{\gamma-1} - 1) E_0^A, 0\} \le p(Y) \le n \cdot \max_{i \in A} p(i) = n \Delta^{-1} E_0^A \le \tfrac 12 \Delta^{-2} E_0^A, \]
	and since $\delta \le 1$ there is at least one power of $1+\delta$ in $[\tfrac 12 \Delta^{-2} E_0^A, \Delta^{-2} E_0^A]$ (unless $E_0^A = 0$, in which case $\vecy'_0 = 0$, so the statement is satisfied).
	
	For $t = 1$, since $E_1^A = \Delta \max_{i \in A} \weight_1(i) \ge \Delta \weight_1(Y) / n$ and $(1+\delta)^\gamma \le 2$ we have
	\begin{align*} \vecy'_1 &= (1+\delta)^{\gamma-1}\weight_1(Y) + (8^{\gamma-1} - 1) E_1^A  \\
	&\le \big((1+\delta)^{\gamma-1} n \Delta^{-1} + (8^{\gamma-1} - 1)\big) E_1^A \le (2 n \Delta^{-1} + n^3) E_1^A \le \tfrac 12 \Delta^{-2} E_1^A, 
	\end{align*}
	and the same argument as before applies.
	
	Finally, for $t > 1$, since $X \in \mathcal{Q}_{C}^{(\gamma)}(\vecx)$, $\vecx \in \Rnd^C$, and $E_t^C = E_t^A = \Delta \capacity_t$ we have
	\[ \vecy'_t = (1+\delta)^{\gamma-1}\weight_t(Y) + (8^{\gamma-1} - 1) E_t^A \le (1+\delta)^\gamma \weight_t(X) + (8^\gamma - 1) E_t^C \le \vecx_t \le \Delta^{-2} E_t^C = \Delta^{-2} E_t^A. \]
	Since $\vecx \in \Rnd^C$, $\vecx_t$ is either 0 or a power of $1+\delta$, in particular we have $\vecy'_t = 0$ or
	\[
	\vecy'_t \le \vecx_t \le \max\{ (1+\delta)^k \mid k \in \mathbb{Z}, (1+\delta)^k \le \Delta^{-2} E_t^A \}. \qedhere
	\]
	\end{proof}
	
	Since $\vecy'$ might not be in $\Rnd^A$, we round it to $\vecy \in \Rnd^A$ as follows. 
	Round down $\vecy'_0$ to the largest value $\vecy_0 \in \mathcal{P}^A$ not
	exceeding $\vecy'_0$. For each $t \in [d-1]$, round up $\vecy'_t$ to the smallest value $\vecy_t \in \mathcal{W}_t^A$ not below $\vecy'_t$. By the definitions of $\mathcal{P}^A, \mathcal{W}_t^A, \Rnd^A$ and \Cref{cla:vecy_range}, this rounding is well-defined and yields a vector $\vecy \in \Rnd^A$ that satisfies
	\begin{align} \label{eq:truncated-rounding}
		\vecy'_0 \ge \vecy_0 \ge \tfrac {\vecy'_0}{1+\delta} - E_0^A \quad \text{ and } \quad \vecy'_t \le \vecy_t \le (1+\delta)(\vecy'_t + E_t^A) \quad \text{ for any } t \in [d-1]. 
	\end{align}
	By symmetry, an analogous statement holds for $\vecz'$ which we round to $\vecz \in \Rnd^B$.
	
	Using~\eqref{eq:truncated-rounding} and the definition of $\vecy'$, and the analogous statements for $\vecz'$, we have
	\begin{align*}
		\vecy_0+\vecz_0
		&\geq(1+\delta)^{-\gamma}\cdot p(Y)
		-8^{\gamma-1} E_0^A + (1+\delta)^{-\gamma}\cdot p(Z)
		-8^{\gamma-1} E_0^B\\
		&\geq (1+\delta)^{-\gamma} p(X) - (8^{\gamma} - 1) E_0^C,
	\end{align*}
	and since also $\vecy_0+\vecz_0 \ge 0$, and $X\in\mathcal{Q}_C^{(\gamma)}(\vecx)$, we obtain
	\[ \vecy_0+\vecz_0 \ge \max\{ (1+\delta)^{-\gamma} p(X) - (8^{\gamma} - 1) E_0^C, 0 \} \ge \vecx_0.
	\]
	By similar arguments, for every $t\in\{1,\ldots,d-1\}$,
	\begin{align*}
		\vecy_t+\vecz_t
		&\leq (1+\delta)^\gamma \weight_t(Y) + 2 \cdot 8^{\gamma-1} E_t^A + (1+\delta)^\gamma \weight_t(Z) + 2 \cdot 8^{\gamma-1} E_t^B \\
		&\leq (1+\delta)^\gamma \weight_t(X)
		+ (8^\gamma - 1) E_t^C \\
		&\leq \vecx_t.
	\end{align*}
	Thus $\vecy+\vecz\succeq\vecx$, so this pair is considered by the
	convolution. 
	Since profit was rounded down and weights
	were rounded up, it holds that $\mathcal{Q}_A^{(\gamma-1)}(\vecy') \subseteq \mathcal{Q}_A^{(\gamma-1)}(\vecy)$ and
	$\mathcal{Q}_B^{(\gamma-1)}(\vecz') \subseteq \mathcal{Q}_B^{(\gamma-1)}(\vecz)$, and thus $Y$ shows that $\mathcal{Q}_A^{(\gamma-1)}(\vecy) \ne \emptyset$ and $Z$ shows that $\mathcal{Q}_B^{(\gamma-1)}(\vecz) \ne \emptyset$. Since $f$ is $\alpha$-approximate on $A$ and $g$ is $\beta$-approximate on $B$, it follows that $f(\vecy) \ne \bot$ and $g(\vecz)\neq\bot$. By definition of $\oplus$ it now follows that $h(\vecx)\neq\bot$, proving Property~\ref{clm:correctness_of_dpv_item4}. 
	Moreover, since the convolution chooses a minimizing feasible
	pair and $f,g$ are $(\gamma-1)$-approximate, we obtain
	\[
		\weight_d(h(\vecx))
		\leq\weight_d(f(\vecy))+\weight_d(g(\vecz))
		\leq\weight_d(Y)+\weight_d(Z)=\weight_d(X),
	\]
	which proves Property~\ref{clm:correctness_of_dpv_item3}.
\end{claimproof}

Recall that we assume $n$ to be a power of two. We partition $I = \{1,\ldots,n\}$ in a binary-tree-like way. Specifically, for every $\alpha \in
\{0,\ldots,\log{n}\}$ and $\beta \in [n/2^\alpha]$, let $I_{\alpha,\beta}$ be the items with index in $[(\beta - 1) \cdot 2^\alpha + 1, \beta \cdot 2^\alpha]$, that is,
\[
	I_{\alpha,\beta}
	\coloneqq[(\beta - 1) \cdot 2^\alpha + 1, \beta \cdot 2^\alpha] \cap I.
\]
Observe that $I_{\log{n},1} = I$ and $I_{\alpha,\beta} = I_{\alpha-1,2\beta-1}
\cup I_{\alpha-1,2\beta}$ for every $\alpha \ge 1$ and every $\beta$. 
Using \Cref{clm:correctness_of_dpv}, we can efficiently compute an $\alpha$-approximate function for each interval $I_{\alpha,\beta}$.

\begin{claim}\label{lem:functions_fJ}
	In time $n\cdot \left(\log(n/\eta)/\eps\right)^{O(d)}$, for all $\alpha \in \{0,\ldots,\log{n}\}$ and $\beta \in [n/2^{\alpha}]$ we can compute functions
	$f_{\alpha,\beta}$ such that each $f_{\alpha,\beta}$ is $\alpha$-approximate on $I_{\alpha,\beta}$.
\end{claim}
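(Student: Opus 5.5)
We will build the functions $f_{\alpha,\beta}$ bottom-up over the binary tree of intervals, by induction on $\alpha$. For the leaves $\alpha=0$, each $I_{0,\beta}=\{j\}$ is a single item, and we define $f_{0,\beta}$ directly by brute force. For every $\vecx\in\Rnd^{\{j\}}$ we consider the two candidate sets $\emptyset$ and $\{j\}$. Among those candidates $X$ with $p(X)\ge\vecx_0$ and $\weight_t(X)\le\vecx_t$ for all $t\in[d-1]$, we choose the one minimizing $\weight_d(X)$ and let $f_{0,\beta}(\vecx)$ be the corresponding virtual item ($v_\emptyset$ or $v_{\{j\}}$). If there is no such candidate, we set $f_{0,\beta}(\vecx)=\bot$.

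To check that $f_{0,\beta}$ is $0$-approximate, note that for $\gamma=0$ the definition of $\mathcal{Q}^{(0)}_{\{j\}}(\vecx)$ reduces exactly to the sets $X\subseteq\{j\}$ with $\vecx_0\le p(X)$ and $\vecx_t\ge\weight_t(X)$, since $(1+\delta)^0=1$, $8^0-1=0$, and $\max\{p(X),0\}=p(X)$. Properties \ref{clm:correctness_of_dpv_profit} and \ref{clm:correctness_of_dpv_weights} then hold by the feasibility check. Properties \ref{clm:correctness_of_dpv_item3} and \ref{clm:correctness_of_dpv_item4} hold because we minimized over exactly $\mathcal{Q}^{(0)}_{\{j\}}(\vecx)$. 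Each leaf costs $O(d)\cdot|\Rnd^{\{j\}}|=(\log(n/\eta)/\eps)^{O(d)}$ time, so all $n$ leaves together cost $n\cdot(\log(n/\eta)/\eps)^{O(d)}$.

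For the inductive step, let $\alpha\ge1$ and set $f_{\alpha,\beta}:=f_{\alpha-1,2\beta-1}\oplus f_{\alpha-1,2\beta}$. The intervals $I_{\alpha-1,2\beta-1}$ and $I_{\alpha-1,2\beta}$ are disjoint, and their union is $I_{\alpha,\beta}$. They are also nonempty, since we padded $I$ with dummy items so that $n$ is a power of two. By induction both children are $(\alpha-1)$-approximate, and $\gamma=(\alpha-1)+1=\alpha\le\log n$. Therefore \Cref{clm:correctness_of_dpv} applies, and $f_{\alpha,\beta}$ is $\alpha$-approximate on $I_{\alpha,\beta}$. Each convolution costs $(\log(n/\eta)/\eps)^{O(d)}$ by the same claim. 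The tree has $\sum_\alpha n/2^\alpha\le 2n$ nodes, so the total running time is $n\cdot(\log(n/\eta)/\eps)^{O(d)}$. Each union of two virtual items costs only $O(d)$, so no set is ever materialized.

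The only step needing care is the base case. We must confirm that the domain $\Rnd^{\{j\}}$ is well defined even when $E^{\{j\}}_0$ or $E^{\{j\}}_1$ is $0$ (for a zero-profit or zero-weight dummy item). In that case the domain degenerates to $\{0\}$ in that coordinate, and the brute-force check still matches $\mathcal{Q}^{(0)}$. We must also confirm that ties and dummy items do not break Property \ref{clm:correctness_of_dpv_item4}. Beyond that, the rest is a direct application of \Cref{clm:correctness_of_dpv}.
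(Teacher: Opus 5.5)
Your proposal is correct and matches the paper's proof. The paper also defines $f_{0,\beta}$ explicitly at the leaves. It returns $v_{\{\beta\}}$ when $p(\beta)\ge\vecx_0>0$ and the weight bounds hold, $v_\emptyset$ when $\vecx_0=0$, and $\bot$ otherwise, which coincides with your brute-force minimization over $\mathcal{Q}^{(0)}_{\{j\}}(\vecx)$. It then sets $f_{\alpha,\beta}:=f_{\alpha-1,2\beta-1}\oplus f_{\alpha-1,2\beta}$ bottom-up via \Cref{clm:correctness_of_dpv}, bounding the total time by $\sum_{\alpha}(n/2^\alpha)\cdot(\log(n/\eta)/\eps)^{O(d)}$, exactly as you do.
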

\begin{claimproof}
	In the base case $\alpha = 0$ and $\beta \in [n]$, for any $\vecx \in \Rnd^{I_{0,\beta}}$ we set
	\begin{equation*}
		f_{0,\beta}(\vecx) \coloneqq \begin{cases}
			v_{\{ \beta \}} & \text{if } p(\beta) \ge \vecx_0 > 0 \text{ and } \weight_t(\beta) \leq \vecx_t \text{ for each } t \in [d-1],\\
			v_{\emptyset} & \text{if } \vecx_0 = 0,\\
			\bot & \text{otherwise.}
		\end{cases}
	\end{equation*}
	Note that for $\gamma = 0$ all errors in the definition of $\gamma$-approximate in \cref{def:correct_function} vanish.
	It is immediate that $f_{0,\beta}$ is 0-approximate. Indeed, if
	$\vecx_0>0$, the
	feasible family consists of the singleton set~$\{\beta\}$ precisely in the first case and is
	empty otherwise. If $\vecx_0=0$, the empty set is feasible and has
	$d$-th weight zero. These observations prove all properties in the definition
	of a 0-approximate function.
	
	With this base case in hand, we compute functions for every $\alpha = 1,2,\ldots,\log{n}$ and $\beta = 1,2,\ldots,n/2^\alpha$ by
	\begin{align*}
		f_{\alpha,\beta} &\coloneqq f_{\alpha-1,2\beta-1} \oplus f_{\alpha-1,2\beta}.
	\end{align*}
	By~\cref{clm:correctness_of_dpv}, each computed function $f_{\alpha,\beta}$ is $\alpha$-approximate on $I_{\alpha,\beta}$. It remains to analyze the running time. For each $\beta \in [n]$, the explicit table of $f_{0,\beta}$ can be constructed in time $O(d) \cdot |\Rnd^{I_{0,\beta}}|=(\log(n/\eta)/\eps)^{O(d)}$. By~\cref{clm:correctness_of_dpv}, the total running time is therefore bounded by
	\begin{align*}
		\sum_{\alpha = 0}^{\log n} \sum_{\beta = 1}^{n/2^\alpha} \left(\frac{\log(n/\eta)}{\eps}\right)^{O(d)} =
		\sum_{\alpha = 0}^{\log n} \frac{n}{2^\alpha} \cdot \left(\frac{\log(n/\eta)}{\eps}\right)^{O(d)}=
		n \left(\frac{\log(n/\eta)}{\eps}\right)^{O(d)}
		.&\hfill\qedhere
	\end{align*}
\end{claimproof}

\begin{claim}\label{clm:suffix-functions}
	In time $n\cdot \left(\log(n/\eta)/\eps\right)^{O(d)}$, for all $i \in [n]$ we can compute functions $f_i$ such that each $f_i$ is $\log{n}$-approximate on $I_i := \{i,\ldots,n\}$.
\end{claim}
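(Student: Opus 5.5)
Each suffix $I_i=\{i,\ldots,n\}$ decomposes into $O(\log n)$ dyadic intervals $I_{\alpha,\beta}$ from \Cref{lem:functions_fJ}, one per set bit of $n-i+1$ (suffix-of-a-dyadic-tree decomposition). So I would convolve the corresponding functions, walking from right to left so that results are shared across all $i$. The one thing to control is the approximation index, which must stay at most $\log n$.

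The plan is to define suffix functions recursively. First set $f_n := f_{0,n}$. For $i<n$, let $\alpha$ be the largest integer with $2^\alpha \mid (i-1)$ and $i+2^\alpha-1\le n$. Since $n$ is a power of two, the block $I_{\alpha,\beta}$ with $\beta=(i-1)/2^\alpha+1$ starting at $i$ is a dyadic interval. If it equals $I_i$, set $f_i := f_{\alpha,\beta}$. Otherwise set $f_i := f_{\alpha,\beta}\oplus f_{i+2^\alpha}$.

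This recursion does not give each $f_i$ index $\le \log n$ directly: \Cref{clm:correctness_of_dpv} adds $1$ to the maximum index at each step, and the chain for $I_i$ has up to $\log n$ blocks. I would therefore prove by induction that $f_{j}$ is $\gamma_j$-approximate with $\gamma_j\le \alpha'+1$, where $\alpha'$ is the size parameter of the first block of $I_j$. The sizes of the dyadic blocks in the decomposition of a suffix strictly increase from left to right: the suffix starting at $i$ in the aligned tree is $I_{\alpha,\beta}$ followed by blocks at strictly larger levels. So the next suffix $f_{i+2^\alpha}$ starts with a block of level $\alpha''>\alpha$, and by induction it is $(\alpha''+1)$-approximate. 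Combining via \Cref{clm:correctness_of_dpv} gives index $\max\{\alpha,\alpha''+1\}+1=\alpha''+2$.

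That overshoots the target, so I would tighten the invariant to $\gamma_j \le \log n - \alpha(j)+\alpha(j)$. A cleaner way is to induct on the \emph{last} block instead. Compose in the order the blocks appear, so the accumulated index is $\max(\text{prev}, \alpha_k)+1$ with $\alpha_1<\alpha_2<\cdots<\alpha_k\le \log n-1$ (strict unless the suffix is the whole set). By induction the accumulated index after block $k$ is at most $\alpha_k+1\le \log n$: the previous accumulated index is at most $\alpha_{k-1}+1\le\alpha_k$, so the max is $\alpha_k$, and adding $1$ gives $\alpha_k+1$. This suggests defining things left to right. To share work across $i$, I would instead use the recursion $f_i := f_{\alpha,\beta}\oplus f_{i+2^\alpha}$ while noting that \Cref{clm:correctness_of_dpv} is symmetric in its arguments. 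I would then verify that the index bound really propagates; the main obstacle is getting this bookkeeping right. If it fails, the fallback is to compute, for each $i$, the chain separately. That costs $O(\log n)$ convolutions per $i$, hence $n\log n\cdot(\log(n/\eta)/\eps)^{O(d)}$ in total, which is still within the claimed bound. The index bound for the fallback is exactly the left-to-right induction above, and it satisfies the hypothesis $\gamma\le\log n$ of \Cref{clm:correctness_of_dpv}. The case $I_i=I$ is handled by $f_{\log n,1}$.

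Finally, one must check that the functions are defined on the right domains: $f_i$ is an $I_i$-function, since convolution of an $A$-function and a $B$-function yields an $(A\cup B)$-function by \Cref{def:convolution}. For the running time, \Cref{lem:functions_fJ} takes $n(\log(n/\eta)/\eps)^{O(d)}$, and at most $O(n\log n)$ convolutions each cost $(\log(n/\eta)/\eps)^{O(d)}$. Absorbing the $\log n$ factor gives the claimed $n(\log(n/\eta)/\eps)^{O(d)}$.
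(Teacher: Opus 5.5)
Your fallback argument is correct and is exactly the paper's proof: compose the at most $\log n$ maximal dyadic blocks of each $I_i$ separately, in increasing order of level, so the accumulated index is $\alpha_k+1\le\log n$ (or $\alpha_1\le\log n$ when $k=1$), at a cost of $O(\log n)$ convolutions per $i$. Note, however, that the right-to-left sharing recursion $f_i:=f_{\alpha,\beta}\oplus f_{i+2^\alpha}$ cannot be rescued by symmetry of $\oplus$: the nesting from the right means the convolution claim only yields index $\alpha_k+k-1$ (e.g.\ $2\log n-2$ for $i=2$), which violates its hypothesis $\gamma\le\log n$, so the fallback is the actual proof rather than a backup.
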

\begin{claimproof}
	We use~\cref{lem:functions_fJ} to compute functions $f_{\alpha,\beta}$ for every $\alpha \in \{0,\ldots,\log{n}\}$ and $\beta \in [n/2^\alpha]$, where $f_{\alpha,\beta}$ is $\alpha$-approximate on $I_{\alpha,\beta}$.
	Next, for every $i \in [n]$ we construct the partition $P_i$ of $I_i$ into the maximal intervals $I_{\alpha,\beta}$ that are contained in $I_i$.\footnote{This partition is easily computed by a recursive procedure: Start with the call $\textsc{Rec}(\log n,1,i,n)$. In the call $\textsc{Rec}(\alpha,\beta,i,j)$, if $I_{\alpha,\beta} \subseteq \{i,\ldots,j\}$ then add $I_{\alpha,\beta}$ to $P_i$ and recurse on $\textsc{Rec}(\alpha-1,2\beta-2,i,j - 2^\alpha)$, otherwise recurse on $\textsc{Rec}(\alpha-1,2\beta,i,j)$. Abort once $\beta \le 0$ or $\alpha < 0$.} 
	Recall that we assumed $n$ to be a power of two. From this one can observe that $P_i = \{I_{\alpha_1,\beta_1}, \ldots, I_{\alpha_k,\beta_k}\}$ with $\alpha_1 < \ldots < \alpha_k$. In particular, we have $k \le \log n$. 
	We now let $f^{(1)} := f_{\alpha_1,\beta_1}$ and compute for each $q=2,3,\ldots,k$:
	\[ f^{(q)} := f^{(q-1)} \oplus f_{\alpha_q,\beta_q}. \]
	Finally, we let $f_i := f^{(k)}$.
	Since $P_i$ is a partition of $I_i$ and $\alpha_1 < \ldots < \alpha_k$, \cref{clm:correctness_of_dpv} implies that $f^{(q)}$ is $(\alpha_q+1)$-approximate on $I_{\alpha_1,\beta_1} \cup \ldots \cup I_{\alpha_q,\beta_q}$. In particular, $f_i$ is $(\alpha_k+1)$-approximate on~$I_i$. We finish by observing that if $k \ge 2$ then $\alpha_k < \log n$, so $f_i$ is $\log n$-approximate on $I_i$, and if $k = 1$ then $f_i = f^{(1)} = f_{\alpha_1,\beta_1}$ is $\alpha_1 \le \log n$-approximate on $I_i$.
	
	The construction performs at most $\log n$
	convolutions per $i \in [n]$, so the total running time is
	$n\cdot(\log(n/\eta)/\eps)^{O(d)}$ by~\cref{clm:correctness_of_dpv}.
\end{claimproof}

We now construct the sets promised by~\cref{lem:internal-APX-Lemma}. Compute the functions $f_i$ from \cref{clm:suffix-functions}, where each $f_i$ is $\log n$-approximate on $I_i = \{i,\ldots,n\}$.
Return, for all $i \in [n]$, the sets
\begin{equation*}
	\mathcal{L}_i\coloneqq
	\left\{f_i(\vecx)\,\middle|\,
	\vecx\in\Rnd^{I_i},\ f_i(\vecx)\neq\bot\right\}.
\end{equation*}

Since $|\Rnd^{I_i}|=(\log(n/\eta)/\eps)^{O(d)}$, every list $\mathcal{L}_i$ has the required
size. As the functions $f_i$ are computed in total time $n\cdot(\log(n/\eta)/\eps)^{O(d)}$ and stored explicitly, the lists $\mathcal{L}_1,\ldots,\mathcal{L}_n$ can be constructed in
time $n\cdot(\log(n/\eta)/\eps)^{O(d)}$. Since $f_i$ is $\log n$-approximate on $I_i$, each virtual item $v \in \mathcal{L}_i$ satisfies $\textsf{set}(v) \subseteq I_i$.

It remains to prove the approximation guarantee. Fix $i\in[n]$ and
$Q\in\mathcal{Q}_i$. 
Define $\vecx' \in \mathbb{R}_{\geq0}^d$ by
\begin{align*}
		\vecx'_0 &\coloneqq \max\{(1+\delta)^{- \log n}p(Q) - n^3 E_0^{I_i}, 0\},\\
		\vecx'_t &\coloneqq (1+\delta)^{\log n}\weight_t(Q) + n^3 E_t^{I_i},
		\qquad\qquad\qquad \text{for }t\in [d-1],
\end{align*}
so that $Q \in \mathcal{Q}^{(\log n)}_{I_i}(\vecx')$.
\begin{claim} \label{cla:range_vecx}
We have $\vecx'_t \in \big[2 E_t^{I_i}, \tfrac 12 \Delta^{-2} E_t^{I_i}\big]$ for all $t \in \{0,1,\ldots,d-1\}$.
\end{claim}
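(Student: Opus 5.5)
The plan is to verify the two-sided bound coordinate by coordinate, for $t=0$, $t=1$ and $t\in\{2,\ldots,d-1\}$ separately. Each case combines the defining property of $Q\in\mathcal{Q}_i$ with the fact that $\Delta=(\eps\eta/(10n))^5$ is tiny compared with $n$ and $\eta$. Two facts get used everywhere. First, since $\delta\log n=\eps/10\le 1/10$,
\[
(1+\delta)^{\log n}\le e^{\eps/10}\le 2 \qquad\text{and}\qquad (1+\delta)^{-\log n}\ge \tfrac12 .
\]
Second, $\Delta^{-1}\ge (10n/\eta)^5\ge n^5\eta^{-5}$. This gives $\eta\Delta^{-1}\ge n^5$, $\Delta\le 1/(8n)$ and $\Delta\le \eta/(8n^3)$, which is more than enough slack for every inequality below, using $n\ge 2$. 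If some scale $E_t^{I_i}$ is $0$, the claimed interval degenerates to $\{0\}$. I will check that $\vecx'_t=0$ in that case (all profits in $I_i$ zero, all first weights in $I_i$ zero, or $\capacity_t=0$ force the corresponding quantity of $Q$ to vanish), so from now on I assume $E_t^{I_i}>0$.

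\emph{Profit coordinate $t=0$.} For the upper bound, $\vecx'_0\le p(Q)\le n\max_{i\le j\le n}p(j)=n\Delta^{-1}E_0^{I_i}$. This is at most $\tfrac12\Delta^{-2}E_0^{I_i}$ because $\Delta\le 1/(2n)$. For the lower bound, $Q\in\mathcal{Q}_i$ gives $p(Q)\ge \eta\max_j p(j)=\eta\Delta^{-1}E_0^{I_i}\ge n^5E_0^{I_i}$. Hence
\[
(1+\delta)^{-\log n}p(Q)-n^3E_0^{I_i}\ \ge\ \bigl(\tfrac12 n^5-n^3\bigr)E_0^{I_i}\ \ge\ 2E_0^{I_i}
\]
for $n\ge 2$, so the $\max\{\cdot,0\}$ is attained by the first term and $\vecx'_0\ge 2E_0^{I_i}$. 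This case is the main point: it is the only place where the hypothesis $p(Q)\ge\eta\max_j p(j)$ is needed, precisely to stop $\vecx'_0$ from being truncated to $0$.

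\emph{Weight coordinates.} The lower bounds are immediate, since $\vecx'_t\ge n^3E_t^{I_i}\ge 2E_t^{I_i}$. For the upper bound with $t=1$, use $\weight_1(Q)\le n\max_j\weight_1(j)=n\Delta^{-1}E_1^{I_i}$. Then $\vecx'_1\le(2n\Delta^{-1}+n^3)E_1^{I_i}\le\tfrac12\Delta^{-2}E_1^{I_i}$, as in the proof of \Cref{cla:vecy_range}. For $2\le t\le d-1$, use $\weight_t(Q)\le\eta^{-1}\capacity_t=\eta^{-1}\Delta^{-1}E_t^{I_i}$. Then $\vecx'_t\le(2\eta^{-1}\Delta^{-1}+n^3)E_t^{I_i}$, and this is at most $\tfrac12\Delta^{-2}E_t^{I_i}$ because $\Delta\le \eta/8$ and $\Delta\le 1/(4n^3)$. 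Only routine arithmetic on $\Delta$ remains.
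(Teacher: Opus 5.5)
Your proposal is correct and follows the paper's proof essentially line by line: coordinatewise bounds using $(1+\delta)^{\log n}\le 2$, the two defining properties of $Q\in\mathcal{Q}_i$ (the profit lower bound being what keeps $\vecx'_0$ from being truncated to $0$), and the smallness of $\Delta$. Your separate treatment of the case $E_t^{I_i}=0$ is harmless but unnecessary, since the paper's inequality chains already hold uniformly in that case.
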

\begin{claimproof}
Using $(1+\delta)^{\log n} \le 2$, the definitions of $\Delta$ and $E_t^{I_i}$, and the properties 
$p(Q)\geq\eta \cdot \max_{j \in I_i}p(j)$ and
$\weight_t(Q)\leq \eta^{-1} \capacity_t$ for all $t \in \{2,\ldots,d-1\}$ by $Q \in \mathcal{Q}_i$, we can bound
\begin{align*}
	&\vecx'_0 \le p(Q) \le n \max_{j \in I_i} p(j) \le \tfrac 12 \Delta^{-2} E_0^{I_i}, \\
	&\vecx'_0 \ge \tfrac 12 p(Q) - n^3 E_0^{I_i} \ge \tfrac 12 \cdot \eta \cdot \max_{j \in I_i} p(j) - n^3 E_0^{I_i} = \big( \tfrac 12 \cdot \eta \cdot \Delta^{-1} - n^3 \big) E_0^{I_i} \ge 2 E_0^{I_i}, \\
	&2 E_1^{I_i} \le n^3 E_1^{I_i} \le \vecx'_1 \le 2 \weight_1(Q) + n^3 E_1^{I_i} \le 2n \max_{j \in I_i} \weight_1(j) + n^3 E_1^{I_i} \le \tfrac 12 \Delta^{-2} E_1^{I_i}, \\
	&2 E_t^{I_i} \le n^3 E_t^{I_i}  \le \vecx'_t \le 2 \cdot \eta^{-1} \cdot \capacity_t + n^3 E_t^{I_i} \le \tfrac 12 \Delta^{-2} E_t^{I_i} \qquad \text{ for } t \in \{2,\ldots,d-1\}. \qedhere
\end{align*}
\end{claimproof}
We round $\vecx'$ to $\vecx \in \Rnd^{I_i}$ as follows. 
	Round down $\vecx'_0$ to the largest value $\vecx_0 \in \mathcal{P}^{I_i}$ not
	exceeding~$\vecx'_0$. For each $t \in [d-1]$, round up $\vecx'_t$ to the smallest value $\vecx_t \in \mathcal{W}_t^{I_i}$ not below $\vecx'_t$. By the definitions of $\mathcal{P}^{I_i}, \mathcal{W}_t^{I_i}, \Rnd^{I_i}$ and \Cref{cla:range_vecx}, this rounding is well-defined and yields a vector $\vecx \in \Rnd^{I_i}$ that satisfies
	\begin{align} \label{eq:vecx_vs_vecxprime}
		\vecx'_0 \ge \vecx_0 \ge \tfrac 1{1+\delta} \cdot \vecx'_0 \quad \text{ and } \quad \vecx'_t \le \vecx_t \le (1+\delta) \vecx'_t \quad \text{ for any }t \in [d-1]. 
	\end{align}
Since we rounded down profits and rounded up weights, we have $\mathcal{Q}^{(\log n)}_{I_i}(\vecx') \subseteq \mathcal{Q}^{(\log n)}_{I_i}(\vecx)$, and thus $\mathcal{Q}^{(\log n)}_{I_i}(\vecx)$ contains $Q$ and is thus nonempty. It follows that $f_i(\vecx) \ne \bot$, since $f_i$ is $\log n$-approximate on $I_i$. Hence, $v = f_i(\vecx)$ is a virtual item in $\mathcal{L}_i$, and it satisfies
\begin{align*}
	p(v) &\ge \vecx_0, \\
	\weight_t(v) &\le \vecx_t \text{ for all }t \in [d-1], \text{ and} \\
	\weight_d(v) &\le \min\{ \weight_d(X) \mid X \in \mathcal{Q}^{(\log n)}_{I_i}(\vecx) \} \le \weight_d(Q).
\end{align*}
For the profit, by \eqref{eq:vecx_vs_vecxprime}, the definition of $\vecx'$, $E_0^{I_i} = \Delta \max_{j \in I_i} p(j)$, and $p(Q) \ge \eta \cdot \max_{j \in I_i} p(j)$, we can further bound
\begin{align*}
	p(v) \ge \vecx_0 \ge \tfrac 1{1+\delta} \cdot \vecx'_0 \ge (1+\delta)^{-\log(n) - 1} p(Q) - n^3 E_0^{I_i} &\ge (1+\delta)^{-\log(n) - 1} p(Q) - n^3 \Delta \eta^{-1} p(Q) \\
	&\ge (1 - \tfrac \eps 2) p(Q) - \tfrac \eps 2 p(Q)
	\ge (1-\eps) p(Q).
\end{align*}
Similarly, for the weights we obtain for any $t \in [d-1]$
\begin{align*}
	\weight_t(v) \le (1+\delta)^{\log(n)+1} \weight_t(Q) + (1+\delta) n^3 E_t^{I_i} &\le (1 + \tfrac \eps 3) \weight_t(Q) + 2 n^3 E_t^{I_i}  \\
	&\le \max\{(1 + \eps) \weight_t(Q), \tfrac{1+\eps}{2\eps/3} \cdot 2 n^3 E_t^{I_i} \}  \\
	&\le \max\{(1 + \eps) \weight_t(Q),  \eta\, \Delta^{-1} E_t^{I_i} \},
\end{align*}
where in the second-to-last step we used that for all $a,b \ge 0$ and $c \in [0,1]$ it holds that $c \cdot a + (1-c) \cdot b \le \max\{a,b\}$, applied with $a = (1 + \eps) \weight_t(Q), b = \tfrac{1+\eps}{2\eps/3} \cdot 2 n^3 E_t^{I_i}$, and $c = (1 + \tfrac \eps 3)/(1+\eps)$.

Since $E_1^{I_i} = \Delta \max_{j \in I_i} \weight_1(j)$ and $E_t^{I_i} = \Delta \capacity_t$ for $t \in \{2,\ldots,d-1\}$, this yields
\begin{align*}
	\weight_1(v) &\le \max\big\{(1 + \eps) \weight_1(Q), \eta \cdot \max_{j \in I_i} \weight_1(j) \big\},  \\
	\weight_t(v) &\le \max\big\{(1 + \eps) \weight_t(Q), \eta \cdot  \capacity_t \big\} \qquad\qquad \text{ for }t \in \{2,\ldots,d-1\}.
\end{align*}
Hence, all claimed bounds of \cref{lem:internal-APX-Lemma} hold. 
\end{proof}

\section{A Near-Linear-Time \boldmath$\left(\frac{2}{3}-\eps\right)$-Approximation for \boldmath$2$-Knapsack}
\label{sec:linear2d}

In this section, we present a $\left(\frac{2}{3}-\eps\right)$-approximation algorithm for $2$-Knapsack with running time $n \cdot \big(\frac{\log n}{\eps}\big)^{O(1)}$, proving \Cref{thm:linear}. Our algorithm builds on the toolkit developed in the previous section: the meet-in-the-middle approach developed in \Cref{sec:dalgo_ORS}, the Slack-Generating Lemma (\Cref{lem:structural}), and the Approximation Lemma (\Cref{thm:simple-apx-data-structure}). The key difference is that we tailor these ingredients to the target approximation ratio in two dimensions. This allows us to avoid the quadratic dependence on~$n$, which arises in \Cref{thm:dalgo} independently of the desired approximation ratio.

\medskip
Let us first explain the intuition behind our algorithm. We handle four cases.

Case 0: The case $|\OPT| \le 1$ is trivial, as by checking the empty set and all singleton sets we find the optimal solution. Therefore, in what follows we assume $|\OPT| \ge 2$.

Case 1: Recall that the Slack-Generating Lemma loses the profit of $d-1 = 1$ item (as well as a negligible factor $1-\eps$). 
Thus, in case $p(\OPT[1]) \le p(\OPT)/3$ the Slack-Generating Lemma shows existence of a subset $Q \subseteq \OPT$ of profit at least $\approx \frac 23 p(\OPT)$ that has slack in the first weight dimension. Because this subset exists, the Approximation Lemma computes a feasible virtual item~$v$ of profit $\approx p(Q)\approx \frac 23 p(\OPT)$. Hence, the output of the Approximation Lemma contains a sufficient approximation.

Case 2: By using meet in the middle with one item in the left half and one item in the right half we can find the optimal solution of cardinality 2. 
In case $p(\OPT[\semi 2]) \ge (\frac 23 - \eps) p(\OPT)$, the optimal solution of cardinality 2 has profit at least $p(\OPT[\semi 2])$, which gives a sufficient approximation.

Case 3: If the previous cases do not apply then we have $p(\OPT[1]) > p(\OPT)/3$ and $p(\OPT[2]) < (\frac 13 - \eps)p(\OPT)$. This allows to guess a number $\hp$ with $p(\OPT[2]) \le \hp < p(\OPT[1])$. (More precisely, by considering all powers of $1 + \frac \eps{10}$ in a small range around the maximum profit of any feasible item, we obtain a list of $O(1/\eps)$ candidate numbers, one of which separates $p(\OPT[2])$ and $p(\OPT[1])$.) On the items of profit at most $\hp$ use the same argument as in Case 1: the Slack-Generating Lemma shows that the Approximation computes a virtual item $v$ that can act as a surrogate for $\OPT[2 \semi]$, while losing only profit $\approx \OPT[2]$. By using meet in the middle with one item of profit more than~$\hp$ in the left half and a virtual item in the right half we find a solution that is at least as good as $\{\OPT[1],v\}$, which has profit $\approx p(\OPT) - \OPT[2] \approx \frac 23 p(\OPT)$. In all cases we obtain the desired approximation.

\medskip
For a detailed description of our algorithm see the pseudocode in \Cref{alg:linear}. 
\begin{algorithm}[t]
	\caption{$\linear(I,\capacity,\eps)$}
	
	\label{alg:linear}
	Let $\hp \gets \max\{ p(i) \mid i \in I, \weight(i) \le \capacity\}$ and $\best\gets \emptyset$. 
	
	Initialize an ORS data structure (\Cref{lem:kvors}) storing an object $S\subseteq I$ with point $\bar x_S := \embed(S)$ and profit $ p(S)$ for every $S\subseteq I$ with $\abs{S}\leq 1$. \label{linear:ors}

	\For{each $j\in \big\{0,1,\ldots, \lceil \log_{1+\eps/10} 3 \rceil\big\}$\label{linear:loop} }{
	
	Define $\tp_j = \left(1+\frac{\eps}{10}\right)^j \cdot \frac{1}{3}\cdot \hp$ and $I_j \gets \{i\in I ~|~p(i)\leq \tp_j\}$. \label{linear:threshold}
	
	Apply the procedure from \Cref{lem:ddim_dpv} to $I_j$, profit estimate $\hp$, and error parameters $\frac \eps{10}$ and $\eta := \frac 1 {3n}$; let $\cL_j$ be the result.

	\For{each  $v\in \cL_j$\label{linear:inner_loop}}{
		Define  $R_v\gets[0,\capacity_1-\weight_1(v)]\times[0,\capacity_2-\weight_2(v)]\times[-\infty,\min I_j-1]$. \label{linear:range}
		
		Issue a query to the ORS with range $R_v$ and let $S$ be the returned object.
		\label{linear:query}
		
		If $S\neq \bot$ and $p(S\cup \{v\})>p(\best)$ then update $\best \leftarrow S\cup \{v\}$.  
		\label{linear:update}
		}
	}
	\For{each $i\in I$\label{linear:singleton_loop}}{
			Issue a query to the ORS with the range $\range_{\capacity}(\{i\})$; let $S$ be the returned object.
			\label{linear:singleton_query}
			
			If $S\neq \bot$ and $p(S\cup \{i\})>p(\best)$ then update $\best \leftarrow S\cup \{i\}$.  \label{linear:return}
}
	
	\Return $\set(\best)$ 
\end{algorithm}
We first show that the algorithm returns a feasible solution.
\begin{lemma}
	\label{lem:linear-feasibility}
	\Cref{alg:linear} returns a feasible solution.
\end{lemma}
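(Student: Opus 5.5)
The plan is to show that the invariant ``$\set(\best)$ is feasible'' holds throughout the execution of \Cref{alg:linear}, mirroring the feasibility argument in \Cref{lem:mitm_approx}. Initially $\best=\emptyset$, which is feasible. The variable is only updated in \Cref{linear:update} and \Cref{linear:return}, so it suffices to show that each candidate written there represents a feasible set.

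\textbf{Update in \Cref{linear:return}.} The candidate is $S\cup\{i\}$, where $S\ne\bot$ was returned by the ORS query with range $\range_{\capacity}(\{i\})$. Every stored object has $|S|\le 1$, and the query returns only objects whose points lie in the range, so $\embed(S)\in\range_{\capacity}(\{i\})$. By \Cref{lem:embed_to_range}, $S$ and $\{i\}$ can be matched. Hence $\weight(S)+\weight(i)\le\capacity$ and $\max(S)<i$, so the union is disjoint and feasible.

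\textbf{Update in \Cref{linear:update}.} The candidate is $S\cup\{v\}$ with $v\in\cL_j$. By \Cref{lem:ddim_dpv}, $\set(v)\subseteq I_j$, so $\min(\set(v))\ge\min I_j$. Since $\embed(S)\in R_v$, reading off the coordinates of $R_v$ gives:
\begin{itemize}
\item $\weight_t(S)\le\capacity_t-\weight_t(v)$ for $t\in\{1,2\}$, hence $\weight(S\cup\{v\})\le\capacity$;
\item $\max(S)\le\min I_j-1<\min(\set(v))$, so $S$ and $\set(v)$ are disjoint and the union operation on virtual items is well defined.
\end{itemize}
If $\set(v)=\emptyset$ (or $I_j=\emptyset$, where $\min I_j=\infty$), the last coordinate imposes no constraint and disjointness is trivial. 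The only degenerate case is $\capacity_t-\weight_t(v)<0$, in which the range is empty; the query then returns $\bot$ and no update happens.

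\textbf{Main obstacle.} The only subtle point is that $R_v$ is written out by hand rather than as $\range_{\capacity}$ of a set. One must therefore check that its last coordinate, $\min I_j-1$, still forces disjointness. This holds precisely because $\set(v)\subseteq I_j$, which is guaranteed by \Cref{lem:ddim_dpv}.
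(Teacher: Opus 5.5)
Your proof is correct and follows essentially the same invariant argument as the paper: it handles the two update lines via the ORS range guarantee and \Cref{lem:embed_to_range}. Your extra check that the last coordinate of $R_v$ forces disjointness (via $\set(v)\subseteq I_j$) is not part of the paper's feasibility proof, which makes that observation in the running-time analysis instead, but it is correct and makes the virtual-item union well defined.
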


\begin{proof}
	We show that $\set(\best)$ remains feasible throughout the execution of the algorithm. Initially, $\best=\emptyset$, so it is feasible. Consider an update of $\best$ in \Cref{linear:update}, performed during the iteration indexed by $j$. Since the update is performed only when $S\neq\bot$, the guarantee of the ORS data structure gives $\embed(S)\in R_v$. The first two coordinates of $R_v$ imply that $\weight_t(S)+\weight_t(v)\leq\capacity_t$ for each $t\in\{1,2\}$. Therefore, $\weight_t(S\cup\{v\})\leq\capacity_t$ for each $t\in\{1,2\}$, so $\set(S\cup\{v\})$ is feasible. Hence, setting $\best\gets S\cup\{v\}$ preserves feasibility of $\set(\best)$.

	Now consider an update in \Cref{linear:singleton_loop}, corresponding to an item $i\in I$. The query in \Cref{linear:singleton_query} returns a set $S$ satisfying $\embed(S)\in\range_{\capacity}(\{i\})$. Hence, by \Cref{lem:embed_to_range}, $S$ and $\{i\}$ can be matched, and therefore $S\cup\{i\}$ is feasible. Thus, this update also preserves feasibility of $\set(\best)$. It follows that the returned solution $\set(\best)$ is feasible.
\end{proof}

The next lemma shows the approximation guarantee of the algorithm.
\begin{lemma}
	\label{lem:linear-approximation}
	For every $\eps\in\left(0,\frac{2}{3}\right)$, \Cref{alg:linear} returns a feasible solution of profit at least $\left(\frac{2}{3}-\eps\right)\cdot p(\OPT)$.
\end{lemma}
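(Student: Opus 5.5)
Feasibility is \Cref{lem:linear-feasibility}, so only the profit bound needs proof. I would follow the four-case outline given before \Cref{alg:linear}. Write $P:=p(\OPT)$.

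\textbf{Case 0, $|\OPT|\le 1$.} The singleton loop in \Cref{linear:singleton_loop} queries $\range_{\capacity}(\{i\})$. With $i=\OPT[1]$ the stored object $\emptyset$ is a valid match, so $\best$ reaches profit at least $P$.

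\textbf{Case 2, $|\OPT|\ge 2$ and $p(\OPT[\semi 2])\ge(\frac23-\eps)P$.} Take $i=\OPT[2]$. The stored singleton $\{\OPT[1]\}$ satisfies $\embed(\{\OPT[1]\})\in\range_{\capacity}(\{\OPT[2]\})$ by \Cref{lem:embed_to_range}. The query therefore returns a set of profit at least $p(\OPT[1])$.

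\textbf{Case 1, $p(\OPT[1])\le P/3$.} Apply \Cref{lem:structural} with $d=2$ and $\delta=\eps/10$ to $\OPT$. This gives $Q\subseteq\OPT$ with $\weight_1(Q)\le(1-\frac\eps{10})\weight_1(\OPT)$, $\weight_2(Q)\le \weight_2(\OPT)$ and $p(Q)\ge(1-\frac\eps{10})P-P/3$.

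Next choose $j$. Every item of $\OPT$ is feasible, so $\hp\ge p(\OPT[1])$. I would take $j$ with $\tp_j\ge \hp\ge p(\OPT[1])$; the largest $j$ in the loop works since $\tp_j\ge \hp$ there. Then $\OPT\subseteq I_j$.

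Check the hypotheses of \Cref{lem:ddim_dpv} for $d=2$ (there are no middle dimensions). We need $p(Q)\ge\eta\hp=\hp/(3n)$. I expect this to hold because $\hp\le P$ only if the best single feasible item lies in $\OPT$, which is not true in general. So I would instead compare against the best singleton, which the algorithm also captures. If $\hp> 3P$ is impossible, fine; but in fact $P\ge\hp$ always, since $\{\arg\max\}$ is feasible and $\OPT$ is optimal. So $p(Q)\ge P/2\cdot$const $\ge\hp/(3n)$ for small $\eps$.

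The lemma then yields $v$ with $p(v)\ge\min\{(1-\frac\eps{10})p(Q),3n\hp\}=(1-\frac\eps{10})p(Q)$, together with $\weight_1(v)\le(1+\frac\eps{10})(1-\frac\eps{10})\weight_1(\OPT)\le\capacity_1$ and $\weight_2(v)\le\capacity_2$. The range $R_v$ contains $\embed(\emptyset)=(0,0,-\infty)$, so $\best\ge p(v)\ge(\frac23-\eps)P$.

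\textbf{Case 3, neither case applies.} Here $p(\OPT[1])>P/3$ and $p(\OPT[2])<(\frac13-\eps)P$. The main task is choosing $j$ with $p(\OPT[2])\le\tp_j<p(\OPT[1])$. Since $P\ge\hp\ge p(\OPT[1])>P/3$, we have $\hp/3\le P/3<p(\OPT[1])\le\hp$. The grid $\tp_j$ runs from $\hp/3$ to at least $\hp$ in ratio $1+\frac\eps{10}$. Take the largest $j$ with $\tp_j<p(\OPT[1])$. Then $\tp_j\ge p(\OPT[1])/(1+\frac\eps{10})>P/(3(1+\eps/10))>(\frac13-\eps)P>p(\OPT[2])$; if $j=0$, use $\tp_0=\hp/3$ instead, which exceeds $p(\OPT[2])$ via $P\ge\hp\ge p(\OPT[1])$.

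Set $L=\OPT[2\semi]\subseteq I_j$. Since $\OPT[1]\notin I_j$ has profit above $\tp_j$, and $\min I_j-1$ is taken under profit-sorted indexing, the singleton $\{\OPT[1]\}$ has index below $\min I_j$. Apply \Cref{lem:structural} to $L$ to get $Q$ with $p(Q)\ge(1-\frac\eps{10})p(L)-p(\OPT[2])$, and weight slack in dimension 1 relative to $\weight_1(L)$. If $p(Q)<\eta\hp$, use $v_\emptyset$-like reasoning through the singleton loop instead, which still gives profit at least $p(\OPT[1])$; I expect this only to add small losses. Otherwise \Cref{lem:ddim_dpv} gives $v$ with $\weight(v)\le\weight(L)$, so $\{\OPT[1]\}$ lies in $R_v$. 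Then
\[p(\best)\ge p(\OPT[1])+(1-\tfrac\eps{10})^2p(L)-p(\OPT[2])\ge(1-\tfrac\eps5)P-p(\OPT[2]).\]
Using $p(\OPT[2])\le p(\OPT[1])$ and $p(\OPT[2])<\frac13 P$ gives at least $(\frac23-\eps)P$.

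The delicate step is this last case: the choice of $j$ and its boundary cases, the index-ordering so that the range coordinate $\min I_j-1$ admits $\OPT[1]$, and the $\eta$-threshold hypotheses.
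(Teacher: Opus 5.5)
\paragraph{Gap in Case~3.} The gap is at the step you flagged yourself. You never verify the hypothesis $p(Q)\ge\eta\hp$ that \Cref{lem:ddim_dpv} requires. The fallback you propose for $p(Q)<\eta\hp$ cannot work. It only yields a solution of profit $p(\OPT[1])$, and the Case~3 hypothesis $p(\OPT[1])+p(\OPT[2])<(\frac23-\eps)P$ says precisely that $p(\OPT[1])<(\frac23-\eps)P$. So no loss accounting will rescue it.

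The missing observation is that the same hypothesis rules the fallback case out entirely. Using $p(L)=P-p(\OPT[1])\le P$ and $P\ge\hp$, you get
\[
p(Q)\ \ge\ (1-\tfrac{\eps}{10})p(L)-p(\OPT[2])\ \ge\ P-p(\OPT[1])-p(\OPT[2])-\tfrac{\eps}{10}P\ >\ (\tfrac13+\tfrac{9\eps}{10})P\ \ge\ \tfrac{\hp}{3}\ \ge\ \eta\hp .
\]
This is exactly how the paper handles it. You also need $p(Q)\le P\le n\hp$ to drop the $\min$ in the profit guarantee, as you did in Case~1. With these two additions your Case~3 computation is complete.

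\paragraph{Everything else.} The rest of your argument follows the paper's proof: the same four cases, and the Slack-Generating Lemma plus \Cref{lem:ddim_dpv} with $j=J$ in Case~1.

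In Case~3 you take the largest $j$ with $\tp_j<p(\OPT[1])$, whereas the paper takes the largest $j$ with $(1+\frac{\eps}{10})^j\hp\le P$. This difference is cosmetic, and your choice works. Since $\tp_0\le P/3<p(\OPT[1])\le\hp\le\tp_J$, such a $j$ exists and satisfies $j<J$. Hence $\tp_{j+1}\ge p(\OPT[1])$ is available for every such $j$, including $j=0$.

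Your separate ``$j=0$'' remark is therefore unnecessary. Its stated justification is also wrong: $P\ge\hp\ge p(\OPT[1])$ alone does not give $\hp/3>p(\OPT[2])$.

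In Case~1, the paragraph checking $p(Q)\ge\eta\hp$ contains a false intermediate claim before you correct yourself to $P\ge\hp$. The clean statement is $p(Q)\ge(\frac23-\frac{\eps}{10})P\ge\frac35P\ge\hp/(3n)$, and it holds for every $\eps<\frac23$, not only for small $\eps$.
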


\begin{proof}
	Feasibility follows from \Cref{lem:linear-feasibility}. 		Since $\hp$ is the maximum profit of any feasible item, 
	\begin{equation}
		\label{eq:linear-hp-bounds}
		p(\OPT[1])\leq \hp\leq p(\OPT).
	\end{equation}
	Throughout the proof, $\tp_j$ and $I_j$ refer to the values defined in \Cref{linear:threshold}. Let $J=\lceil \log_{1+\eps/10}3 \rceil$. By the definition of $J$ we have
$
		\hp\leq \tp_J< \left(1+\frac{\eps}{10}\right)\hp,
	$ 
	and consequently, $\{ i \in I \mid \weight(i) \le \capacity \} \subseteq I_J$.

	We prove the approximation guarantee by considering four cases.
	
	\paragraph{Case 0: $|\OPT| \le 1$.}
	If $|\OPT| = 0$ then the initial $\best \gets \emptyset$ is already an optimal solution. 
	If $|\OPT| = 1$, then consider the iteration of \Cref{linear:singleton_loop} corresponding to $i=\OPT[1]$, that is, $\{i\} = \OPT$. Since $\emptyset$ and $\OPT$ can be matched, \Cref{lem:embed_to_range} gives $\embed(\emptyset)\in\range_{\capacity}(\OPT)$. Moreover, $\emptyset$ is stored in the ORS data structure. Therefore, the query in \Cref{linear:singleton_query} returns a set $S \ne \bot$. Hence, the algorithm finds a solution whose profit is at least
	\begin{equation*}
		p(S)+p(\{i\})\geq p(\emptyset)+p(\OPT)=p(\OPT).
	\end{equation*}
	Therefore, following this iteration it holds that $p(\best)\geq p(\OPT)$.
	
	\paragraph{Case 1: $|\OPT| \ge 2$ and $p(\OPT[1])\leq p(\OPT)/3$.}
	Consider the iteration of \Cref{linear:loop} in which $j=J$. Applying the Slack-Generating Lemma (\Cref{lem:structural}) to $\OPT$ with error parameter $\eps/{10}$ yields a set $Q\subseteq\OPT$ such that
	\begin{equation}
		\label{eq:linear-case1-Q}
		\begin{aligned}
			\weight_1(Q)&\leq\left(1-\frac{\eps}{10}\right)\weight_1(\OPT),\\
			\weight_2(Q)&\leq\weight_2(\OPT),\\
			p(Q)&\geq\left(1-\frac{\eps}{10}\right)p(\OPT)-p(\OPT[1])\geq\left(\frac{2}{3}-\frac{\eps}{10}\right)p(\OPT).
		\end{aligned}
	\end{equation}
	Since $\{ i \in I \mid \weight(i) \le \capacity \} \subseteq I_J$, we have $Q\subseteq I_J$. 
	Note that $p(Q) \ge \frac 12 \, p(\OPT) \stackrel{\eqref{eq:linear-hp-bounds}}{\ge} \frac \hp {3n} = \eta \, \hp$ and $\weight_t(Q) \le \weight_t(\OPT) \le \capacity_t \le \eta^{-1} \capacity_t$ for each $t \in \{2,\ldots,d-1\}$. Hence, by \Cref{lem:ddim_dpv}, applied with error parameters $\frac \eps{10}$ and $\eta = \frac 1{3n}$, there exists a virtual item $v\in\cL_J$ such that
	\begin{equation}
		\label{eq:linear-case1-v}
		p(v)\geq\min\left\{\left(1-\frac \eps{10} \right)p(Q), n \hp\right\},\qquad
		\weight_1(v)\leq\left(1+\frac{\eps}{10}\right)\weight_1(Q),\qquad
		\weight_2(v)\leq\weight_2(Q).
	\end{equation}
	By \eqref{eq:linear-case1-Q} and \eqref{eq:linear-case1-v}, we have $\weight_2(v)\leq\weight_2(Q)\leq\weight_2(\OPT)$ and
	\[
	\weight_1(v)
	\leq \left(1+\frac{\eps}{10}\right)\weight_1(Q)
	\leq \left(1+\frac{\eps}{10}\right)\left(1-\frac{\eps}{10}\right)\weight_1(\OPT)
	\leq \weight_1(\OPT).
	\]
	Consequently, $\weight(v)\leq\weight(\OPT)\leq\capacity$, and therefore $\set(v)$ is feasible.
		
	Consider the iteration of \Cref{linear:inner_loop} corresponding to this virtual item $v$. Since $\set(v)$ is feasible, $\embed(\emptyset)\in R_v$. As the empty set is stored in the ORS data structure, the query in \Cref{linear:query} returns $S\neq\bot$. It remains to simplify the profit guarantee in \eqref{eq:linear-case1-v}. Since $\hp$ is the maximum profit of a feasible item and $|Q|\leq n$, we have $p(Q)\leq n\hp$. Thus,
	\[
	p(v) \stackrel{\eqref{eq:linear-case1-v}}{\geq} \min\left\{\left(1-\frac \eps{10}\right) p(Q), n \hp\right\} = \left(1-\frac \eps{10}\right) p(Q).
	\]
	It follows that
	\begin{align*}
		p(S\cup\{v\})
		&\geq p(v)\\
		&\geq \left(1-\frac \eps{10}\right)p(Q)\\
		&\overset{\eqref{eq:linear-case1-Q}}{\geq} \left(1-\frac \eps{10}\right)\left(\frac{2}{3}-\frac{\eps}{10}\right)p(\OPT)\\
		&\geq \left(\frac{2}{3}-\eps\right)p(\OPT).
	\end{align*}
	Therefore, after this iteration it holds that $p(\best)\geq\left(\frac{2}{3}-\eps\right)p(\OPT)$.
	
	\paragraph{Case 2: $|\OPT| \ge 2$ and $p(\OPT[\semi 2])\geq\left(\frac{2}{3}-\eps\right)p(\OPT)$.}
	Consider the iteration of \Cref{linear:singleton_loop} corresponding to $i=\OPT[2]$. Since $\{\OPT[1]\}$ and $\{\OPT[2]\}$ can be matched, \Cref{lem:embed_to_range} gives $\embed(\{\OPT[1]\})\in\range_{\capacity}(\{\OPT[2]\})$. Moreover, $\{\OPT[1]\}$ is stored in the ORS data structure. Therefore, the query in \Cref{linear:singleton_query} returns a set $S$ satisfying $p(S)\geq p(\OPT[1])$ and $\OPT[2]\notin S$. Hence, the algorithm finds a solution whose profit is at least
	\begin{equation*}
		p(S)+p(\OPT[2])\geq p(\OPT[1])+p(\OPT[2])=p(\OPT[\semi 2])\geq\left(\frac{2}{3}-\eps\right)p(\OPT).
	\end{equation*}
	Therefore, following this iteration it holds that $p(\best)\geq\left(\frac{2}{3}-\eps\right)p(\OPT)$.

	\paragraph{Case 3: neither of the previous cases applies.}
	That is, we have
	\begin{align}
		\label{eq:linear-case3-profit-gap}
			p(\OPT[1])&>\frac{p(\OPT)}{3} \qquad \text{ and } \qquad
			p(\OPT[1])+p(\OPT[2])<\left(\frac{2}{3}-\eps\right)p(\OPT),
	\end{align}
	and hence
	\begin{align} \label{eq:pOtstwostuff}
		p(\OPT[2])&<\left(\frac{1}{3}-\eps\right)p(\OPT).
	\end{align}
	
	\begin{claim} \label{cla:linear-separating-threshold}
		The main loop contains an index $j$ for which
	\begin{equation}
		\label{eq:linear-separating-threshold}
		p(\OPT[2])\leq\tp_j<p(\OPT[1]).
	\end{equation}
	\end{claim}
	\begin{claimproof}
	Let $j$ be the largest integer satisfying
	\begin{equation}
		\left(1+\frac{\eps}{10}\right)^j\hp \le p(\OPT).
	\end{equation}
By \eqref{eq:linear-hp-bounds} we have 
	$\hp \le p(\OPT)$, and hence $j\geq 0$. On the other hand, \eqref{eq:linear-hp-bounds} and \eqref{eq:linear-case3-profit-gap} give $\hp\geq p(\OPT[1])>p(\OPT)/3$. Since $\left(1+\frac{\eps}{10}\right)^J\geq3$, it follows that $\left(1+\frac{\eps}{10}\right)^J\hp>p(\OPT)$, and therefore $j\leq J-1$. Thus, $j$ is an index of the main loop (\Cref{linear:loop}).

	By the definition of $\tp_j$ and the choice of $j$,
	\begin{equation*}
		\tp_j=\frac{1}{3}\left(1+\frac{\eps}{10}\right)^j\hp
		\le \frac{p(\OPT)}{3}<p(\OPT[1]).
	\end{equation*}
	Moreover, the maximality of $j$ implies $\left(1+\frac{\eps}{10}\right)^{j+1}\hp > p(\OPT)$, and hence
	\begin{equation*}
		\tp_j=\frac{1}{3}\frac{\left(1+\frac{\eps}{10}\right)^{j+1}\hp}{1+\frac{\eps}{10}} > \frac{p(\OPT)}{3\left(1+\frac{\eps}{10}\right)}
		>\left(\frac{1}{3}-\eps\right)p(\OPT) \stackrel{\eqref{eq:pOtstwostuff}}{>} p(\OPT[2]),
	\end{equation*}
	where the second inequality holds for every $\eps>0$. This proves the claim.
	\end{claimproof} 
	
	In the iteration $j$ guaranteed by \Cref{cla:linear-separating-threshold}, it holds that $\OPT[1]\notin I_j$ and $L:=\OPT[2\semi]\subseteq I_j$. This also implies that $\OPT[1]< \min I_j$. 
	By the Slack-Generating Lemma (\Cref{lem:structural}) with error parameter $\eps/10$, there is a set $Q\subseteq L$ such that
	\begin{equation}
		\label{eq:linear-case3-Q}
		\weight_1(Q)\leq\left(1-\frac{\eps}{10}\right)\weight_1(L),\qquad
		\weight_2(Q)\leq\weight_2(L),\qquad
		p(Q)\geq\left(1-\frac{\eps}{10}\right)p(L)-p(\OPT[2]).
	\end{equation}
	Note that $\weight(Q) \le \weight(L) \le \weight(\OPT) \le \capacity \le \eta^{-1} \capacity$ and
	\[
		p(Q)\geq\left(1-\frac{\eps}{10}\right)p(L)-p(\OPT[2])
		\ge p(\OPT) - p(\OPT[1]) - p(\OPT[2]) - \frac \eps {10} \, p(\OPT) \stackrel{\eqref{eq:linear-case3-profit-gap}}{\ge} \frac 13 \, p(\OPT) \stackrel{\eqref{eq:linear-hp-bounds}}{\ge} \eta \, \hp.
	\]
	Hence, by \Cref{lem:ddim_dpv}, applied with error parameters $\frac \eps{10}$ and $\eta = \frac 1{3n}$, there exists $v\in\cL_j$ such that
	\begin{equation}
		\label{eq:linear-case3-v}
		p(v)\geq\min\left\{ \left(1-\frac \eps{10}\right) p(Q),n \hp\right\},\qquad
		\weight_1(v)\leq\left(1+\frac{\eps}{10}\right)\weight_1(Q),\qquad
		\weight_2(v)\leq\weight_2(Q).
	\end{equation}
	By \eqref{eq:linear-case3-Q} and \eqref{eq:linear-case3-v}, we have $\weight_2(v)\leq\weight_2(Q)\leq\weight_2(L)$ and
	\[
	\weight_1(v)
	\leq\left(1+\frac{\eps}{10}\right)\weight_1(Q)
	\leq\left(1+\frac{\eps}{10}\right)\left(1-\frac{\eps}{10}\right)\weight_1(L)
	\leq\weight_1(L).
	\]
	Consequently, $\weight(v)\leq\weight(L)$. Thus, for each $t\in\{1,2\}$,
	\begin{equation*}
		\weight_t(\OPT[1])+\weight_t(v)
		\leq\weight_t(\OPT[1])+\weight_t(L)
		=\weight_t(\OPT)
		\leq\capacity_t.
	\end{equation*}
	Moreover, $\OPT[1]<\min(I_j)$, so $\max(\{\OPT[1]\})\leq\min(I_j)-1$. By the definition of $R_v$ in \Cref{linear:range}, these inequalities imply that $\embed(\{\OPT[1]\})\in R_v$. Since $\{\OPT[1]\}$ is stored in the ORS structure, the query returns a set $S \ne \bot$ with $p(S)\geq p(\OPT[1])$. Furthermore, $\embed(S)\in R_v$ and $\set(v)\subseteq I_j$, so $\max(S)\leq\min(I_j)-1<\min(\set(v))$. Thus, $S$ and $\set(v)$ are disjoint, and therefore $p(S\cup\{v\})=p(S)+p(v)$.
	Since $\hp$ is the maximum profit of a feasible item and $Q\subseteq I_j$ is feasible, we have $p(Q) \le n \hp$ and thus
	\begin{equation}
		p(v) \stackrel{\eqref{eq:linear-case3-v}}{\geq} \min\left\{ \left(1-\frac \eps{10}\right) p(Q),n \hp\right\} = \left(1-\frac \eps{10}\right)p(Q).
	\end{equation}
	Therefore, the solution found in this iteration has profit at least
	\begin{equation*}
		\begin{aligned}
		p(S)+p(v)&\geq p(\OPT[1]) + \left(1-\frac \eps{10}\right) p(Q)\\
		&\overset{\eqref{eq:linear-case3-Q}}{\geq} p(\OPT[1])+\left(1-\frac \eps{10}\right)\left(\left(1-\frac{\eps}{10}\right)p(\OPT[2\semi])-p(\OPT[2])\right) \nonumber\\
		&\geq p(\OPT[1])+\left(1-\eps\right)p(\OPT[2\semi])-p(\OPT[2])\nonumber\\
		&\geq (1-\eps) p(\OPT)-p(\OPT[2])\nonumber\\
		&\stackrel{\eqref{eq:pOtstwostuff}}{\geq} \frac{2}{3}\, p(\OPT).
		\end{aligned}
	\end{equation*}
	Therefore, from this iteration onward it holds that $p(\best)\geq \frac{2}{3} \, p(\OPT)$.

	In all four cases the algorithm finds a feasible solution of profit at least $\left(\frac{2}{3}-\eps\right)p(\OPT)$, and the lemma follows.
\end{proof}

Having established the approximation guarantee, we now turn to the running time analysis.
\begin{lemma}
	\label{lem:linear-runtime}
	For every $\eps\in\left(0,\frac{2}{3}\right)$, \Cref{alg:linear} runs in time
	$n\cdot\big(\frac{\log n}{\eps}\big)^{O(1)}$.
\end{lemma}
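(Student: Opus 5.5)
The plan is to bound the cost of each line of \Cref{alg:linear} separately and then sum. None of the steps requires a new idea; everything reduces to \Cref{lem:kvors} and \Cref{lem:ddim_dpv} with $d=2$.

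\textbf{Preprocessing and ORS construction.} Computing $\hp$ takes time $O(n)$. The ORS data structure in \Cref{linear:ors} stores the empty set together with all $n$ singletons, so it holds $n+1$ objects. Each object has a point in dimension $d+1=3$, given by $\embed$. By \Cref{lem:kvors}, building the structure takes time $O(n\log^2 n)$, and each query costs $O(\log^3 n)$.

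\textbf{Main loop.} The loop in \Cref{linear:loop} runs $J+1=\lceil\log_{1+\eps/10}3\rceil+1=O(1/\eps)$ times. In each iteration we do three things:
\begin{itemize}
\item compute $\tp_j$ and the set $I_j$ in time $O(n)$;
\item call \Cref{lem:ddim_dpv} with $d=2$, error parameter $\eps/10$ and $\eta=\frac1{3n}$, which runs in time $n\cdot(\log(3n^2)/(\eps/10))^{O(1)}=n\cdot(\log n/\eps)^{O(1)}$;
\item run the inner loop over $\cL_j$.
\end{itemize}
Since the output of \Cref{lem:ddim_dpv} is explicitly constructed within its running time, $|\cL_j|\le n\cdot(\log n/\eps)^{O(1)}$. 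For each $v\in\cL_j$, the range $R_v$ in \Cref{linear:range} is formed in time $O(1)$. This uses that $\weight_t(v)$ can be read off the virtual item in $O(1)$, and that $\min I_j$ is computed once per iteration. The ORS query then costs $O(\log^3 n)$, and the update in \Cref{linear:update} costs $O(1)$ because $p(S\cup\{v\})=p(S)+p(v)$ with $|S|\le1$. So one iteration of the main loop takes $n\cdot(\log n/\eps)^{O(1)}$, and all $O(1/\eps)$ iterations together take $n\cdot(\log n/\eps)^{O(1)}$.

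\textbf{Singleton loop and final output.} The loop in \Cref{linear:singleton_loop} issues $n$ queries at $O(\log^3 n)$ each. The last step is expanding $\set(\best)$. This takes time $O(n)$ by the virtual-item interface, since the represented set has size at most $n$.

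\textbf{Main obstacle.} The only point needing care is to avoid materializing $\set(v)$ inside the inner loop, which would cost $\Theta(n)$ per candidate and make the total quadratic. Using the $O(1)$ attribute access of virtual items (\Cref{sec:dalgo_virtual_items}) throughout, and expanding only at the very end, avoids this. Summing all the parts gives $n\cdot(\log n/\eps)^{O(1)}$.
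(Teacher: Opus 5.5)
Your proposal is correct and follows the paper's proof essentially step for step. Like the paper, you charge the ORS build and query costs via \Cref{lem:kvors}, and bound the $O(1/\eps)$ main-loop iterations by the call to \Cref{lem:ddim_dpv} plus $|\cL_j|$ constant-time range computations, queries, and updates. You then add the $n$ singleton queries and the final $O(n)$ expansion of $\set(\best)$. Your remark that $\min I_j$ is computed once per iteration, which keeps the range computation $O(1)$, is a small detail the paper leaves implicit.
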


\begin{proof}
	The initialization of the ORS data structure in \Cref{linear:ors} inserts $n+1$ objects, corresponding to the empty set and the $n$ singleton sets. Each set is stored as a point of dimension three. Therefore, by \Cref{lem:kvors}, constructing the data structure takes time $O(n\log^2 n)$, and each query takes time $O(\log^3 n)$.

	The number of iterations of the main loop in \Cref{linear:loop} is $\lceil \log_{1+\eps/10}3 \rceil +1=O(1/\eps)$. In each iteration, the set $I_j$ can be constructed in time $O(n)$. By \Cref{lem:ddim_dpv}, generating $\cL_j$ with error parameters $\frac \eps{10}$ and $\eta = \frac 1{3n}$ takes time $\abs{I_j}\cdot(\log(\abs{I_j}/\eta)/\eps)^{O(1)}\leq n\cdot(\log (n)/\eps)^{O(1)}$. In particular, the size of $\cL_j$ is bounded by the same expression. Thus, over all iterations of the main loop, generating the lists and performing one ORS query for every $v\in\cL_j$ takes time $n\cdot(\log(n)/\eps)^{O(1)}$.
	The range computation in each iteration of \Cref{linear:inner_loop} takes constant time, since the weights and profit of every virtual item are stored explicitly. Moreover, every set $S$ returned by the ORS has size at most one. The last coordinate of $R_v$, together with $\set(v)\subseteq I_j$, ensures that $S\cap\set(v)=\emptyset$. Hence, $p(S\cup\{v\})=p(S)+p(v)$ can be computed in time $O(1)$, and the update also takes constant time.

	Finally, the loop in \Cref{linear:singleton_loop} performs $n$  queries to the ORS. Computing each range takes constant time. Every returned set $S$ has size at most one and, by \Cref{lem:embed_to_range}, is disjoint from $\{i\}$. Therefore, $p(S\cup\{i\})=p(S)+p(i)$ can be computed in time $O(1)$, and processing the update also takes constant time. Thus, this loop takes time $O(n\log^3 n)$. 
	
	The time needed to reconstruct the returned set in \Cref{linear:return} is $O(n)$.
	
	 Combining all parts, the total running time of \Cref{alg:linear} is $n\cdot(\log(n)/\eps)^{O(1)}$.
\end{proof}

 \Cref{thm:linear} follows immediately from  \Cref{lem:linear-approximation,lem:linear-runtime}.

\paragraph{AI Declaration}
AI tools were not used to develop the scientific content of this paper. They were used only for text editing, proofreading for typography and mathematical correctness, and creating \Cref{fig:exponentplots}.

\bibliographystyle{plain}
\bibliography{refs}

\appendix 

\section{Proof of \Cref{cor:bicriteriaoptimal}} 
\label{sec:proof_cor}

\begin{proof}

Let $\delta' := \min\{\delta, \eps^2\}$, $\eps' := \eps + \delta'$, and $\rho := \delta'/(4\eps^2)$. If $\eps' \ge 1$ then the empty set achieves the trivial approximation ratio $1-\eps'$, otherwise \Cref{thm:dalgo} computes a $(1-\eps')$-approximation of 2-Knapsack in time $\tOh(n^{\lceil 1/(2\eps') - 1/2 + \rho \rceil} + n^2)$. The fact that $1/(1+x) \le 1 - x/2$ holds for any $x \in [0,1]$ yields
	\[ \frac 1{2\eps'} = \frac 1{2(\eps + \delta')} = \frac 1 {2\eps} \cdot \frac 1{1+\delta'/\eps} \le \frac 1{2\eps} \cdot \Big(1 - \frac {\delta'}{2\eps} \Big) = \frac 1{2\eps} - \frac {\delta'}{4\eps^2}. \]
	By definition of $\rho = \delta'/(4\eps^2)$, this allows to bound the exponent by
	\[ \frac 1{2\eps'} - \frac 12 + \rho \le \frac 1{2\eps} - \frac 12. \]
	Hence, since $1-\eps' \ge 1 - \eps - \delta$, \Cref{thm:dalgo} computes a $(1-\eps-\delta)$-approximation in time $\tOh(n^{\lceil 1/(2\eps) - 1/2 \rceil} + n^2)$.
	If $\eps < \frac 13$, then we have $1/(2\eps) - 1/2 > 1$, and thus $\lceil 1/(2\eps) - 1/2 \rceil \ge 2$, so \Cref{thm:dalgo} computes a $(1-\eps-\delta)$-approximation in time $\tOh(n^{\lceil 1/(2\eps) - 1/2 \rceil})$.
	
	If $\eps \ge \frac 13$, then $1-\eps-\delta \le \frac 23-\delta$, so \Cref{thm:linear} computes a sufficient approximation in time $\tOh(n)$. Since $\eps < 1$, it holds that $1/(2\eps) - 1/2 > 0$, and thus $\lceil 1/(2\eps) - 1/2 \rceil \ge 1$. Thus, the running time $\tOh(n)$ can be bounded by $\tOh(n^{\lceil 1/(2\eps) - 1/2 \rceil})$. Taken together, these two parts show that a $(1-\eps-\delta)$-approximation can be computed in time $\tOh(n^{\lceil 1/(2\eps) - 1/2 \rceil})$.
	
	The lower bound follows directly from \Cref{thm:2lower}.
\end{proof}

\end{document}